\newcommand{\sat}{\mathit{sat}}
\newcommand{\phragmen}{Phragm\'{e}n}
\newcommand{\calR}{\mathcal{R}}
\newcommand{\elec}{\mathcal{E}}
\newcommand{\NP}{{\normalfont \textsf{NP}}}
\newcommand{\coNP}{{\normalfont \textsf{coNP}}}
\newcommand{\owascore}{\mathit{OWA}}
\newcommand{\thielescore}{\mathit{Thiele}}
\theoremstyle{plain}
\newtheorem{theorem}{Theorem}
\newtheorem{proposition}[theorem]{Proposition}
\newtheorem{claim}{Claim}
\newtheorem{definition}{Definition}
\newtheorem{example}{Example}
\newcommand{\problemtitle}[1]{\gdef\@problemtitle{#1}}
\newcommand{\probleminput}[1]{\gdef\@probleminput{#1}}
\newcommand{\problemquestion}[1]{\gdef\@problemquestion{#1}}
	\par\addvspace{.4\baselineskip}
	\par\addvspace{.4\baselineskip}
\newenvironment{claimproof}{\par\noindent\textit{\underline{Proof}:}}{\leavevmode\unskip\penalty9999 \hbox{}\nobreak\hfill\quad\hbox{$\diamond$}}
\title{Free-Riding in Multi-Issue Decisions}
\author[1]{Martin Lackner}
\author[2]{Jan Maly}
\author[3]{Oliviero Nardi\thanks{Corresponding Author: oliviero.nardi@tuwien.ac.at}}
\affil[1]{DPKM, WU Wien, Vienna, Austria. \texttt{lackner@dbai.tuwien.ac.at}}
\affil[2]{DPKM, WU Wien and DBAI, TU Wien, Vienna, Austria. \texttt{jan.maly@wu.ac.at}}
\affil[3]{DBAI, TU Wien, Vienna, Austria. \texttt{oliviero.nardi@tuwien.ac.at}}
\date{}
\begin{document}
	
	\maketitle
	
	\begin{abstract}
		Voting in multi-issue domains allows
		for compromise outcomes that satisfy all voters to some extent.
		Such fairness considerations, however, open the possibility
		of a special form of manipulation: free-riding.
		By untruthfully opposing a popular opinion in one issue, 
		voters can receive increased consideration in other issues.
		We study under which conditions this is possible and show that even weak fairness considerations enable free-riding. 
		Additionally, we study free-riding from a computational and experimental point of view. 
		Our results show that free-riding in multi-issue domains is often possible,
		but comes at a non-negligible individual risk for voters. Thus, the allure of free-riding
		is smaller than one could intuitively assume.
	\end{abstract}
	
	\section{Introduction}
	
	In our increasingly digital and interconnected world, many new collective decision problems
	arise, with applications ranging from social media to the fine-tuning of
	AI systems through human feedback \cite{BakkerCSTCBMGAB22,AlamdariEP24}.  
	In contrast to traditional collective decisions like political elections,
	these new domains often require much more complex and interlinked decisions, e.g., 
	by requiring the same voter (evaluator) to make many different decisions about the desired behavior of an
	AI system. In practice, such decisions are often made using simple majoritarian aggregation 
	rules, which leads to underrepresentation of minority opinions and bias.\footnote{This issue is discussed in the context of reinforcement learning from human feedback, e.g., in a recent survey \cite[in Section~3.2.1]{casper2023open}, in the context of participatory budgeting~\cite{Fairstein2023Participatory}, and the aggregation of moral judgments~\cite{ChandakGP24}.}
	Recent research in the area of Computational Social Choice has shown that we can 
	often leverage the fact that several different issues need to be decided to achieve a
	fairer comprise outcome \cite{Lackner2020,abcsurvey,LacknerM23,ChandakGP24,Peters24}.
	The combinatorial richness of a sequence of decisions opens the opportunity for incorporating a more diverse set of opinions, and to respect a broader range of preferences.

However, by striving for fairness across multiple issues, we open the door to a specific, simple form of manipulation: \emph{free-riding}.
We define free-riding as untruthfully opposing a necessarily winning
candidate. That is, if there is a very popular (maybe unanimous) alternative
for a certain issue, it typically does not change the outcome
if one voter does not approve of it. 
Under the assumption
that multi-issue voting rules try to establish fairness based on satisfaction,
it will consequently give this voter additional consideration in other issues.
Consequently, presented with a popular alternative that is certain to win, a voter may be tempted to misrepresent their preferences and by that artificially lowers their (calculated) satisfaction.
As we show in our paper, this form of manipulation is possible almost universally in multi-issue voting.

Thus, it may appear that free-riding is a form of simple and risk-free manipulation.
The main contribution of our paper is to challenge this intuition. While free-riding is 
indeed often a successful form of manipulation, it is far from trivially beneficial for free-riding individuals.
For our analysis, we mostly focus on the setting where issues are decided sequentially, because free-riding occurs very naturally in this setting. However, 
we also consider the case where issues are decided simultaneously, i.e., where voters 
present their preferences for all issues at the same time, and all issues are decided in one shot.
Within both scenarios, we consider voting rules based on order-weighted averages, short: OWA \cite{yager1988ordered,amanatidis2015multiple}, and
on Thiele score, inspired by multiwinner voting \cite{Thie95a,abcsurvey}.
We obtain the following results:

\begin{itemize}
\item First, we show via axiomatic analysis that essentially every reasonable voting rule is susceptible to free-riding,
both in the sequential and the non-sequential case.
The only notable exception is the utilitarian rule, which maximizes the sum of utilities and which is thus a fully majoritarian rule.
\item For sequential OWA and Thiele rules, we observe an interesting phenomenon. Here,
it may be that free-riding in an issue leads to a \emph{lower} satisfaction in subsequent
issues.
Thus, free-riding for these voting rules is not risk-free.
\item Moreover, we show that it is a computationally hard task to determine
 whether free-riding is beneficial in the long run.
 This is in contrast to simply determining whether free-riding is possible.
 Note that the computation of long-term consequences requires full preference information about all (subsequent) issues; the decision to free-ride requires only information about a single issue.
Hence, free-riding is either inherently risky (in the case of incomplete information) or determining its eventual consequences is at least computationally difficult (in the case of complete information).
\item Voters may still decide to free-ride if the risk is small enough.
To study this question, we complement our theoretical analysis with numerical simulations to quantify this risk.
In case that a voter can free-ride only in one issue (e.g., because it is difficult to recognize popular alternatives), our simulations show that the risk of free-riding is indeed significant.
Nonetheless, positive outcomes of free-riding are more likely than negative outcomes; the exact risk depends largely on the voting rule in use.
In case that a voter can repeatedly free-ride (i.e., using every free-riding opportunity), the risk of free-riding becomes negligible. 
This is because positive outcomes of free-riding are more likely, and hence repeated free-riding makes net-negative outcomes very unlikely.
\item Finally, we consider the non-sequential setting. We study global optimization rules, which decide all issues at once. We show that, even when the winner of an issue is already known, it remains computationally hard to determine whether free-riding is possible (without changing the winner).
Thus, in this setting, even single-issue free-riding is computationally difficult
and requires full information. 
Hence, we conclude that for global optimization rules, the problem of free-riding is essentially the same as the more general problem of strategic voting, since both require full information and high computational power.
\end{itemize}

In general, we conclude that free-riding in multi-issue voting is not as simple and risk-free as one could intuitively assume. Our results paint a nuanced picture of where free-riding can easily occur and where it is difficult to free-ride.

\subsection{Structure}
Our paper is structured as follows. In Section~\ref{sec:model}, we introduce our model for multi-issue elections and free-riding. This is followed by our theoretical contribution related
to the possibility and risk of free-riding (Section~\ref{sec:strategyproof}) and regarding the computational complexity of free-riding (Section~\ref{sec:complexity}). 
We present numerical simulations in Section~\ref{sec:experiments}.
While the majority of this paper is based on sequential voting rules (deciding one issue after the other), we consider the viewpoint of global optimization rules in Section~\ref{sec:global}.
Finally, in Section~\ref{sec:discussion}, we summarize our results and discuss directions for future work.
We have moved especially long or technical proofs to Appendix~\ref{app:complexity} (computational complexity) and~\ref{app:global} (global optimization rules).

\medskip
This work is an extended version of a previous conference publication \cite{aamas-freeriding}.

\subsection{Related Work}

Our work falls in the broad domain of \emph{voting in combinatorial domains}~\cite{Lang2016VotinginCombinatorial}.
In contrast to many works in this field \cite[e.g.,][]{brams1998paradox,ahn2012combinatorial,conitzer2012paradoxes,brams1997voting,lang2009sequential}), we assume that voters' preferences
are separable (i.e., independent) between issues.
Separability is a very natural assumption in our sequential setting, as voters may be only asked to state their preferences after preceding issues have already been decided.

Our work is most closely related to papers on multiple referenda.
In this setting, \citet{amanatidis2015multiple} study the computational complexity of
OWA voting rules, including questions of strategic voting.
In a similar model, \citet{barrot2017manipulation} consider questions
of manipulability: how does the OWA vector impact the susceptibility
to manipulation?
In contrast to our paper, these two papers do not consider free-riding.
We discuss more technical connections between these papers and ours later in the text.

Another related formalism is \emph{perpetual voting}~\cite{Lackner2020}, which essentially
corresponds to voting on multi-issue decisions in sequential order. 
In this setting, issues are chronologically ordered, i.e., decided one after the other.
The work of \citet{Lackner2020} and its follow-up by Lackner and Maly~\cite{lm-pv,LacknerM23} do not consider strategic issues. \citet{Kozachinskiy2025optimal} further study this setting, providing bounds on the dissatisfaction of voters.
Moreover, \citet{BulteauHPRT21} move to a non-sequential (offline) model of perpetual voting and therein study proportional representation; we briefly consider this model in Section~\ref{sec:global}. In the same model, \citet{Elkind2024TemporalElections} study proportionality and strategyproofness for the utilitarian and egalitarian rules. \citet{ChandakGP24} and \citet{Elkind2025Verifying} study proportionality in both the sequential and offline settings. Finally, note that perpetual voting and its offline variant have also been referred to as \emph{temporal voting}; see the survey by \citet{Elkind2024TemporalSurvey}.

A third related formalism is that of \emph{public decision making}, which has been studied both in the offline~\cite{conitzer2017fair} and sequential~\cite{freeman2017fair} formulations. As in our model, public decision making considers $k$~issues and for each one alternative has to be chosen.
This model is more general than ours in that it allows arbitrary additive utilities (whereas we consider only binary utilities, i.e., approval ballots).
Our works differ in that \citet{conitzer2017fair} and \citet{freeman2017fair} focus on fairness properties, whereas our focus is on strategic aspects.
Fairness considerations in public decision making have further been explored by \citet{skowron2022proportional} and \citet{Kahana2024Leximin}.

Our model is also related to \emph{multi-winner voting} \cite{abcsurvey,Faliszewski2017MultiwinnerVoting}.
The main difference is that instead of selecting $k$~candidates from the same set of candidates, we have individual candidates for each of the $k$~issues.
In our paper, we adapt the class of Thiele rules from the multi-winner setting to ours. 
This class has been studied extensively, both axiomatically \cite{Aziz2017Justifiedrepresentationin,Sanchez-Fernandez2017Proportional,jet-consistentabc,sanchez2019monotonicity} and computationally \cite{azi-gas-gud-mac-mat-wal:c:multiwinner-approval,skowron2016finding,god-bat-sko-fal:c:2d,brill2022approval}.
The concept of free-riding has also been considered for multi-winner elections~\cite{hylland1992proportional,schulze2004free}.
In particular, in the approval setting, authors have studied this
by looking at \emph{subset-manipulation}, i.e., manipulating by submitting only
a subset of one's truly approved candidates~\cite{pet:prop-sp,botan2021manipulability}.
We note that, while this notion is related to ours in spirit,
it is technically distinct: there is no requirement for the untruthfully-disapproved candidates to be in the truthful outcome. Subset-manipulation has also been studied in the setting of \emph{fair mixing}~\cite{brandl2021distribution}, a generalization of multi-winner voting where the output is a probability distribution over candidates.

In multi-winner voting, there is also substantial literature on the relationship between fairness (often proportionality) and strategyproofness \cite[e.g.,][]{pet:prop-sp,ijcai/LacknerSkowron-multiwinner-strategyproofness,lee2018representing,KVVBE20:strategyproofness}. \citet{delemazure2022strategyproofness} study strategyproofness  in the related model of \emph{party-approval multi-winner voting}, which is a special case of our model (where alternatives and preferences are constant across issues).


To conclude, we note that free-riding is a very general phenomenon and has been widely studied in the economic literature on public goods \cite{groves1977optimal,samuelson1954pure} and philosophy \cite{tuck2008free}. It has also been considered in more technical domains, such as free-riding in memory sharing \cite{friedman2019fair} and in peer-to-peer networks \cite{adar2000gnutella}.

\section{The Model}\label{sec:model}

As is customary, we write $[k]$ to denote $\{1,\dots, k\}$.

We study a form of multi-issue decision making, where for each issue there are two or more possible options (candidates) available.
We mostly focus on the sequential case where issues are decided one after the other.
Furthermore, we assume that for each issue each voter submits an approval ballot, i.e., a subset of candidates that she likes.
Formally,
$k$ denotes the number of issues and $C_1,\dots,C_k$ the respective sets of candidates.
Let $N=[n]$ denote the set of voters.
We write $A_i(v)\subseteq C_i$ for the approval ballot of voter~$v$ concerning issue~$i$.
In combination, we call such a triple $\elec=\langle (C_i)_{i=1}^k,\, N,\, (A_i)_{i=1}^k\rangle$ an \emph{election}.
If $k$ is clear from the context, we write $\bar C$ for $(C_i)_{i=1}^k$ and 
$\bar A$ for $(A_i)_{i=1}^k$.

A \emph{partial outcome} of an election is an $\ell$-tuple $\bar w = (w_1,\dots,w_\ell)$ with $w_i\in C_i$ and $\ell\in[k]$.
If $\ell=k$, then we refer to $\bar w$ simply as an \emph{outcome}.
Given an election~$\elec$ and a (possibly partial) outcome $\bar w$ of length $\ell$,
the \emph{satisfaction} of voter $v\in N$ with $\bar w$ is $\sat_\elec(v, \bar w)=|\{i\in [\ell]: w_i\in A_i(v)\}|$.
In other words, the satisfaction of a voter is the number of issues that were decided in this voter's
favour.\footnote{Note that different notions of satisfaction are possible; for instance, we could assume that voters
have fine-grained preferences over issues and candidates. However, our simple model is a natural starting
point, and we leave the investigation of different notions of satisfaction for future work.}
Furthermore, we write $s_\elec(\bar w)=(s_1,\dots,s_n)$ to denote the $n$-tuple of satisfaction scores $\left(\sat_\elec(v, \bar w)\right)_{v\in N}$ sorted in increasing order, i.e., $s_1\leq s_2\leq\cdots\leq s_n$.
If the election~$\elec$ is clear from the context, we omit it in the notation.

Given an election $\elec$ and a partial outcome for the first $\ell<k$ issues, a \emph{sequential voting rule} returns the winner of the $(\ell+1)$-th issue. A straightforward example is the \emph{utilitarian rule} which returns, in each issue, the candidate with highest support (assuming some fixed tiebreaking among candidates). Note that this rule completely ignores the outcome of previous issues. We now introduce the two main classes of voting rules that we study, both of which include the utilitarian rule as a special case. 

\subsection{OWA rules}

OWA voting rules for multi-issue domains were proposed by \citet{amanatidis2015multiple} and are based on \emph{ordered weighted averaging} operators \cite{yager1988ordered}. We study them mostly in their sequential formulation. 
An OWA voting rule is defined by an family of vectors $\{\alpha^n\}_{n=1}^\infty$, where
each $\alpha^n=(\alpha_1,\dots,\alpha_n)$ has length $n$ and satisfies $\alpha_1>0$ and $\alpha_j \geq 0$ for $j\in[n]$.
Given an election with $n$ voters, the score of a (possibly partial) outcome $\bar w$ subject to $\alpha^n$ is 
\[\owascore_{\alpha^n}(\bar w)=\alpha^n\cdot s(\bar w),\]
where $\cdot$ is the scalar (dot) product.
In issue $i$, given a partial outcome $\bar w$ for the first $i-1$ issues, the OWA rule returns the candidate $c\in C_i$ with maximum $\owascore_{\alpha^n}((\bar w,\, c))$ (where $(\bar w,\,c)$ is the $\ell$-tuple formed by appending $c$ to $\bar w$).
If more than one candidate achieves the maximum score, we use a fixed tiebreaking order among outcomes.
We typically omit the superscript of $\alpha^n$, as $n$ is clear from the context.
Note that the utilitarian rule corresponds to $\alpha^n=(\nicefrac 1 n,\dots,\nicefrac 1 n)$. Another notable rule is the \emph{egalitarian rule}, which corresponds to $\alpha^n=(1,0,\dots,0)$. This rule selects, at each step, the candidate that maximizes the satisfaction of the least-satisfied voter. Note that there are typically many such candidates, and thus this rule is rather indecisive. One can thus consider the \emph{leximin} rule, which -- among outcomes that are optimal for the least satisfied voter -- maximizes the satisfaction of the second-least satisfied voter, then the third-least, etc.

\begin{example}\label{ex:running}
Consider an election with 100 voters and 4 issues with the same three candidates, $\{a,b,c\}$. There are 66 voters that approve $\{a\}$ in all issues, 33 voters that approve $\{b\}$ in all issues, and one voter approves always $\{c\}$.
The utilitarian rule selects the outcome $\bar w_1=(a,a,a,a)$ as $a$ is the most approved candidate in each issue, achieving a total satisfaction of $\sum_{v\in N}\sat(v, \bar w_1)=4\cdot 66$.
The leximin rule selects $\bar w_2=(a,b,c,a)$. Indeed, in the first round, all alternatives give a satisfaction of $0$ to (at least) $34$ voters; alternative $a$ is the alternative that maximizes the satisfaction of the $35$th least-satisfied voter. In the second round, all alternatives give satisfaction $0$ to the least-satisfied voter; alternative $b$ is the only alternative where the second least-satisfied voter has satisfaction of $1$, and is hence selected. In the third round, $c$ is the only alternative giving a satisfaction of at least $1$ to everyone. The last round selects $a$, similarly as the first round. On the other hand, the egalitarian rule is highly indecisive in this election, as many different outcomes are optimal; depending on the tiebreaking rule, it can return, e.g., $\bar w_1=(a,a,a,a)$, $\bar w_3=(b,b,b,b)$ or even the questionable $\bar w_4=(c,c,c,c)$.
\end{example}

More precisely, the leximin rule is based on the leximin ordering $\succ$. Given two outcomes $\bar w$ and $\bar w'$ with $s(\bar w)=(s_1,\dots,s_n)$ and $s(\bar w')=(s_1',\dots,s_n')$, $\bar w \succ \bar w'$ if there exists an index $j\in [n]$ such that $s_1=s_1'$, $\dots$, $s_{j-1}=s_{j-1}'$ and $s_{j} > s_{j}'$.
At issue $i$, the \emph{leximin rule} returns a candidate $c\in C_i$ such that $(\bar w,\,c)$ is maximal w.r.t. $\succ$. One can show that this rule corresponds to the OWA rule defined by the vector $\alpha^n=(1,\nicefrac{1}{kn},\nicefrac{1}{k^2n^2},\dots)$.

\newcommand{\leximinOWA}{The OWA rule defined by $\alpha=(1,\frac{1}{kn}, \frac{1}{k^2n^2}, \dots)$ is equivalent to the leximin rule.}
\begin{proposition}
\leximinOWA\label{prop:leximinOWA}
\end{proposition}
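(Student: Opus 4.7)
The plan is to show that for any two outcomes $\bar w, \bar w'$, we have $\bar w \succ \bar w'$ iff $\owascore_\alpha(\bar w) > \owascore_\alpha(\bar w')$, and $s(\bar w) = s(\bar w')$ iff the two OWA scores agree. Since both leximin and the OWA rule return outcomes maximal under their respective strict orders, the equivalence of orders transfers directly to an equivalence of optimal sets.

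First I would record the trivial but essential bound $0 \le s_j \le k$ for every $j$, which holds because the satisfaction of any voter is a count of approved issues out of $k$. With $\alpha_j = (kn)^{-(j-1)}$, this suggests viewing $\owascore_\alpha(\bar w)$ (after multiplying by $(kn)^{n-1}$) as a base-$(kn)$ expansion whose ``digits'' are $s_1, s_2, \ldots, s_n$, with $s_1$ the most significant. Since the digits lie in $\{0,1,\ldots,k\}$ and the base $kn$ strictly exceeds $k$, comparing scores reduces to lexicographic comparison of the digit sequences, which is exactly the leximin order.

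The main technical step formalises this intuition. Suppose $\bar w \succ \bar w'$, and let $j$ be the least index where the sorted satisfaction vectors differ; by definition of $\succ$ we then have $s_j - s_j' \ge 1$. Writing
\[
\owascore_\alpha(\bar w) - \owascore_\alpha(\bar w') = \alpha_j(s_j - s_j') + \sum_{i=j+1}^{n} \alpha_i(s_i - s_i'),
\]
I would bound the leading term below by $\alpha_j$ and the absolute value of the tail above by $k \sum_{i=j+1}^n \alpha_i$ using $|s_i - s_i'| \le k$. A finite geometric series calculation then gives $k \sum_{i>j} \alpha_i < \alpha_j$, so the overall difference is strictly positive. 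For the converse direction, note that either $s(\bar w) = s(\bar w')$ (in which case the scores trivially coincide), or $\succ$ strictly separates the two, and the previous implication applied in whichever direction leximin prefers matches the direction of OWA preference.

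I expect the only mildly delicate step to be verifying the strict inequality in the geometric bound at the small-parameter boundary (notably $n=2$, $k=1$, where the infinite-sum estimate $k/(kn-1)$ equals $1$), where one must use the finite, rather than the infinite, geometric sum; the case $n=1$ is trivial. No other obstacles are anticipated, since the base $kn$ is calibrated precisely so that the worst-case swing from lower-priority digits (at most $k$ per digit) is strictly dominated by a single unit at the leading differing position.
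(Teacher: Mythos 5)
Your proposal is correct and follows essentially the same argument as the paper: decompose the score difference at the first index where the sorted satisfaction vectors differ, bound the leading term below by $\alpha_j$ and the tail by $k$ times the remaining (finite) geometric sum, and conclude strict positivity, with the converse following by symmetry/totality of the leximin order. The subtlety you flag about needing the finite rather than infinite geometric bound is exactly why the paper bounds the tail by $k(n-1)/(kn)^{j}$, so the two arguments coincide in substance.
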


\begin{proof}
Assume that $\bar w \succ \bar w'$, i.e., for $s(\bar w)=(s_1,\dots,s_n)$ and $s(\bar w')=(s_1',\dots,s_n')$, there exists an index $j\in [n]$ such that $s_1=s_1'$, $\dots$, $s_{j-1}=s_{j-1}'$ and $s_{j} > s_{j}'$. Then 
\begin{multline*}
\owascore_\alpha(\bar w)- \owascore_\alpha(\bar w') =
\alpha\cdot s(\bar w) - \alpha\cdot s(\bar w') = \\
(\underbrace{s_{j} - s'_{j}}_{\geq 1}) \cdot \frac{1}{(kn)^{j-1}} + \sum_{\ell=j+1}^n (\underbrace{s_{\ell} - s'_{\ell}}_{\geq -k}) \cdot \frac{1}{(kn)^{\ell-1}} \geq\frac{1}{(kn)^{j-1}} - k\sum_{\ell=j+1}^n  \frac{1}{(kn)^{\ell-1}} \geq \\
\frac{1}{(kn)^{j-1}} - k (n-1) \frac{1}{(kn)^{j}} >
0.
\end{multline*}
This argument is symmetric in $\bar w$ and $\bar w'$, so we have shown that $\bar w \succ \bar w'$ iff $\owascore_\alpha(\bar w)- \owascore_\alpha(\bar w')>0$. Thus, a maximal element with respect to $\succ$ achieves a maximum $\owascore_\alpha$-score and vice versa.
%
\end{proof}

\subsection{Thiele rules}

The second class we study is based on Thiele methods (introduced by \citet{Thie95a}, see the book by \citet{abcsurvey}).
While Thiele methods are a class of multi-winner voting rules, they can be adapted to our setting straightforwardly.
A voting rule in the Thiele class is defined by a function~$f\colon \mathbb N \to \mathbb{R}_{\geq 0}$ satisfying $f(1)>0$ and $f(\ell)\geq f(\ell+1)$ for all $\ell\in \mathbb N$.
The $f$-Thiele rule assigns a score of 
\[\thielescore_f(\bar w) = \sum_{v\in N} \sum_{\ell=1}^{\sat(v, \bar w)} f(\ell)\]
to a (partial) outcome $\bar w$.
In each issue $i$, given a partial outcome $\bar w$, the rule returns the candidate $c\in C_i$ with maximum $\thielescore_f((\bar w,\, c))$.
Intuitively, these are weighted approval rules for which the
weight assigned to each voter only depends on her satisfaction.
Note that the utilitarian rule corresponds to $f_{\mathit{util}}(\ell)=1$.
The egalitarian rule does not appear in this class, whereas, for fixed $n$ and $k$, leximin is equivalent to the Thiele method with
$f_{\mathit{lex}}(\ell)=\nicefrac{1}{(kn)^{\ell-1}}$ (among other quickly decreasing functions).
The most important Thiele rule is $f(\ell)=\nicefrac 1 \ell$, which is called \emph{Proportional Approval Voting} in the multi-winner setting. We also refer to this Thiele rule as PAV.

\begin{example}
Continuing Example~\ref{ex:running}, we see that for PAV in the first round $a$ has maximum score of $66$. In the second round, both partial outcomes $(a,\,a)$ and $(a,\,b)$ achieve the same score of $66+33=99$. The final outcome is either $\bar w_5=(a,b,a,a)$ or $\bar w_6=(a,a,b,a)$, depending on tiebreaking.
Note that PAV is more majoritarian than leximin as it essentially ignores the single voter approving $\{c\}$.
\end{example}

\subsection{Global-Optimization-Based Rules}

In Section~\ref{sec:global}, we will consider the setting where voters report their preferences over the issues at the same time, and all issues are decided simultaneously. In this setting, we look at voting rules based on a global optimization objective. That is, they select some outcome maximizing some score function. Given a sequential rule $\calR$, we refer to its optimization-based counterpart (computed as explained below) as opt-$\calR$.

The optimization-based rule corresponding to the OWA rule $\alpha^n$ simply selects the complete outcome maximizing $\owascore_{\alpha^n}(\bar w)$, whereas the rule corresponding to the Thiele rule $f$ selects the outcome maximizing $\thielescore_f(\bar w)$. Again, we assume a fixed tiebreaking order over the outcomes. For example, opt-egalitarian selects an outcome $\bar w$ maximizing $\min_{v\in N}\sat(v, \bar w)$. This rule is \NP-hard to compute \cite{amanatidis2015multiple}. In contrast, the opt-utilitarian rule is identical to its sequential formulation, and thus is poly-time computable. In this setting, we will often consider opt-leximin, which again is part of the opt-OWA family with $\alpha^n=(1,\nicefrac{1}{kn}, \nicefrac{1}{k^2n^2}, \dots)$ (see Proposition~\ref{prop:leximinOWA}). This rule selects any outcome $\bar w$ that is maximal w.r.t. the leximin ordering $\succ$, as previously defined.

We note that in this paper we focus on the sequential case, unless explicitly noted.

\subsection{Free-Riding}

In this paper, we study a specific form of strategic manipulation
called \emph{free-riding}. Intuitively, this means that a voter misrepresents
her preferences on an issue where her favorite candidate wins also without her 
support.
If the used voting rule takes the satisfaction of voters into account (as most OWA and Thiele methods do), such a manipulation can increase the voter's influence on other issues. 

\begin{example}
Consider an election with three voters and two issues.
The first issue is uncontroversial: all voters approve candidate~$a$.
The second issue is highly controversial: all voters approve different candidates ($A_2(1)=\{x\}$, $A_2(2)=\{y\}$, $A_2(3)=\{z\}$). If PAV (with alphanumeric tiebreaking) is used to determine the outcome,
it could select, e.g., the outcome $(a,\,x)$.
This leaves voters~$2$ and~$3$ less satisfied than voter~$1$. 
Both of them could free-ride to improve their satisfaction. Consider voter~$2$.
If voter~$2$ changes her ballot on the first issue to another candidate, the outcome changes to
$(a,\,y)$, as, according to the ballots, $a$ yields a score of $2$ in the first round, and $(a,\,y)$ of $3$
in the second round, which is optimal.
As voter~$2$'s true preferences are positive towards~$a$, this manipulation was successful.
\end{example}

In the following,
given an election $\elec$ and a rule $\calR$ such that $\calR(\elec)=(w_1,\dots,w_k)$,
we indicate $w_i$ as $\calR(\elec)_i$.

\begin{definition}\label{def:freeriding}
Consider an election $\elec = \langle (C_i)_{i=1}^k, N, (A_i)_{i=1}^k \rangle$, a voter $v \in N$
and a voting rule $\calR$.
Let $\calR(\elec)=(w_1,\dots,w_k)$.
We say that voter $v$ can free-ride in election $\elec$ on issues $I\subseteq[k]$ if there
exists another election $\elec^*=\langle (C_i)_{i=1}^k, N, (A^*_i)_{i=1}^k\rangle $ that
only differs from $\elec$ in the approvals of $v$ for issues in $I$ such that,
for all $i\in I$,
$w_i\in A_i(v)$, $w_i\not\in A_i^*(v)$ and $\calR(\elec^*)_i=w_i$.
In this case, we also say that $v$ can free-ride in $\elec$ via $\elec^*$.
\end{definition}

Usually, we say a voter can manipulate if she can achieve a higher satisfaction
by misrepresenting her preferences. In contrast, Definition~\ref{def:freeriding} makes no 
assumptions about the satisfaction of the free-riding voter. Instead, we only require
that the manipulator can misrepresent her preference in an issue without changing the 
outcome of the issue. This might lead to the same, a higher or lower satisfaction
for the manipulator. This distinction will be crucial when talking about the risk of
free-riding.

We say that \emph{a voting rule $\calR$ can be manipulated by free-riding} if there exists an election $\elec$,
a voter $v$ and an election $\elec^*$ such that $v$ can perform free-riding in $\elec$ via $\elec^*$
and $\sat_{\elec}(v, \calR(\elec)) < \sat_{\elec}(v, \calR(\elec^*))$.

Finally, note that sometimes we will consider a more general notion of free-riding, called \emph{generalized} free-riding, where we lift the constraint that the outcome on the issues where free-riding occurs remains exactly the same. 
The two notions are rather similar, and most of our results hold for both models, but we are able to get stronger results for the generalized case. This only plays a role in Section~\ref{sec:complexity}, where we study computational complexity. We defer its formal definition to that section.

\section{Possibility and Risk of Free-Riding}\label{sec:strategyproof}

In this section, we ask two main questions:
The first question is for which voting rules free-riding is actually possible.
We approach this
question from two angles: First, we identify a small set of mild axioms and show
that any rule satisfying these axioms is susceptible to free-riding.
Secondly, we focus on the families of sequential rules we introduced in Section~\ref{sec:model} and
show that free-riding is possible for essentially all rules in these families.
These two approaches complement each other, as they do not cover the same set of voting rules.

The second question is under which conditions free-riding may lead to a \emph{decrease in satisfaction} of the free-riding voter.
We analyze for which rules free-riding entails such a risk.

\subsection{Possibility of Free-Riding}\label{sec:axiomatic}

We first identify a multi-issue voting rule where free-riding is not possible: the utilitarian rule.
Observe that with the utilitarian rule the outcomes for different issues do not influence each other.
Thus, artificially decreasing one's satisfaction brings no advantage with
this rule. Therefore, the utilitarian rule cannot be manipulated by free-riding.

\begin{proposition}\label{prop:util-cannot-manipulated}
The utilitarian rule cannot be manipulated by free-riding.
\end{proposition}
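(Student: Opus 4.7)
The plan is to exploit the fact, already observed in the paper, that the utilitarian rule decomposes issue-wise: for every issue $i$, the winner $w_i$ is (modulo tie-breaking) any candidate in $C_i$ maximizing the number of approvals in $A_i$. Hence the outcome on issue $i$ depends only on the ballots $\{A_i(u)\}_{u\in N}$, not on the voters' ballots for any other issue. I would state this factorization explicitly as the first step, since the rest of the argument is an immediate consequence.

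Next, fix an election $\elec$, a voter $v$, a set of issues $I\subseteq[k]$, and a manipulated election $\elec^*$ witnessing (generalized) free-riding. By Definition~\ref{def:freeriding}, $A_i^*(u)=A_i(u)$ for every $u\in N$ whenever either $u\neq v$ or $i\notin I$. In particular, for every issue $i\notin I$, the entire approval profile on issue $i$ is identical in $\elec$ and $\elec^*$, so by the factorization $\calR(\elec^*)_i=\calR(\elec)_i=w_i$. Hence the contribution of issues outside $I$ to $v$'s (true) satisfaction is unchanged.

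It remains to handle issues $i\in I$. For ordinary free-riding, Definition~\ref{def:freeriding} directly requires $\calR(\elec^*)_i=w_i\in A_i(v)$, so each such issue still contributes $1$ to $\sat_\elec(v,\calR(\elec^*))$, exactly as it did to $\sat_\elec(v,\calR(\elec))$. For generalized free-riding, the requirement weakens to $\calR(\elec^*)_i\in A_i(v)$, which still means these issues contribute $1$ to the true satisfaction, just as before. Combining with the previous paragraph, $\sat_\elec(v,\calR(\elec^*))=\sat_\elec(v,\calR(\elec))$, so the strict inequality in the definition of manipulability cannot hold.

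There is essentially no obstacle here; the only subtlety is being careful that the (generalized) free-riding definition is stated in terms of the \emph{true} approvals $A_i(v)$, not the manipulated ones, so that we really are comparing satisfactions under the same utility function. I would flag that point explicitly in the write-up to make the argument airtight.
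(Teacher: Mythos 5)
Your proposal is correct and follows exactly the reasoning the paper itself uses: the paper states Proposition~\ref{prop:util-cannot-manipulated} without a separate proof, justifying it in one line by the same observation that the utilitarian rule decides each issue independently of the ballots on other issues, so (generalized) free-riding cannot change the winners outside $I$ and, by Definition~\ref{def:freeriding}, the winners inside $I$ remain truthfully approved. Your explicit handling of the true-versus-manipulated ballots and of the weaker generalized condition $\calR(\elec^*)_i\in A_i(v)$ simply spells out what the paper leaves implicit, so the two arguments coincide.
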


Next, we show that for essentially all other reasonable rules, free-riding is possible. In order to formulate axioms that are as mild as possible, we will focus on a class of elections with a 
particularly simple structure. We claim that in such elections it is easy to argue about how
a rule should behave.
\begin{definition}[Simple elections]
For every $k\geq 1$, $\ell\in[k]$ and $n\geq 2$, an \emph{$\ell$-simple $(k,n)$-election} satisfies the following:
  \begin{itemize}
    \item there are $k$ issues and $n$ voters;
    \item in each issue there are two candidates: $C_1=\dots=C_k=\{a,b\}$;
    \item every voter~$v<n$ approves $\{a\}$ in every issue;
    \item voter~$n$ approves $\{b\}$ in $\ell$ issues, and $\{a\}$ in the remaining $k-\ell$.
  \end{itemize}
\end{definition}
An example $2$-simple $(6,4)$-election is
\begin{align*}
A_1 &= (\{a\},\,\{a\},\,\{a\},\,\{a\},\,\{a\},\,\{a\}),\\
A_2 &= (\{a\},\,\{a\},\,\{a\},\,\{a\},\,\{a\},\,\{a\}),\\
A_3 &= (\{a\},\,\{a\},\,\{a\},\,\{a\},\,\{a\},\,\{a\}),\\
A_4 &= (\{b\},\,\{a\},\,\{a\},\,\{a\},\,\{b\},\,\{a\}).
\end{align*}
Note that for every number of issues $k\in\mathbb{N}$ and of voters $n\geq3$, there is exactly one $k$-simple $(k,n)$-election (namely, the election where voter~$n$ always approves only $b$) and exactly $k$ different $1$-simple $(k,n)$-elections. 

Given this definition, we introduce the following axioms for multi-issue voting rules.\footnote{Observe that all the axioms we propose apply to both sequential and optimization-based rules.}

\begin{itemize}
  \item \textbf{Near-unanimity}: For $n\geq 3$, the complete outcome of every $1$-simple $(k,n)$-election must be $(a,\,\ldots,\,a)$.
  \item \textbf{Incentive for minorities}: For $n\geq3$, there must be a number of issues $k$ such that the complete outcome of the $k$-simple $(k,n)$-election is a permutation of $(a,\,\ldots,\,a,\,b)$.
\end{itemize}

The two axioms are logically independent; the utilitarian rule satisfies near-unanimity but not incentive for minorities,
and vice versa for the constant rule always selecting $(b,\,a,\,a,\,\ldots)$. Both rules are also immune to free-riding.

Note that a weaker formulation of incentive for minorities could require for $b$ to win \emph{at least} once instead of \emph{exactly} once. We have two arguments in favor of our stronger formulation. First, every sequential rule that satisfies the weaker variant of this axiom must also satisfy our stronger notion. Secondly, one can interpret the additional requirement of the stronger notion (where $b$ must win exactly once, instead of at least once) as a continuity condition: if given $k-1$ issues voter~$n$ does not deserve any representation (satisfaction~0), then they should not suddenly deserve to decide more than one issue when there is exactly one additional issue. Nevertheless, one could adopt this weaker notion instead and get results analogous to what we obtain in this section by slightly strengthening the other axioms. We do so in Appendix~\ref{app:possibility}.

First, we show that any rule that satisfies these two axioms is susceptible to free-riding.

\begin{theorem}
  Any rule that satisfies near-unanimity and incentive for minorities can be manipulated by free-riding.\label{thm:many-issues}
\end{theorem}

\begin{proof}
  Consider a rule that satisfies the conditions of the theorem. Consider any $n\geq3$ and let $k$ be the number of issues whose existence is prescribed by incentive for minorities. Focus on the $k$-simple $(k,n)$-election $\elec$ and let $i$ be the unique issue where $b$ is the outcome for $\elec$. Consider now the $1$-simple $(k,n)$-election $\elec^\prime$ where voter~$n$ approves of $b$ in issue $i$. By near-unanimity, the outcome must be $a$ in all issues. But then voter~$n$ can free-ride in $\elec^\prime$ via $\elec$ in issues $[k]\setminus\{i\}$, completing the proof.
\end{proof}

Essentially, this result shows that as soon as we want to give \emph{some} (but not all) power to minorities, free-riding becomes unavoidable.

Observe that the proof of Theorem~\ref{thm:many-issues} requires the manipulator to free-ride in multiple issues. If we additionally enforce the following axioms, then free-riding can occur even if we restrict the manipulator to only free-ride in one issue.

\begin{itemize}
  \item \textbf{Issue-wise unanimity}: For every $\ell$-simple $(k,n)$-election, if in some issue all voters approve of $a$, then $a$ must be selected in that issue.
  \item \textbf{Monotonicity}: If the complete outcome of the $k$-simple $(k,n)$-election contains $b$ exactly once, then for all $\ell\leq k$, the complete outcome of any $\ell$-simple $(k,n)$-election $b$ at most once.
\end{itemize}

Note that adding issue-wise unanimity is only a mild requirement, here, as its core idea is already embodied by near-unanimity (although the two are logically independent).

\begin{theorem}
  Any rule that satisfies near-unanimity, incentive for minorities, issue-wise unanimity, and monotonicity can be manipulated by free-riding in one issue.\label{thm:one-issue}
\end{theorem}

\begin{proof}
  Consider a rule that satisfies the conditions of the theorem. Consider any $n\geq3$ and let $k$ be the number of issues whose existence is prescribed by incentive for minorities. Let $\ell$ be the minimal $\ell$ for which there is an $\ell$-simple $(k,n)$-election where $b$ wins exactly once, and let $\elec$ be such an election. Observe that such an $\ell$ must exist (by incentive for minorities), that we must have $\ell>1$ (by near-unanimity), and that for all $(\ell-1)$-simple $(k,n)$-elections $b$ is not in the outcome (by monotonicity). Let $i$ be the issue where $b$ wins in $\elec$; note that in this issue voter~$n$ must approve of $b$ (by issue-wise unanimity). Next, let $j$ be some issue distinct from $i$ where voter~$n$ approves of $b$ (such a $j$ exists as $\ell>1$). Let $\elec^\prime$ be the $(\ell-1)$-simple $(k,n)$-election obtained by letting voter~$n$ approve of $a$ instead of $b$ in issue $j$. Observe that voter~$n$ can free-ride in $\elec^\prime$ via $\elec$ in issue $j$, completing the proof.
\end{proof}

Observe that the previous results do not fully cover the families of rules we have introduced. For example,
the egalitarian rule does not satisfy near-unanimity (the outcome might depend on the tiebreaking).
Clearly, one might be concerned by what failing such mild axioms means for a rule; nonetheless, we now
present a complementary result that shows that the whole set of families we have introduced can be
manipulated via free-riding (excluding the utilitarian rule, as per Proposition~\ref{prop:util-cannot-manipulated}).

\newcommand{\corrfreeriding}{Every sequential Thiele and sequential OWA rule 
except the utilitarian rule can be manipulated by free-riding.}
\begin{theorem}\label{thm:freeriding}
\corrfreeriding
\end{theorem}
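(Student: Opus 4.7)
The plan is to handle the four non-utilitarian cases (Thiele, sequential Thiele, OWA, sequential OWA) via explicit witness elections keyed to a drop point in the rule's parameters: an index $j$ with $f(j) > f(j+1)$ for Thiele rules, or $\alpha_j > \alpha_{j+1}$ for OWA rules (under the standard convention that $\alpha$ is non-increasing). Such a $j$ exists precisely when the rule is non-utilitarian, so Proposition~\ref{prop:util-cannot-manipulated} already covers the excluded case. In every construction the election will consist of $j$ issues on which all voters unanimously approve a common candidate, together with a single closely contested pivotal issue; the manipulator $v$ free-rides by withdrawing her declared approval of the unanimous winner on some of the unanimous issues, which remain valid free-ride targets because the other $n-1$ voters still unanimously support the intended winner.

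For Thiele rules I would take the pivotal issue to contest two candidates $x$ and $y$, with $v$ approving $y$ and the remaining voters split so that $x$ has exactly one more supporter than $y$. The truthful scores of $(\dots,x)$ and $(\dots,y)$ are then equal; we ensure the tie is broken in favour of $(\dots,x)$ by naming $x$ and $y$ so that the fixed tie-breaking order prefers that tuple. When $v$ free-rides on all $j$ unanimous issues her declared past satisfaction drops from $j$ to $0$, so her contribution to the Thiele score of $(\dots,y)$ increases by $f(1)$ while her contribution to $(\dots,x)$ gains nothing; the resulting swing $f(1) - f(j+1) \geq f(j) - f(j+1) > 0$ strictly flips the declared contest to $y$, raising $v$'s true satisfaction from $j$ to $j+1$. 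The sequential Thiele case uses the same election with the pivotal issue ordered last: the greedy rule at that step scores each candidate using exactly $v$'s declared cumulative satisfaction, so the identical swing applies.

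For OWA rules I would use the same scaffold with $j$ unanimous issues and a pivotal issue, now with exactly $n = 2j$ voters where $j-1$ other voters approve $y$ and $j$ other voters approve $x$, so that the truthful scores again tie. The analysis is now on the sorted satisfaction vector: a free-ride by $v$ on a single unanimous issue shifts the sorted vectors in the two candidate outcomes so that the declared score of $(\dots,y)$ exceeds that of $(\dots,x)$ by exactly $\alpha_1 - \alpha_{j+1}$, which is strictly positive by monotonicity of $\alpha$ together with the drop at $j$. Sequential OWA is treated identically after ordering the pivotal issue last.

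The main technical obstacle is the bookkeeping around ties. Because the drop $f(j)-f(j+1)$ or $\alpha_j-\alpha_{j+1}$ can be arbitrarily small relative to the neighbouring parameter values, no construction with a strictly positive truthful margin is guaranteed to flip under a single free-ride, which forces the truthful election to tie. Resolving this requires that the fixed tie-breaking order favour the outcome against $v$'s interest; this is legitimate because the construction is free to choose candidate labels, and for any fixed total order on outcomes one can always label the two pivotal candidates so that the desired tuple comes first. For OWA a further subtlety is that $v$'s post-free-ride rank in the sorted vector must land precisely at the drop in $\alpha$, which is achieved by calibrating the construction to use exactly $n = 2j$ voters split as described.
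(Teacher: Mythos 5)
Your scaffold (unanimous issues plus one knife-edge pivotal issue, tie-breaking set against the manipulator) is in the spirit of the paper's proof, but the execution has genuine gaps, the central one being that you never verify the defining requirement of free-riding: that the winners of the manipulated issues are unchanged in $\elec^*$ (Definition~\ref{def:freeriding} demands $\calR(\elec^*)_i=w_i$ for all $i\in I$). For the optimization-based rules you compare only the two outcomes that keep the unanimous winners and differ on the pivotal issue; you must also exclude outcomes that elect the manipulator's new candidate $d$ on a unanimous issue, and this can actually happen: take $f(1)=f(2)=1$, $f(3)=\epsilon$, $f(4)=\epsilon/2$, pick your drop index $j=3$ and four voters ($q=1$); in the manipulated election the outcome $(d,c,c,y)$ scores $8+\epsilon$ while your intended $(\bar c,y)$ scores $7+3.5\epsilon$, so the optimum moves $d$ into a unanimous slot and the attempted free-ride on $I=\{1,2,3\}$ is invalid. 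The same problem kills the sequential Thiele case as stated: with $f(1)=1$, $f(2)=0.01$, $f(3)=0.005$, your $j=2$ and four voters, free-riding on both unanimous issues gives the manipulator declared marginal $f(1)=1$ at the second unanimous issue against the other three voters' combined $3f(2)=0.03$, so $d$ wins that issue. Your claim that the unanimous issues ``remain valid free-ride targets because the other $n-1$ voters still unanimously support the intended winner'' ignores that support is weighted. The paper sidesteps all of this by having the manipulator free-ride on a \emph{single} issue, so her declared satisfaction stays within one of everyone else's and every unanimous issue is a $3f(\cdot)$ versus $f(\cdot)$ comparison at the same level; to rescue your construction you would need to take $j$ to be the \emph{first} drop (so $f(1)=\dots=f(j)$) and/or scale the electorate with $f(1)/f(j)$, and control ties against all outcomes, none of which you do.

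The OWA part has an additional structural flaw: you require ``exactly $n=2j$ voters'' where $j$ is the drop index of $\alpha^n$, but the rule fixes $\alpha^n$ once $n$ is fixed, so the number of voters and the location of the drop cannot be chosen independently (e.g., a rule whose vectors are constant for even $n$ and non-constant only for odd $n$ is non-utilitarian, yet admits no $n$ whose drop sits at $n/2$). Since your claimed margin $\alpha_1-\alpha_{j+1}$ depends on the manipulator landing exactly at the drop, the OWA and sequential OWA arguments do not go through as written; note also that the paper's OWA definition does not even assume nonincreasing vectors, which you add as a convention. The paper's own OWA construction avoids any calibration: it fixes a number $k$ of voters with $\alpha_1>\alpha_k$ for $\alpha^k$, uses a two-issue election with exactly $k$ voters, and compares the sorted vectors $(0,1,\dots,1,2)$ and $(1,\dots,1)$, which needs only the first-versus-last inequality.
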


\begin{proof}
First, let $\calR$ be a sequential $f$-Thiele Rule different from the utilitarian rule.
Then, there exists a $k$ such that $f(k-1)> f(k)$. 
Consider a $k+1$ issue election with four voters and two candidates $a$ and $b$ 
such that for the first $k$ issues all voters only approve candidate $a$.
Moreover, on issue $k+1$ voters $1$ and $2$ approve $b$ while voter $3$ and $4$
approve $a$. Assume further that $a$ is preferred to $b$ in the tiebreaking order.
Clearly, $a$ wins in the first $k$ issues. Hence in issue $k+1$ all voters have weight $f(k)$
which means both candidates have a score of $2f(k)$. By tiebreaking $a$ wins.
We claim that voter $1$ can manipulate by changing her vote in one of the first $k$ issues
to $\{b\}$. Let $i$ be the issue on which $1$ manipulates.  Then, in issue $i$,
candidate $a$ has a score of $3f(i-1)$ while $b$ has a score of $f(i-1)$.
Now, $f(i-1) > f(k)$ implies that $f(i-1) >0$. Therefore $3f(i-1) > f(i-1)$,
which means $a$ still wins in issue $i$. It is clear that $a$ also wins in the other
issues until $k+1$. In issue $k+1$, $a$ has a score of $2f(k)$ while $b$ has a score of
$f(k) + f(k-1)$. By assumption, this means that $b$ wins on issue $k+1$. Therefore,
voter $1$ did free-ride successfully.

Next, let $\calR$ be a sequential OWA-Rule that is not the utilitarian rule. 
Consider an election with $2$ issues and $k$ voters. In each issue there are $k$ candidates
$a_1, \dots a_k$. In the first issue, voters $1$ and $2$ approve $a_1$. Every other voter $i\in\{3,\dots, k\}$
approves $a_i$. In the second issue voter $1$ approves $a_1$, voter $2$ approves $a_2$ and
all other voters approve both $a_1$ and $a_2$.
We assume that candidates with a lower index are preferred by the tiebreaking,
which is applied lexicographically. Then there
exists a $k$ for which the vector $\alpha$ for $k$ voters satisfies $\alpha_{1} > \alpha_{k}$. 
In the first issue $a_1$ has to be selected, as no other candidate can have a higher OWA score.
Then, as before $(a_1,a_1)$ and $(a_1,a_2)$ lead to the highest possible
score on when looking at the second issue. By tiebreaking, $(a_1,a_1)$ wins.
Now, we claim that voter $2$ can free-ride by approving $a_2$ instead of $a_1$ in the first issue.
After the free-riding, all candidates are tied for the first issue, hence $a_1$ wins by tiebreaking.
However, then following the discussion above, $a_2$ needs to be selected in the second
issue. It follows that voter~$2$ did successfully free-ride.
\end{proof}

Hence, free-riding is essentially unavoidable if we want to guarantee fairer outcomes
using Thiele or OWA rules.

\subsection{The Risk of Free-Riding}

Intuitively, free-riding seems to offer a simple and risk-free way to 
manipulate. However, we observe that for most sequential voting rules, free-riding may lead to a decrease in satisfaction.
First, we can show that this holds for all sequential Thiele rules, except the utilitarian rule.

\begin{proposition}
Let $f:\mathbb{N} \to \mathbb{R}_{\geq 0}$ be a function for which there is
an $i \in \mathbb{N}$ such that $f(i) > f(i+1)$. Then, under the sequential $f$-Thiele rule,
free-riding can reduce the satisfaction of the free-riding voter.
\end{proposition}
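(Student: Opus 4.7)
The plan is to construct an explicit election tailored to $f$ in which free-riding strictly decreases the manipulator's true satisfaction. Let $i_0$ satisfy $f(i_0) > f(i_0+1)$ and call the manipulator $v^*$. I would begin with $i_0$ unanimous filler issues on a single candidate $a$; after these, every voter has true satisfaction $i_0$, and if $v^*$ free-rides on one of the filler issues her reported satisfaction drops to $i_0 - 1$. From this point onward the weight gap $f(i_0) - f(i_0+1) > 0$ is the lever for the manipulation.

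On top of the fillers I would add three contested issues. Issue $i_0+1$ is a \emph{tipping} issue with candidates $b, c$ and two balanced coalitions ($v^*$ in the $b$-coalition), calibrated so the honest weighted scores tie with tie-breaking selecting $c$, while $v^*$'s boosted weight makes $b$ win under free-riding. Issue $i_0+2$ is a \emph{counter-tipping} issue with candidates $d, e$ in which the opposing coalition, newly endowed with higher weight after losing the tipping issue under free-riding, lets $e$ beat $d$, whereas $d$ wins honestly. On their own these two issues give $v^*$ one gain and one loss, for a net zero change. Issue $i_0+3$ is the \emph{cascade}: candidates $g$ (approved by $v^*$) versus $h$, designed so that $g$ wins honestly but $h$ wins under free-riding, which is the source of the strict loss.

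The critical insight, which also identifies the main obstacle, is that a purely symmetric two-coalition swap in issues $i_0+1$ and $i_0+2$ would leave every voter's \emph{true} satisfaction identical across both runs, and then at issue $i_0+3$ only $v^*$'s own weight would differ --- and that difference can only help her. The construction must therefore include a small set of \emph{auxiliary} voters whose approvals straddle the coalitions asymmetrically across issues $i_0+1$ and $i_0+2$ (for instance, voting with the $c$-coalition on issue $i_0+1$ but abstaining on issue $i_0+2$), so that their true satisfactions diverge between the two runs. Their correspondingly shifted weights at issue $i_0+3$ then tip $g$ into a losing tie against $h$, even after accounting for $v^*$'s own boost towards $g$. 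The bulk of the verification is the scalar arithmetic showing that coalition sizes, auxiliary-voter approvals, and the tie-breaking order can simultaneously be chosen so that each contested issue lands inside the narrow window where the relevant gaps of $f$ at indices $\geq i_0$ are decisive.
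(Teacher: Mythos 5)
There is a genuine gap: your construction needs strict decreases of $f$ at indices strictly beyond $i_0$, but the hypothesis only guarantees the single gap $f(i_0)>f(i_0+1)$ (Thiele weight functions are merely nonincreasing elsewhere). After your unanimous fillers every voter other than $v^*$ sits at satisfaction at least $i_0$ from then on, so the weights that are supposed to drive the counter-tipping issue and the cascade are $f(i_0+1)$, $f(i_0+2)$, $f(i_0+3)$, and the ``shifted weights'' of your auxiliary voters are differences like $f(i_0+1)-f(i_0+2)$, which may be zero. Note also that at issue $i_0+2$ the manipulator herself has the same reported satisfaction $i_0$ in both runs (honestly she lost the tipping issue; after free-riding she won it but started one lower), so she cannot drive that flip either. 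Worse, for rules with $f$ constant on $\{i_0+1,i_0+2,\dots\}$ (e.g.\ $f(j)=1$ for $j\le i_0$ and $f(j)=c$ for $j>i_0$ with $0\le c<1$, which satisfy the hypothesis) your setup provably cannot yield harmful free-riding at all: in the honest run every weight equals $f(i_0+1)$ forever; in the free-riding run only $v^*$ is boosted, to $f(i_0)$, and only until her first post-filler win; and one checks that in any round where the two runs' winners differ, the free-riding winner must be a candidate approved by $v^*$, after which the two runs coincide round by round. Hence her true satisfaction under free-riding is at least her honest satisfaction, so no choice of coalition sizes, auxiliary approvals, or tie-breaking can complete your plan for such $f$.

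The fix (and what the paper does) is to make the issue on which $v^*$ free-rides itself contested rather than a unanimous filler: in the paper's election only three of nine voters approve its winner, so immediately afterwards the \emph{honest} electorate is split between satisfaction $i-1$ (weight $f(i)$) and satisfaction $i$ (weight $f(i+1)$), i.e., it straddles the one gap you are actually given. The next three issues are then arranged so that the free-ride flips which group of the low-satisfaction voters crosses that gap, and this, not any gap beyond $i$, is what reverses the last two winners against $v^*$; every comparison there uses only $f(i)>f(i+1)$, monotonicity of $f$, and the tie-breaking order. Your observation that a purely symmetric two-coalition swap cannot hurt the manipulator is correct and is the right instinct, but the asymmetry has to be placed across the unique guaranteed weight gap, not merely across issues.
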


\begin{proof}

Consider a sequential $f$-Thiele rule such that $f(i) > f(i+1)$ and consider the following
election with nine voters, $i+4$ issues and candidates $a,\dots, g$ for 
all issues. We assume alphabetic tiebreaking.
The approvals for each issue are given by these tuples:
\begin{align*}
A_1 = \dots = A_{i-1} &= (\{a\},\{a\},\{a\},\{a\},\{a\},\{a\},\{a\},\{a\},\{a\})\\
A_{i} &= (\{a\},\{a\},\{a\},\{b\},\{b\},\{c\},\{d\},\{e\},\{f\})\\
A_{i+1} &= (\{b\},\{a\},\{c\},\{b\},\{b\},\{a\},\{a\},\{a\},\{b\})\\
A_{i+2} &= (\{b\},\{a\},\{c\},\{b\},\{e\},\{a\},\{f\},\{a,b\},\{g\} )\\
A_{i+3} &= (\{b\},\{c\},\{d\},\{e\},\{b\},\{f\},\{a\},\{a\}, \{g\})
\end{align*}
Then, $\{a\}$ is clearly the winner for the first $i-1$ issues. 
Thus, all voters have a satisfaction of $i-1$ before issue $i$ and 
$a$ wins on issue $i$ as it has the most supporters.
In issue $i+1$, $a$ and $b$ increase the Thiele score by $f(i+1) + 3f(i)$,
while $c$ increases the score only by $f(i+1)$. By tiebreaking, $a$ wins again.
Then, in issue $i+2$, $a$ increases the score by $f(i+2) + 2f(i+1)$, $b$ by $f(i) + 2f(i+1)$,
while all other candidates by at most $f(i)$.
As $f(i) > f(i+1) \geq f(i+2)$ (together with the tiebreaking rule if $f(i+1)=0$), it follows that
$b$ wins in issue $i+2$. Finally, in issue $i+3$, $a$ increases the score by $f(i+1) + f(i+2)$, 
$b$ by $f(i) + f(i+2)$ and all other candidates increase the score by at most $f(i)$.
Hence $b$ wins.

Now assume voter~$1$ changes her preferences and free-rides in issue $i$.
It is straightforward to check that $a$ remains the winner for issue $i$,
but winner in issue $i+1$ changes to $b$ while $a$ now wins for issue $i+2$ and
$i+3$.
Therefore, $1$ now additionally approves of the winner on issue $i+1$
but does not approve the winners of issues $i+2$ and $i+3$ any more. Hence, free-riding led to
a lower satisfaction for the free-riding voter.
\end{proof}

\newcommand{\OWAfreeridingdecrease}{Consider a sequential $\alpha$-OWA rule such that there exists an $n\geq 8$ for which $\alpha^n$ is nonincreasing and satisfies $\alpha_{3} > \alpha_{n-2}$. Then, free-riding can reduce the satisfaction of the free-riding voter.}

The same holds for the following large class of sequential OWA rules. 

\begin{proposition}\label{OWAfreeridingdecrease}
\OWAfreeridingdecrease
\end{proposition}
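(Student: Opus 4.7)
The plan is to mirror the strategy used for sequential Thiele rules in the previous proposition: construct an election with a ``harmless'' issue on which the manipulator can free-ride without affecting the winner, followed by a few contested issues whose sequential OWA winners depend sensitively on the manipulator's rank in the sorted satisfaction vector. If free-riding drops the manipulator from rank~$3$ to a rank between $n-2$ and $n$ in the right subsequent rounds, the strict inequality $\alpha_3 > \alpha_{n-2}$ is precisely what forces the OWA-maximizer to flip to a candidate the manipulator disapproves. By arranging at least two such flips, the net change in her satisfaction is negative.

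Concretely, I would first use a block of ``filler'' issues that every voter unanimously approves, in order to calibrate base satisfactions and fix the occupants of the top and bottom positions of the sorted vector. Then I would introduce the free-ride issue: a candidate~$a$ with an overwhelming majority of approvers (so that $a$ wins even after the manipulator drops~$a$ from her ballot), where the manipulator truthfully approves~$a$. The post-free-ride state differs from the honest state only in that the manipulator's satisfaction is one lower, which moves her down one slot in the sorted satisfaction vector.

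Next, I would design (at least) two contested later issues, each with two relevant candidates $b$ and $c$. In each such issue the sets of voters whose satisfaction is incremented by $b$ versus by $c$ are chosen so that the OWA-score difference between the two options, expanded through the new sorted vector, reduces to a single signed term of the form $\alpha_3 - \alpha_{n-2}$ (plus terms that vanish). The condition $\alpha_3 > \alpha_{n-2}$ then determines the winner one way in the honest scenario and the opposite way after free-riding, because the manipulator's slot changes from position~$3$ to a position in $\{n-2,n-1,n\}$. The requirement $n\geq 8$ gives enough ``neutral'' voters to fill positions $4,\dots,n-3$ with fixed satisfactions that neither candidate perturbs, so the OWA difference really is controlled by positions~$3$ and $n-2$ only. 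I would pick the two contested issues so that in the honest outcome both winners are approved by the manipulator, and after free-riding both winners are not; the manipulator thus loses two units of satisfaction in the contested rounds while gaining only one from truthful approval on the free-ride issue, so her overall satisfaction strictly decreases.

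The main obstacle is the sequential nature of the rule: the sorted satisfaction vector is updated between rounds, so a flip in issue $i$ can perturb the sorted positions seen in issue $i+1$ and accidentally undo later flips. Handling this requires choosing the contested issues so that the voters whose satisfaction changes across the flips are exactly those in the middle band $4,\dots,n-3$ of the sorted order, leaving positions~$3$ and $n-2$ occupied by the manipulator in the respective scenarios throughout the relevant rounds. A secondary obstacle is verifying that the tie-breaking order can be set consistently across all rounds so that the intended candidate wins in every case; since tie-breaking is a free parameter of the rule, this is only a matter of explicit bookkeeping. Once the filler construction is in place the remaining inequalities reduce, by nonincreasingness of $\alpha$ and the single strict inequality $\alpha_3 > \alpha_{n-2}$, to the sign of $\alpha_3 - \alpha_{n-2}$, which closes the argument.
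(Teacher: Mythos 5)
There is a genuine gap: the core of your plan --- two contested issues that both flip from a winner the manipulator approves to one she disapproves, driven solely by her reported satisfaction being one lower --- cannot work for the \emph{first} such issue. After free-riding on a single issue whose winner is unchanged, the reported satisfaction profiles entering the next issue differ only in the manipulator's entry (one lower), and her ballot on that issue is unchanged. Writing the OWA score as $\sum_{j\geq 1} T(n_j)$, where $n_j$ is the number of voters with satisfaction at least $j$ and $T(r)=\alpha_{n-r+1}+\dots+\alpha_n$, one checks that for nonincreasing $\alpha$ this change weakly \emph{increases} the marginal score of every candidate the manipulator approves and weakly \emph{decreases} that of every candidate she does not approve (lowering her satisfaction moves her towards the heavily weighted low end of the sorted vector; incidentally, your description of her moving from position $3$ to positions $n-2,\dots,n$ is backwards in the paper's sorted-increasing convention). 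Hence, with a fixed tie-breaking order, the winner of the first issue after the free-ride can never switch from an approved to a disapproved candidate: the first divergence is necessarily neutral or beneficial for the free-rider, and harm can only arise at later issues, after the satisfactions of \emph{other} voters have been perturbed by an earlier neutral or beneficial flip.

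This is exactly how the paper's construction operates: it uses four issues, where free-riding on issue $1$ leaves its winner unchanged, flips issue $2$ \emph{in favour} of the manipulator ($+1$), and only then flips issues $3$ and $4$ against her ($-1$ each), for a net loss of one. Your two-issue, both-harmful design contradicts the monotonicity fact above, so the construction cannot be completed as described; you need at least one intermediate flip that helps (or at least does not hurt) the manipulator, followed by strictly more harmful flips. A secondary slip: the ``gain of one from truthful approval on the free-ride issue'' is not specific to the manipulated run --- that issue's winner is unchanged and truthfully approved in both runs --- so it cannot offset anything in the net accounting.
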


\begin{proof}
Consider an election with $4$ issues and $n$ voters. In each issue the candidate set is a subset of
$\{a_1, \dots a_n\}$. The specific set of candidates is defined as all candidates that receive
at least one approval according to the following description:
\begin{description}
\item[Issue 1:] Voter $3,\dots n-3$ and $n$ approve $a_n$. Every other voter $i$ 
approves~$a_i$.
\item[Issue 2:] Voter $1,2,3$ approve $a_n$, voters $n-2,n-1,n$ approve $a_1$.
Every other voter $i$ approves~$a_i$.
\item[Issue 3:] Voter $1$ and voter $4$ approve $a_4$, voter $n-1$ and $n$ approve $a_{n}$, 
Every other voter $i$ approves~$a_i$.
\item[Issue 4:] Voter $2$ and $3$ approve $a_2$, voter $n-2$ and $n$ 
approve $a_n$ and every other voter $i$ approves $a_i$.
\end{description}
We assume that candidates with a higher index are preferred by the tiebreaking.

Let us determine the outcome of this election.
In the first issue $a_n$ has to be selected, as no other candidate can have a higher OWA score.
This leads to satisfaction vector $(0,0,0,0,1,\ldots,1)$.
In the second issue selecting $a_1$ or $a_n$ both lead to a satisfaction vector of $(0,0,1,\ldots,1,2)$.
Selecting any other candidate leads to a satisfaction vector of $(0,0,0,0,1,\ldots,1,2)$.
Clearly, it is again the case that no candidate can have a higher OWA score than $a_n$.
Hence $a_n$ wins again.

In the third issue, selecting any candidate other than $a_4$, $a_3$, $a_{n-2}$ or $a_{n}$
leads to a satisfaction vector of $(0, 0,1,\ldots,1,2,2)$.
Selecting $a_3$ leads to a satisfaction vector of $(0,0,1,\ldots,1,3)$, while selecting $a_4$
leads to a satisfaction vector of $(0,0,1,\ldots,1,2,2,2)$ and $a_{n-2}$ to $(0,1,\ldots,1,2)$.
Finally, selecting $a_n$ leads to a vector of $(0,1,\ldots,1,2,2)$.
As $\alpha_{2} \geq \alpha_{3} > \alpha_{n-2} \geq \alpha_{n}$ selecting $a_{n}$ leads
to a higher OWA-score than selecting a candidate $a_i$ with $i < n-2$.
Moreover, the OWA-score of $a_{n-2}$ cannot be higher than the score of $a_{n}$.
Hence, $a_{n}$ wins in issue three.

In the fourth issue, selecting any candidate other than $a_2$ or $a_n$
leads to a satisfaction vector of $(0,1,\ldots,1,2,2,2)$.
Selecting $a_2$ leads to a satisfaction vector of  
$(0,1,\ldots,1,2,2,3)$,
while selecting $a_n$ leads to a satisfaction vector of $(1,\ldots,1,2,3)$.
As $\alpha_{1} \geq \alpha_{3} > \alpha_{n-2}$ selecting $a_{n}$ leads
to the highest OWA score.

Now, we claim that voter $n$ can free-ride by approving any other candidate in the first issue.
Indeed, if voter $n$ approves any other candidate it is still the case that no candidate
can have a higher OWA score than $a_n$. Let us consider how the other issues change:

In the second issue selecting $a_1$ leads to a satisfaction vector of $(0,0,1,\ldots,1)$.
Selecting $a_n$ leads to a satisfaction vector of $(0, 0,0,1,\ldots,1,2)$, while
selecting any other candidate leads to a satisfaction vector of $(0,0,0,0,0,1,\ldots,1,2)$.
As $\alpha_{3} > \alpha_{n}$ we know that the OWA score of $a_1$ is higher than that of $a_n$
which is at least as high as the OWA score of every other candidate.

In the third issue, selecting any candidate other than $a_4$, $a_2$ or $a_{n}$
leads to a satisfaction vector of $(0, 0,1,\ldots,1,2)$.
Selecting $a_2$ leads to a satisfaction vector of  $(0,1,\ldots,1)$,
while selecting $a_4$ leads to a satisfaction vector of $(0,1,\ldots,1,2)$.
Finally, selecting $a_n$ leads to a vector of $(0,0,1,\ldots,1,2,2)$.
As $\alpha_{2} \geq \alpha_{3} > \alpha_{n-1}$ selecting $a_{4}$ leads
to a higher OWA-score than selecting a candidate $a_i$ with $i \neq 2,4$.
Moreover, the OWA-score of $a_{2}$ cannot be higher than the score of $a_{4}$.
Hence, $a_{4}$ wins in issue three.

In the fourth issue, selecting any candidate other than $a_2$, $a_4$, or $a_n$
leads to a satisfaction vector of
$(0,1,\ldots,1,2,2)$.
Selecting $a_2$ leads to a satisfaction vector of  
$(1,\ldots,1,2,2)$,
while selecting $a_4$ leads to a satisfaction vector of  
$(0,1,\ldots,1,3)$
and $a_n$ to a satisfaction vector of $(0,1,\ldots,1,2,2,2)$.
As $\alpha_{1} > \alpha_{n-2}$ selecting $a_{2}$ leads
to the highest OWA score.
However,  this decreases the satisfaction of voter $n$ with 
respect to the honest ballots to $2$.
\end{proof}

\newcommand{\EGALfreeridingdecrease}{Free-riding can decrease the satisfaction of the free-riding voter under the sequential egalitarian rule.}

Note that the conditions of Proposition~\ref{OWAfreeridingdecrease} do not include the sequential egalitarian rule, but an equivalent statement holds nonetheless:

\begin{proposition}\label{EGALfreeridingdecrease}
\EGALfreeridingdecrease
\end{proposition}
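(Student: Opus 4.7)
The plan is to exhibit a concrete election witnessing the claim. I would fix the tie-breaking rule (say, lower-indexed candidates preferred, as in the proof of Theorem~\ref{thm:freeriding}) and construct an election with $k\geq 3$ issues and a small number of voters. The first issue would have a clear winner $a$ that the free-rider $v^*$ truly approves but that also wins if $v^*$ drops her approval, ensuring that $v^*$ can free-ride on issue~1 in the sense of Definition~\ref{def:freeriding}. The remaining issues would have carefully overlapping approval patterns so that the identity of the ``critical'' voter at each round depends sensitively on the accumulated scores.

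The key mechanism I would exploit is a cascade. At first glance, lowering $v^*$'s reported satisfaction by one tends to help $v^*$ under the sequential egalitarian rule, since making $v^*$ strictly critical causes the rule to prefer $v^*$-approved candidates. However, in a round where $v^*$ is already critical together with another voter~$v'$, the rule's behavior depends on whether a candidate approved by both critical voters exists. By designing issue~2 so that no candidate covers both $v^*$ and $v'$, the tie-breaking at issue~2 changes once $v^*$'s reported score drops. This, in turn, alters the partial-score vector going into issue~3: specifically, whichever of $\{v^*, v'\}$ is \emph{not} raised in issue~2 becomes the unique critical voter at issue~3, and the rule then strictly selects a candidate approved by that voter. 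With the ballots arranged so that honestly this candidate was not picked (because the tie-breaking went the other way), and so that $v^*$ truly does \emph{not} approve it, $v^*$ loses one unit of satisfaction on issue~3. A fourth issue can be added to ensure that this loss is not compensated elsewhere.

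The main obstacle is verifying the arithmetic across rounds: since the sequential egalitarian rule has an inherent tendency to benefit voters whose reported score drops, the gain $v^*$ obtains in issue~2 (from being treated as critical) must be strictly less than the loss $v^*$ incurs in issue~3 (and possibly issue~4). I would complete the proof by writing out the satisfaction vectors for each candidate at each round, in both the honest and the free-riding scenarios, and checking that the final true satisfaction of $v^*$ under free-riding is strictly below its honest value.
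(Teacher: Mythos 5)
There is a genuine gap: the proposition is an existence claim, so a proof must actually exhibit a concrete election (ballots, tie-breaking, and the round-by-round winners in both the honest and the free-riding scenarios) and verify that the manipulator's \emph{true} satisfaction strictly drops. Your proposal stops exactly short of this -- you describe a plausible cascade mechanism and then write that you ``would complete the proof by writing out the satisfaction vectors \dots and checking'' the final comparison. But whether such a configuration is realizable is precisely the content of the proposition; the realizability is not obvious, because the sequential egalitarian rule has, as you yourself note, a built-in tendency to reward a voter whose reported score drops, so the later losses must be engineered to strictly outweigh the earlier gains, and one must simultaneously guarantee that the manipulated issue's winner does not change (so that the deviation is genuinely free-riding per Definition~\ref{def:freeriding}). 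None of this is checked, and the three-to-four issue scheme you sketch is not verified to be consistent.

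For comparison, the paper's proof gives an explicit election with $5$ voters, $5$ issues and candidates $\{a,b,c\}$ (tie-breaking favouring $a$): honestly the outcome is $(a,a,a,b,b)$ and voter~$1$ ends with satisfaction~$3$; if voter~$1$ free-rides on issue~$2$, the outcome becomes $(a,a,b,a,a)$ and her true satisfaction is~$2$. Note also that the paper's mechanism differs slightly from the one you sketch: the harm does not come from another critical voter being uniquely served in the next round (in fact the free-rider \emph{gains} issue~$3$ there), but from the fact that her artificially low reported score keeps the minimum so low in issues~$4$ and~$5$ that no candidate can raise it, so tie-breaking hands both issues to a candidate she does not approve. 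This illustrates why the concrete bookkeeping cannot be waved away: the sign and location of the gains and losses only emerge once the full example is written down and checked.
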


\begin{proof}
Consider an election with $5$ issues and $5$ voters. In each issue there are $3$ candidates
$a$, $b$ and $c$. We assume that tiebreaking always prefers $a$.
The approval sets are given as follows:

\begin{center}
\begin{tabular}{cccccc}
              & Issue 1   & Issue 2     & Issue 3 & Issue 4  & Issue 5 \\
\midrule
Voter 1       & $\{b\}$   &    $\{a\}$  & $\{b\}$ &  $\{b\}$ &   $\{b\}$\\    
Voter 2       & $\{b\}$   &    $\{a\}$  & $\{a\}$ &  $\{b\}$ &   $\{a\}$\\    
Voter 3       & $\{b\}$   &    $\{a\}$  & $\{a\}$ &  $\{c\}$ &   $\{b\}$\\    
Voter 4       & $\{a\}$   &    $\{a\}$  & $\{b\}$ &  $\{a\}$ &   $\{b\}$\\    
Voter 5       & $\{a\}$   &    $\{a\}$  & $\{b\}$ &  $\{a\}$ &   $\{b\}$\\    
\end{tabular}
\end{center}

Let us determine the winners under the sequential egalitarian rule:
In the first issue, every option leads to a minimal satisfaction of $0$. Therefore, $a$ is winning
by tiebreaking. Then, in the second issue, $a$ must be winning as it raises the minimal satisfaction
to $1$. In the third issue, again, no alternative can increase
the minimal satisfaction and therefore $a$ wins by tiebreaking. This leads to a situation
where every voter except $1$ has a satisfaction of $2$ while $1$ has a satisfaction of $1$.
Hence, in issue four, $b$ must be the winner, as it increases the minimal satisfaction to $2$.
Finally, in the fifth issue, electing $b$ leads to a minimal satisfaction of $3$, which is better
than electing $a$. We observe that voter $1$ has a satisfaction of $3$ in the end.

Now, we claim that voter $1$ can free-ride on issue two. If voter $1$ approves $b$ instead,
then all candidates lead to the same minimal satisfaction of $0$. Hence, $a$ wins by tiebreaking.
If voter $1$ decides to free-ride on issue two, this changes the winners of the following issues
as follows: In issue three, $b$ must now be the winner, as it increases the minimal satisfaction
to $1$. Then, in issue four and five, no candidate increases the minimal satisfaction and hence $a$
wins both issues by tiebreaking. However, this decreases the satisfaction of voter $1$ with 
respect to the honest ballots, to $2$.
\end{proof}

\section{Computational Complexity}\label{sec:complexity}

\newcommand{\freeridingProb}{\textsc{Successful-$\calR$-Free-Riding}}
\newcommand{\genFreeridingProb}{\textsc{Generalized-Successful-$\calR$-Free-Riding}}

In this section, we will study the computational complexity of successful free-riding:
that is, raising one's total satisfaction via free-riding. Overall,
we will show that it is generally hard to do so, even with full information. 
Observe that, due to the performance of, e.g., modern \textsc{SAT}- or \textsc{ILP}-solvers,
computational hardness (in particular \NP-completeness) cannot be seen as 
an unbreakable defense against manipulation. However, the main appeal of free-riding 
is its simplicity. A manipulator that 
is able to solve computationally hard problems (and has full information about other's preferences) has no benefit from restricting
the potential manipulation to free-riding -- such a voter could optimize their satisfaction via arbitrary manipulation.

Observe that 
the outcome of a sequential rule is always
polynomial-time computable: for every round, we can just iterate over all the
candidates involved in that issue and pick the one maximizing the score.

Therefore, it is computationally easy to decide if (some) way of free-riding
is possible: for each issue where a voter approves of the winner $c$, one can
just iterate over all singleton ballots $\{c^\prime\}$ with $c^\prime\neq c$ and check if
under any such ballot $c$ still wins.\footnote{Note that restricting
oneself to singleton ballots is safe under OWA and Thiele rules:
it is easy to see that, if there exists a ballot $A^*_i(v)$ via which
voter~$v$ can free-ride in issue $i$, then she can free-ride in the same issue also via
any ballot in the set $\{\{c^\prime\}\colon c^\prime\in A^*_i(v)\}$.}

However, although voters can easily verify if free-riding is possible,
it might be hard to judge its long-term consequences. If this
is unfeasible, voters might be discouraged from free-riding
(because, as we have shown, it can have negative consequences). Hence,
we study the following problem:


\begin{problem}
  \problemtitle{\freeridingProb}
  \probleminput{An election $\elec=\langle N, \bar A,\bar C\rangle$
  and a voter~$v\in N$.}
  \problemquestion{Is there an election $\elec^*$ such that $v$ can free-ride in $\elec$
  via $\elec^*$ and $\sat_{\elec}(v,\calR(\elec))<\sat_{\elec}(v,\calR(\elec^*))$?}
\end{problem}

First, we show that free-riding is \NP-complete for a large class of
sequential $f$-Thiele rules and the sequential egalitarian rule. 

\newcommand{\seqFreeridingThiele}{\freeridingProb{} is \NP-complete for every sequential $f$-Thiele rule for which there exists a $\ell\in\mathbb N$ such that (i) for all $j,j^\prime\in[\ell]$ it holds $f(j)=f(j^\prime)$ and (ii) $f$ is strictly decreasing on $\mathbb{N}\setminus[\ell-1]$.}

\begin{theorem}\label{thm:seqFreeridingThiele}
\seqFreeridingThiele
\end{theorem}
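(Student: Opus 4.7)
The problem lies in \NP{} because sequential $f$-Thiele rules are poly-time computable: we can guess the manipulated ballot $A^*$ for $v$, compute $\calR(\elec)$ and $\calR(\elec^*)$ in polynomial time, and verify both the free-riding constraints of Definition~\ref{def:freeriding} on the deviation set and the strict improvement $\satisfaction_\elec(v,\calR(\elec^*)) > \satisfaction_\elec(v,\calR(\elec))$.

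For \NP-hardness, my plan is to reduce from \textsc{CubicVertexCover}, adapting the blueprint of Theorems~\ref{thm:WDThiele} and~\ref{thm:RFThiele}. Given a cubic graph $G=(V,E)$ with $|V|=n$, $|E|=m$, and parameter $k$, I would construct an election with three phases. Phase~A consists of $\ell-1$ trivially unanimous issues: every voter approves a single candidate~$a$, driving everyone's reported satisfaction to $\ell-1$ before Phase~B starts. This is essential, since only in the strictly decreasing region of $f$ (i.e., once satisfactions reach $\ell-1$) does a free-ride actually reweigh the manipulator in later rounds.

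Phase~B has one ``node-issue'' per $\eta\in V$, each with a candidate $a_\eta$ truthfully approved by~$v$ together with a large, dominant coalition of other voters. This ensures $a_\eta$ wins robustly even if $v$ withdraws support, so $v$ may free-ride on any subset $S\subseteq V$ of these issues; each free-ride lowers $v$'s reported satisfaction by one and thereby increases her weight $f(\cdot)$ in every subsequent round. Phase~C contains $m$ ``edge-issues'', one per edge $e\in E$, followed by a capstone block. The edge-issues are engineered so that $v$'s preferred candidate wins edge-issue~$e=\{\eta,\eta'\}$ iff at least one of $\eta,\eta'$ lies in $S$, and the capstone block is designed so that $v$'s true satisfaction strictly exceeds that of the honest outcome iff $S$ is a vertex cover with $|S|\leq k$.

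I expect the main obstacle to be calibrating Phase~C so that the purely additive weight-boost mechanism of sequential Thiele rules faithfully encodes the vertex-cover structure. A natural approach is to interleave the edge-issues with carefully placed padding issues, and to rely on the strict decrease of $f$ past $\ell$ together with large blocks of auxiliary voters whose contributions dominate all margins except the intended ones; this mirrors the template of the earlier hardness reductions. We must additionally ensure that (i) free-riding on an arbitrary subset $S$ does not inadvertently alter the winners of Phase~B itself, (ii) the improvement condition is triggered only when $S$ covers every edge and $|S|\le k$, and (iii) no alternative free-riding pattern (e.g., one using more than~$k$ node-issues, or one touching Phase~A) can improve $v$'s satisfaction outside of the vertex-cover regime. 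The constant-on-$[\ell]$ portion of $f$ is what forces us to include the saturation phase, since otherwise the weight changes caused by free-riding would be invisible in the first $\ell-1$ rounds.
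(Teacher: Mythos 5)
Your \NP{} membership argument is fine, but the hardness plan has a structural flaw that goes beyond the admitted incompleteness of Phase~C. In your architecture all node-issues of Phase~B precede the edge-issues, and you explicitly require that the Phase~B winners are unchanged under any free-riding set $S$. But then, entering Phase~C, the entire election state is determined by $|S|$ alone: the winners of all previous issues are identical, every voter other than $v$ has exactly the same satisfaction count, and $v$'s count is reduced by exactly $|S|$. Since a sequential $f$-Thiele rule decides each round using only the current ballots and the voters' satisfaction counts, every subsequent winner is a function of $|S|$ only, never of \emph{which} nodes lie in $S$. Hence your key gadget --- ``$v$'s preferred candidate wins edge-issue $e=\{\eta,\eta'\}$ iff at least one of $\eta,\eta'$ lies in $S$'' --- is unrealizable, and the only property of $S$ that Phase~C could ever test is a threshold on its size, which cannot encode \textsc{VertexCover}. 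Interleaving edge-issues with the node-issues does not by itself fix this: an edge-issue placed after both endpoints still only sees the cumulative count of free-rides, so you would additionally need a mechanism that \emph{records} the per-node choice into the satisfaction of some other voter immediately after each node-issue, and your proposal contains no such mechanism.

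The paper's proof solves exactly this difficulty, and does so by reducing from \textsc{3-SAT} rather than \textsc{CubicVertexCover}. There, the manipulator's free-ride in the first round of each triple of rounds tips a tie in the \emph{immediately following} round between two candidates $c$ and $\bar c$, which changes which of the dedicated voters $v_i,\bar v_i$ gains satisfaction --- i.e., the ``assignment'' of variable $x_i$ is stored in other voters' satisfaction counts, not in $v$'s weight. Crucially, the triple is engineered (using the constant $\ell$ with $f(k+1)(\ell+1)<f(k)\ell$ and auxiliary voters) so that $v$ wins exactly two of the three rounds of each triple whether or not she free-rides; this re-synchronizes $v$'s satisfaction with a reference voter at the start of every triple, so each choice is read off locally and the choices are independent. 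A final round with one candidate per clause then checks satisfiability, and $v$ gains satisfaction iff that candidate wins. If you want to salvage your vertex-cover route, you would need an analogous per-node recording-and-resynchronization gadget plus a way to combine two recorded indicators per edge; without it, the reduction cannot work as described.
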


The conditions of this theorem apply to all functions that are constant up to a certain number $\ell$, and from $\ell$ on become strictly decreasing. 
This is the case, e.g., for the sequential PAV rule.

\begin{proof}[Proof of Theorem~\ref{thm:seqFreeridingThiele}]
Fix an $f$-Thiele rule $\calR$ satisfying the conditions of the theorem. First, notice that \freeridingProb{} is in \NP,
as, given an insincere approval ballot for the manipulator, we can check whether it 
improves her satisfaction in polynomial time and whether it is a case of free-riding.

Now, we show hardness by a reduction from \textsc{3-SAT} \cite{garey1979computers}.
Let $\phi$ be a 3-CNF with $n$ variables and $m$ clauses. We refer to the $j$-th clause as $C_j$.
We assume w.l.o.g.\ that $\phi$ is not satisfied
by setting all variables to false and that each clause contains 
exactly three literals.
Furthermore, let $k \in \mathbb N$ be the smallest $k$ such that $f(k+1)<f(k)$.
As $\calR$ is not the utilitarian rule, such a $k$ surely exists, and is constant in the size of $\phi$ 
(as $f$ does not depend on the number of either the voters or the issues).
Moreover, let $\ell \in \mathbb N$ be the smallest $\ell$ such that $f(k+1)(\ell+1)<f(k)\ell$. 
Again, note that $\ell$ can be large, but does not depend on the instance.
We will construct an instance of \freeridingProb{} with $3n(\ell+1)+5$ voters and $k+3n$ rounds.

For each variable $x_i$, we have three voters $s_i$, $v_i$ and $\bar v_i$, $3\ell$ voters $r^i_1,t^i_1,w^i_1,\dots,r^i_\ell,t^i_\ell,w^i_\ell$.
Furthermore, we have four additional voters $a,u_1,u_2,u_3$. Finally, we have a distinct voter $v$, who will
try to free-ride.

In the first $k-1$ rounds, there are two candidates, $c$ and $\bar c$, and every voter approves of both candidates.
Here, $v$ cannot increase her satisfaction by manipulating, and all voters win in each round.
Thus, the satisfaction of every voter after the first $k-1$ rounds will be exactly $k-1$.

Now, focus on rounds from $k$ to $k+3n-1$. We subdivide this set of rounds into triples; that is,
the first triple is $k,k+1,k+2$, the second $k+3,k+4,k+5$, and so on. We refer to the $j$-th
round of triple $i$ as round $(i, j)$; for example, round $(2, 3)$ corresponds to round $k+5$.
In each the first round of every triple, there is one candidate $c$, plus one candidate $c_{v^*}$ for
every voter $v^*\in N$. In the second and third rounds, we additionally have a voter $\bar c$.
We assume that, if there is a tie, $c$ wins against every voter-candidate, but loses against $\bar c$ every other voter.

For every triple $i$, in round $(i,1)$, voters $a,v,s_i$ approve of candidate $c$. Every other voter approves only of her voter-candidate.
In round $(i,2)$, voters $v,v_i,r^i_1,\dots,r^i_\ell$ and $a,\bar v_i,t^i_1,\dots,t^i_\ell$ vote for $c$ and $\bar c$, respectively.
The rest of the voters vote for their voter-candidate.
In round $(i,3)$, voters $v,t^i_1,\dots,t^i_\ell$ and $w^i_1,\dots,w^i_\ell$ vote for $c$ and $\bar c$, respectively. Again, the rest of the voters vote for their voter-candidate.

First, note that whoever votes for a voter-candidate never wins. We show this by induction. In round $k$, every voter has satisfaction $k-1$, and hence the candidates approved most often will win; this cannot be any voter-candidate, as $c$ is the most approved. Now, suppose this holds up to a round $i$. In round $i+1$, we know that there is at least one voter voting for $c$ that has never won since round $k-1$, as she always voted for her voter-candidate up to round $i$; for example, voters $s_i$, $v_i$ or $w^i_1$ in the cases where round $i+1$ is the first, second or third round of the triple, respectively. Such voters contribute to the score of $c$ with a value of $f(k-1)$. Since no voter-candidate can have a score greater than $f(k-1)$, our claim follows.

Consequently, in every round $(i,1)$, voter $v$ can free-ride by voting for her voter-candidate. We claim that if $v$ free-rides in $(i,1)$ then she wins in round $(i,2)$ and loses in round $(i,3)$; if she does not, the opposite happens. Furthermore, we claim that at the beginning of every triple, $v$ and $a$ have the same satisfaction. We show both claims by induction.

Consider round $(1, 1)$. If $v$ does not free-ride, $c$ wins, and the satisfaction of $a$ and $v$ will be $k$. Now consider round $(1, 2)$. The approval score of both $c$ and $\bar c$ is $f(k+1)+f(k)+f(k)\ell$, and hence $\bar c$ wins by tiebreaking. Thus, in the next round, the approval score of $c$ will be $f(k)\ell+f(k+1)$, and of the score of $\bar c$ will be $f(k)\ell$; hence, $c$ wins. Now, suppose that $v$ free-rides. Then, her satisfaction after round $(1, 1)$ will be $k-1$. Hence, the approval score of $c$ would increase to $2f(k)+f(k)\ell$, making it the winner. Hence, in round $(1, 3)$ the scores of $c$ and $\bar c$ would be $f(k+1)(\ell+1)$ and $f(k)\ell$, respectively. As we assume $f(k+1)(\ell+1)<f(k)\ell$, here we have that $\bar c$ wins, as desired. Observe that in both cases $v$ and $a$ won the same number of rounds. 

Now suppose the claim holds up to triple $i$. Then, let $s$ be the satisfaction of $v$ and $a$ at the beginning of triple $i+1$. Observe that if $v$ does not free-ride in $(i+1, 1)$, then in round $(i+1, 2)$ the approval score of $c$ and $\bar c$ is $f(s+2)+f(k)+f(k)\ell$, and hence $\bar c$ wins by tiebreaking. Thus, in round $(i+1, 3)$, the score of $c$ will be $f(k)\ell+f(s+2)$, and of the score of $\bar c$ will be $f(k)\ell$; hence, $c$ wins. If $v$ does free-ride, then in round $(i+1, 2)$ the approval score of $c$ raises to $f(s+1)+f(k)+f(k)\ell$, making it the winner. Furthermore, in round $(i+1, 3)$, the scores of $c$ and $\bar c$ would be $f(k+1)\ell+f(s+2)$ and $f(k)\ell$, respectively. Clearly, $\bar c$ wins here. Observe again that $v$ and $a$ won the same number of rounds.

Let us move to the final round. Here, there are $m+1$ candidates, namely $c,c_1,\dots,c_m$. Here, each voter $v_i$ votes for $c_j$ if $x_i\in C_j$ (and similarly for $\bar v_i$ and $\bar x_i$). Furthermore, $v,u_1,u_2,u_3$ vote for $c$, $a$ votes for all voters except for $c$, and everyone else votes for all candidates. We can interpret $c$ (resp. $\bar c$) winning in round $(i,2)$ as setting $x_i$ to true (resp. false). We claim that this assignment satisfies $\phi$ if and only if $c$ wins in the final round.

Indeed, let $\alpha$ be the score contributed by the voters who vote for all candidates. Furthermore, let $\beta$ be the score contributed by $v$ or by $a$, which are the same (as shown before). Observe that $u_1,u_2,u_3$ won exactly $k-1$ rounds. Furthermore, $v_i$ won exactly $k$ rounds if $c$ won in round $(i,2)$, and $k-1$ otherwise (and conversely for $\bar v_i$ and $\bar c$). 

Thus, the score of $c$ in the final round will be $\alpha+\beta+3f(k)$. Furthermore, given a clause $C_j$, if all of its literals are unsatisfied, the score of $c_j$ will also be $\alpha+\beta+3f(k)$. By our rule on tiebreaking, here $c_j$ wins. If, on the other hand, some literals in $C_j$ are satisfied, the score of $c_j$ will be at most $\alpha+\beta+2f(k)+f(k+1)$. Hence, if all clauses are satisfied, $c$ wins, as desired.

Now, observe that $v$ can free-ride only in every first round of every triple, but not elsewhere. Indeed, in the first $k-1$ rounds, whatever $v$ votes for will be the winner. Furthermore, in the second and third rounds of every triple, $v$ is either losing (and hence cannot free-ride) or her weight is breaking a tie between $c$ and some other candidate (which means that, if she were to vote for some other candidate instead, $c$ would no longer win). Observe also that, as shown earlier, $v$ will win all the $k-1$ first rounds, plus two rounds per triple (irrespective of whether she free-rides or not). Therefore, the only way that $v$ can raise her satisfaction is by making $c$ win in the last round by forcing a satisfying assignment for $\phi$ by free-riding. It follows that $v$ can free-ride if and only if $\phi$ is satisfiable, and we are done.\end{proof}


We further consider the egalitarian rule.

\begin{restatable}{theorem}{seqFreeridingEgal}
\freeridingProb{} is \NP-complete for the sequential egalitarian rule.\label{thm:seqFreeridingEgal}
\end{restatable}

The proof of Theorem~\ref{thm:seqFreeridingEgal} can be found in Appendix~\ref{app:complexity}.

Now, we will consider a weaker notion of free-riding, called \emph{generalized free-riding}. This allows us to get results for a broader class of rules, in particular for OWA rules. Here, we assume that a voter can free-ride if she can change her ballot so that she still approves of the winning alternative (though this alternative might differ from the truthful winner). That is, we just require that the new winning candidate is still (truthfully) approved by the manipulator. At its core, generalized free-riding is based on the assumption that voters are indifferent between approved candidates. To formally define generalized free-riding, we replace $\calR(\elec^*)_i=w_i$ in Definition~\ref{def:freeriding} with $\calR(\elec^*)_i\in A_i(v)$.

The problem of \genFreeridingProb{} is then defined analogously to \freeridingProb{}.

\newcommand{\seqFreeridingThieleGen}{\genFreeridingProb{} is \NP-complete for every sequential $f$-Thiele rule distinct from the utilitarian rule such that $f(i)>0$ holds for every $i\in\mathbb N$.}

\begin{theorem}\label{thm:seqFreeridingThieleGen}
\seqFreeridingThieleGen
\end{theorem}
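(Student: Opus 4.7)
The plan is to mirror the structure of Theorem~\ref{thm:seqFreeridingThiele}. Membership in $\NP$ follows by the same guess-and-check argument: nondeterministically choose a ballot for $v$, simulate the sequential rule in polynomial time, and verify both that the new election is a legitimate instance of generalized free-riding (on each manipulated issue, $v$ drops approval for a candidate she truthfully approves, and the new winner remains truthfully approved by $v$) and that her true satisfaction strictly increases.

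For $\NP$-hardness I would reduce from \textsc{3-SAT}. Let $\phi$ be a 3-CNF with $n$ variables and $m$ clauses, and let $k$ be the smallest index with $f(k)>f(k+1)$, which exists because $\calR$ is not the utilitarian rule. The construction proceeds in three phases: (i) a short equalizing prefix of $k-1$ rounds in which every voter approves every candidate, so that all voters share the same satisfaction $k-1$; (ii) a per-variable gadget in which the manipulator $v$ has exactly one opportunity to generalized free-ride, the choice encoding the truth value of $x_i$; (iii) a final verification round containing a special candidate $c$ together with one clause-candidate $c_j$ per clause. The approvals in (iii) are designed, just as in the proofs of Theorems~\ref{thm:WDThiele} and~\ref{thm:seqFreeridingThiele}, so that $c$ wins iff the encoded assignment satisfies $\phi$, and $v$ strictly benefits iff $c$ wins.

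The key simplification over Theorem~\ref{thm:seqFreeridingThiele} is that generalized free-riding allows the winner of the manipulated round to change, as long as it remains truthfully approved by $v$. This removes the need for a constant plateau of $f$: each variable gadget is designed so that in a designated round two candidates $a$ and $b$ are approved by $v$ and tied (or made tied by $v$'s vote), so that $v$'s decision whether to drop approval for the current winner propagates into the following rounds of the gadget without needing to preserve the winner exactly. The positivity assumption $f(i) > 0$ for all $i$ plays two roles: it guarantees that in any round the contribution of every voter to the Thiele score is strictly positive, so that small shifts in satisfaction translate into a strictly signed score difference via $f(k) > f(k+1)$; and it keeps the clause-candidates $c_j$ competitive in the final round, so that an unsatisfied clause can actually block $c$.

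The main obstacle will be designing the variable gadget uniformly in $f$. Without a constant segment we cannot balance scores exactly, so the plan is to pad each gadget with auxiliary voters whose current satisfactions are tuned so that the score differences relevant to the gadget depend only on the single inequality $f(k) > f(k+1)$ and on the positivity of $f$, both of which we have by assumption. Once such a robust gadget is built, concatenating $n$ copies together with the equalizing prefix and the clause-checking round yields the reduction, and verifying that $v$ can generalized-free-ride in the resulting election iff $\phi$ is satisfiable follows by a round-by-round simulation analogous to that of Theorem~\ref{thm:seqFreeridingThiele}.
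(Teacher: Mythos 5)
Your outline follows the same route as the paper (reduction from \textsc{3-SAT}, an equalizing prefix of $k-1$ rounds, one choice point per variable where generalized free-riding by $v$ encodes the truth value, and a final clause-checking round in which $c$ wins iff the encoded assignment satisfies $\phi$), and your two intended uses of the positivity of $f$ are indeed the right ones. However, there is a genuine gap: the heart of the reduction --- the explicit variable gadget --- is exactly the part you defer (``the main obstacle will be designing the variable gadget uniformly in $f$''), and the padding idea you sketch is not developed far enough to constitute a proof. Concretely, three things must be engineered and verified, and none of them follows from the sketch alone: (i) in each variable round the manipulator must be able to force either of two truthfully approved candidates but nothing else that helps her; the paper achieves this with a single round per variable containing $c$, $\bar c$, a blocker candidate $c_0$ that $v$ does \emph{not} approve (so making $c_0$ win is excluded by the very definition of generalized free-riding), and fresh variable-voters who have never won before, so that the relevant scores are $2f(k)+f(s_v+1)+f(s_w+1)$ versus $2f(k)+f(s_w+1)$ and the comparison needs only $f(s_v+1)>0$, with no padding voters whose satisfactions must be ``tuned'' to $f$; (ii) $v$'s satisfaction over rounds $1,\dots,k+n-1$ must be invariant under any admissible manipulation (she approves both $c$ and $\bar c$, so she wins each such round no matter which of them she forces), so that her only possible gain is the last round --- your sketch never argues this, yet without it the equivalence ``$v$ benefits iff $\phi$ is satisfiable'' does not follow; (iii) the final round must be calibrated (in the paper via the auxiliary voters $u_1,\dots,u_6$ and $w$) so that an all-false clause candidate scores $6f(k)+f(k+n)$ against $c$'s $4f(k)+2f(k+1)+f(k+n)$, which is where the single inequality $f(k)>f(k+1)$ is actually cashed in.

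Two further points. Your worry about propagation ``into the following rounds of the gadget'' is a red herring: in the paper each gadget is a single round, and the encoded choice influences only the final round, through the satisfactions (hence weights $f(k)$ versus $f(k+1)$) of the variable voters. And you omit the normalization that the truthful outcome must not already be winning for $v$: the paper assumes w.l.o.g.\ that $\phi$ is not satisfied by the all-true assignment, because tie-breaking under the honest ballot sets every variable to true; without some such assumption the claimed equivalence can fail on satisfiable instances.
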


\begin{proof}
Fix an $f$-Thiele rule $\calR$ satisfying the conditions of the theorem. First, notice that \genFreeridingProb{} is in \NP,
as we can guess an insincere approval ballot for the manipulator and check whether it 
improves her satisfaction (and is an instance of generalized free-riding) in polynomial time.

Now we show hardness by a reduction from \textsc{3-SAT}. In the following,
recall that, in every round $i$, a voter $v$ gives each of her 
approved candidates an extra score of $f(\sat(v,\bar w_1^{i-1})+1)$, where $\bar w_{1}^{i-1}=w_1,\dots w_{i-1}$, and the candidate
with the highest score wins.

Then, let $\phi$ be a 3-CNF with $n$ variables and $m$ clauses.
We assume w.l.o.g.\ that $\phi$ is not satisfied
by setting all variables to true and that each clause contains 
exactly three literals.
Furthermore, let $k \in \mathbb N$ be the smallest $k$ such that $f(k+1)<f(k)$.
As $\calR$ is not the utilitarian rule, such a $k$ indeed exists and is constant in the size of $\phi$. 
We construct a \genFreeridingProb{}
instance with $k+n$ rounds as follows:
For every variable $x_i$ there are $4$ voters $v^1_i$, $v^2_i$, $\bar v^1_i$ and $\bar v^2_i$. 
Furthermore, we add nine voters $v_0^1$, $v_0^2$, $w$, $u_1,\dots, u_6$. 
Finally, we add another voter~$v$, who will be the distinguished voter that tries
to manipulate.
In the first $k-1$ rounds, there are two candidates $c$ and $\bar c$.
In the following $n$ rounds, there are three candidates $c_0$, $c$ and $\bar c$ plus one candidate
$c_{v^*}$ for every $v^* \in N$.
In round $k+n$, we have $m+1$ candidates $c, c_1, c_2 \dots, c_m$
plus one candidate $c_{v^*}$ for every $v^* \in N$.
We assume that if ties need to be broken between $c_0$ and another candidate,
then $c_0$ is selected and if a tie between $c$ and a candidate 
other than $c_0$ needs to be broken, then $c$ wins.

In the first $k-1$ rounds, all voters approve both candidates. Hence, $v$ cannot increase her satisfaction by manipulating, and all voters win in each round.
Thus, the satisfaction of every voter after the first $k-1$ rounds will be exactly $k-1$.

We continue with $n$ rounds such that, in round $i$, $v^1_0$ 
and $v_0^2$ approve $c_0$, $v^1_i$ and $v_i^2$ approve $c$, $\bar v^1_i$ and $\bar v^2_i$
approve $\bar c$, $v$ approves both $c$ and $\bar c$ and $w$ approves $c, \bar c$ and $c_0$.
Finally, all other voters $v^* \in N$ only approve their candidate
$c_{v^*}$ except in round $k$, where $u_1$ and $u_2$ additionally approve $c_0$, $c$ and $\bar c$.

Then, in round $k+n$, $v$ and $u_1, \dots, u_6$ approve $c$.
Furthermore, for every candidate $c_i$ with $1 \leq i \leq m$,
$\bar v^1_j$ and $\bar v^2_j$ approve $c_i$ if and only if variable $x_j$ appears
positively in $C_i$ and $v^1_j$ and $v_j^2$ approve $c_i$
if and only if variable $x_j$ appears negatively in $C_i$.
Additionally, $w$ approves $c_1, \dots, c_m$.
All other voters approve only their candidate.

We claim that in rounds $k$ to $k+n-1$ \emph{(i)} $c$ wins if $v$ approves $c$ and $\bar c$ (i.e., $v$ does not misrepresent her preferences),
\emph{(ii)}
either $c$ 
or $\bar c$ becomes the winner if $v$ approves only one of these candidates, and
\emph{(iii)} $c_0$
wins if $v$ approves neither $c$ nor $\bar c$.
We show the claim by induction.
Up until round $k-1$, all voters have gathered the same satisfaction $k-1$, and hence in round $k$ each voter contributes to
the score of their approved candidates with the same value of $f(k)$. The only candidates that are approved
by more than one voter are $c_0$, $c$ and $\bar c$, where
$c_0$ is approved by five voters, while $c$ and $\bar c$ are 
both approved by six voters. Therefore, by our assumption
about tiebreaking, $c$ is the winner in round $k$. 
Furthermore, if $v$ misrepresents her preferences 
and votes only for either $c$ (resp.\ $\bar c$), then 
$c$ (resp.\ $\bar c$) is the unique winner in round $k$.
Finally, if $v$ approves neither $c$ nor $\bar c$, then $c_0$
wins by our assumption on tiebreaking. Observe that
$v$ is not allowed to make $c_0$ win,
by the definition of generalized free-riding.

Now assume the claim holds for rounds $k, \dots i-1$.
Then, in round $i$ the only candidates that are approved
by more than one voter are again $c_0$, $c$ and $\bar c$.
To be more precise, $c_0$ is approved by $v^1_0, v_0^2$ and $w$, $c$ is approved by $v_i^1$, $v_i^2$, $w$ and $v$
and $\bar c$ by $\bar v_i^1$, $\bar v_i^2$, $w$ and $v$.
Crucially, $v_0^1$, $v_0^2$, $v^1_i$, $v^2_i$, $\bar v^1_i$ and $\bar v^2_i$
have not won in any of rounds $k, \dots i-1$, as they never approved of $c$ or $\bar c$ in these rounds;
hence, their satisfaction at this point is still $k-1$.
Now, let $s_v^{i-1}$ and $s_w^{i-1}$ be the satisfaction of $v$ and $w$ up to round round $i-1$, respectively.
Then, both $c$ and $\bar c$ have a score of $2f(k) + f(s_v^{i-1}+1) + f(s_w^{i-1}+1)$,
$c_0$ has score $2f(k) + f(s_w^{i-1}+1)$ whereas all other alternatives can have
a maximal score of $f(k)$.
As we know that $f(s_v^{i-1}+1) > 0$,
this implies, by our assumption
about tiebreaking, that $c$ is the winner in round $i$. 
Furthermore, as before, if $v$ misrepresents her preferences,
she can make $c$ resp.\ $\bar c$ the unique winner
and if she approves neither $c$ nor $\bar c$, then $c_0$
wins by our assumption on tiebreaking. Observe again that $c_0$ cannot win here,
by definition of generalized free-riding.

We can interpret the winners in the rounds $k, \dots k+n-1$ as 
a truth assignment $T$ by setting $x_i$ to true if $c$ wins in round $i$
and to false if $\bar c$ wins in round $i$ (observe that $c_0$ can never win by the previous arguments).
Then, we claim that $c$ wins in round $k+n$ if and only if
$C_j$ is satisfied by this truth assignment:
All voter-candidates $c_{v^*}$ are approved by at most one voter with satisfaction $k-1$,
and hence have a score of at most $f(k)$.
The satisfaction of $v^1_j$ and $v^2_j$ is $k-1$ if $x_j$ is set to false in $T$ and
$k$ if $x_j$ is set to true in $T$.
Similarly, the satisfaction of $\bar v^1_j$ and $\bar v^2_j$ is $k-1$ if $x_j$ is set to true in $T$ and
$k$ if $x_j$ is set to false in $T$.
Finally, the satisfaction of $w$ is $k+n-1$.
Hence, the approval score of $c_i$ is $6f(k) + f(k+n)$ if all literals 
in $C_j$ are set to false and at most $4f(k) + 2f(k+1) + f(k+n)$ if
at least one literal is set to true.
On the other hand, the satisfaction of $u_1$ and $u_2$ is $k$, the satisfaction of 
$u_3, \dots, u_6$ is $k-1$ and the satisfaction of $v$ is $k+n-1$. 
Hence, the approval score of $c$ is $4f(k) + 2f(k+1) + f(k+n)$.
As we assumed $f(k+1) < f(k)$, we get that $4f(k) + 2f(k+1) + f(k+n) < 6f(k) + f(k+n)$.
Therefore, if there is a clause $C_i$ for which no literal is set to true,
then $c_i$ has a higher approval score than $c$ and hence, $c$ is not a winner 
in round $k+n$. On the other hand, if for every clause at least one literal 
is set to true, then $c_1, \dots, c_m$ have at most the same score as $c$ and $c$ wins by
tiebreaking.

Now, the honest ballot of $v$ leads to the truth assignment in which
every variable is set to true by tiebreaking. By assumption, this assignment does not 
satisfy $\phi$ and hence $c$ does not win in round $k+n$. 
By construction, the satisfaction of $v$ equals $k+n-1$ in this case.
Moreover, as $v$ cannot manipulate in the first $k-1$ rounds,
the only way that $v$ can gain more satisfaction
is by forcing the winners in rounds $k, \dots k+n-1$ to form a 
satisfying truth assignment without allowing $c_0$ to win any round.
Hence, $v$ can manipulate via generalized free-riding if and only if
$\phi$ is satisfiable. 
\end{proof}

\begin{restatable}{theorem}{seqFreeridingOWAGen}
\genFreeridingProb{} is \NP-complete for every sequential $\alpha$-OWA rule such that, for all $n$, $\alpha=(\alpha_1,\dots,\alpha_n)$ is nonincreasing and $\alpha_1>\alpha_n$.\label{thm:seqFreeridingOWAGen}
\end{restatable}

Observe that we give three different reductions, depending on the form of the OWA vector. We present here one case; the others can be found in Appendix~\ref{app:complexity}.

\begin{proof}[Proof of Theorem~\ref{thm:seqFreeridingOWAGen}]
Fix an $\alpha$-OWA rule $\calR$ satisfying the conditions of the theorem. First, notice that \genFreeridingProb{} is in \NP,
as we can guess an insincere approval ballot for the manipulator and check whether it 
improves her satisfaction (and is an instance of generalized free-riding) in polynomial time.

Next, we show hardness by a reduction from \textsc{3-SAT}. 
Let $\phi$ be a 3-CNF with $n$ variables and $m$ clauses. We refer to the $j$-th clause as $C_j$.
We assume w.l.o.g.\ that $\phi$ is not satisfied
by setting all variables to false and that each clause contains 
exactly three literals.
We will construct an instance of \genFreeridingProb{} with $2n+5$ voters.
More specifically, there are two voters $v_i$ and $\bar v_i$ for each variable $x_i$, four 
voters $u_1,\dots,u_4$, plus one distinguished voter $v$ who will try to manipulate. 

Given the weight vector $\alpha=(\alpha_1,\dots,\alpha_{2n+5})$, we distinguish three (not necessarily exclusive) cases: (1) $\alpha_2>\alpha_{n+5}$, (2) $\alpha_{n+1}>\alpha_{2n+5}$, and (3) $\alpha_{2}=\alpha_{2n+5}$. Since $\alpha_1>\alpha_{2n+5}$, at least one case must be true. In the following,
we will give a reduction for the first case ($\alpha_2>\alpha_{n+5}$). The other cases are similar; we discuss them in Appendix~\ref{app:complexity}.

We construct an instance with $n+2$ rounds as follows.
In the first $n+1$ rounds, there are two candidates: $c$ and $\bar c$. In
the last round, there are $m+1$ candidates, namely $c,c_1,\dots,c_m$. We assume that, in the case of ties, $c$ always loses.

In the first round, everybody votes for $\bar c$ except for $v$ and $u_1$, who vote for $c$. Here, $\bar c$ wins by tiebreaking, and $v$ cannot manipulate.

In each round $i$ with $i\in\{2,\dots,n+1\}$, voter $v_i$ votes for candidate $\bar c$, voter $\bar v_i$ for candidate $c$; everyone else votes for both candidates. We claim that $v$ can manipulate in every such round $i$ and force the win of either $c$ or $\bar c$. We show so by induction. In round $2$,
suppose that $v$ votes only for $c$ (the case where she votes for $\bar c$ is analogous). Then, if $c$ were to win, the satisfaction vector would be of form $(1,1,1,2,\dots,2)$ (everyone but $v_1$ wins). On the other hand, if $\bar c$ wins, then it would be of form $(0,1,1,2,\dots,2)$ (everyone but $\bar v_1$ and $v$ win). Hence, $c$ wins. Observe that if $v$ votes truthfully, $\bar c$ wins by tiebreaking. Now, suppose this holds up to a round $i$. Before round $i+1$, $v$ has won $i-1$ rounds (all but the first one), whereas voters $u_1,\dots,u_4$, as well as any pair of voters $v_j,\bar v_j$ (with $j\geq i$) have won $i$ rounds. Furthermore, for every pair $v_j,\bar v_j$ (with $j<i$), exactly one voter won $i-1$ rounds while the other $i$ rounds. Suppose again that $v$ votes for $c$ (the case where she votes for $\bar c$ is similar). Then, observe that, if $c$ or $\bar c$ win,s the satisfaction vectors (excluding $v$) would be completely symmetric (and every voter would have at least a score of $i$); however, if $\bar c$ wins, $v$ would have a satisfaction of $i-1$, whereas if $c$ wins, she would get a satisfaction of $i$. Hence, since $\alpha_1>0$, here $c$ wins. Observe again that if $v$ votes truthfully, then $\bar c$ wins.

Consider the final round. Up to here, $v$ and $u_1$ have won $n$ rounds (they lost the first round), while $u_2,u_3,u_4$ have won $n+1$ rounds. Furthermore, every voter $v_i$ has won $n$ rounds if $c$ won in round $i+1$ and $n+1$ times otherwise (and conversely for $\bar v_i$ and $\bar c$). In this round, voters $u_1,\dots,u_4$ approve of all candidates but $c$, voter $v_i$ (resp. $\bar v_i$) approves of $c$ and every candidate $c_j$ such that $x_i\not\in C_j$ (resp. $\bar x_i\not\in C_j$). Finally, voter $v$ approves of $c$. Observe that we can interpret $c$ winning in round $i+1$ as setting $x_i$ to true, and $\bar c$ winning as setting $x_i$ to false. We claim that $c$ wins in the last round if and only if this assignment satisfies $\phi$. To see this, consider that, if $c$ were to win, the satisfaction vector would be:
\begin{equation*}
  (n,\underbrace{n+1,\ldots,n+1}_{n+4\ \text{times}},\underbrace{n+2,\ldots,n+2}_{n\ \text{times}}).
\end{equation*}
Let us call this vector $s$. Consider now a candidate $c_j$ and its corresponding clause $C_j$. If all three of its literals are unsatisfied, then the corresponding voters all have satisfaction $n+1$. Hence, if $c_j$ were to win in this case, the satisfaction vector would again be exactly $s$. By our assumptions on tiebreaking, here $c_j$ would win against $c$. Furthermore, suppose that either one, two, or three of the literals have been satisfied. Then, the vectors are, respectively:
\begin{align*}
  &(n,n,\underbrace{n+1,\ldots,n+1}_{n+2\ \text{times}},\underbrace{n+2,\ldots,n+2}_{n+1\ \text{times}})\\
  &(n,n,n,\underbrace{n+1,\ldots,n+1}_{n\ \text{times}},\underbrace{n+2,\ldots,n+2}_{n+2\ \text{times}})\\
  &(n,n,n,n,\underbrace{n+1,\ldots,n+1}_{n-2\ \text{times}},\underbrace{n+2,\ldots,n+2}_{n+3\ \text{times}})\\
\end{align*}
Let these vectors be $s_1$, $s_2$ and $s_3$, respectively. One can show that if $\alpha_{2}>\alpha_{n+5}$ the dot product between $\alpha$ and each of these three vectors would be strictly lower than the dot product between $\alpha$ and $s$. For example:
\begin{multline*}
    s\cdot\alpha >  s_1\cdot\alpha
    \iff \alpha_1n+\left(\sum_{i=2}^{n+5}\alpha_i(n+1)\right)+\left(\sum_{i=n+6}^{2n+5}\alpha_i(n+2)\right) >\\
    (\alpha_1+\alpha_2)n+\left(\sum_{i=3}^{n+4}\alpha_i(n+1)\right)+\left(\sum_{i=n+5}^{2n+5}\alpha_i(n+2)\right) \iff \\
    (\alpha_2+\alpha_{n+5})(n+1) > \alpha_2n+\alpha_{n+5}(n+2)
    \iff \alpha_{2} > \alpha_{n+5}.
\end{multline*}
The other two cases are similar. Hence, if $C_j$ is satisfied, candidate $c_j$ cannot win against $c$. Consequently, if all clauses are satisfied, candidate $c$ wins.

Now, if $c$ wins in the last round, then the satisfaction of $v$ would be $n+1$; if $c$ loses, it would be $n$. Notice also that $v$ cannot raise her satisfaction by manipulating in the final round. Furthermore, if $v$ always submits her true preferences, then by tiebreaking $\bar c$ would win in every round $i$ with $i\in\{2,\dots,n+1\}$. By assumption, this would not satisfy $\phi$, and hence $c$ would not win in the last round. Therefore, $v$ has an incentive to manipulate in these rounds to try and choose a satisfying assignment for $\phi$. It follows that $v$ can manipulate via generalized free-riding if and only if $\phi$ is satisfiable, so we are done.
\end{proof}

\section{Numerical Simulations}\label{sec:experiments}

\graphicspath{{figures/}}

So far, we have seen that sequential Thiele and OWA rules are generally susceptible to free-riding.
However, we have also seen that free-riding can be detrimental to free-riders, i.e., their satisfaction can decrease.
In this section, we use numerical simulations\footnote{The source code as well as additional plots are available at \url{https://github.com/martinlackner/free-riding}.} to shed more light on the \emph{possibility, effect, and risk} of free-riding with sequential rules.
We answer three main questions:
\begin{enumerate}
  \item How often do voters have the possibility to increase their (final) satisfaction by free-riding? How many instances contain issues in which free-riding leads to a lower satisfaction for the free-rider?
  \item What is the average risk of free-riding? That is, what is the likelihood of free-riding resulting in a negative outcome?
  \item Is there a difference in the effect of free-riding if it is done by a voter belonging to a majority or minority?
\end{enumerate} 
To be able to give robust answers, we study these questions in a range of different models.

\subsection{Model setup and parameter values}
We assume that voters and candidates are points in a $2$-dimensional space; this
is known as the 2d-Euclidean model \cite{spatial,elk-fal-las-sko-sli-tal:c:2d-multiwinner,bre-fal-kac-nie2019:experimental_ejr,god-bat-sko-fal:c:2d}.
For each issue, we sample candidate points from a uniform distribution on the $[-1,1]\times[-1,1]$ square. That is, candidates are different in each issue, independent from each other and across issues.
In contrast, voters' points are the same for all issues. We consider three distributions for voters:\footnote{The unbalanced and many groups scenarios have been proposed by \citet{ChandakGP24}. We use the same parameter values to increase comparability, even though their election model is different from ours.}
\begin{itemize}
  \item \emph{square}: A uniform distribution on the $[-1,1]\times[-1,1]$ square, i.e., voters and candidates are independently drawn from the same distribution.
  \item \emph{many groups}: Voters are split into four groups, the first three comprising each of 20\%, the last of 40\% of voters. Each group is centered around a different point, namely one of $(\pm 0.5, \pm 0.5)$. Voters' $x$ and $y$ coordinates are drawn independently from a normal distribution with standard deviation $0.2$.
  \item \emph{unbalanced}: Here there are two disjoint groups, one comprising of 20\% of voters, the other of $80\%$. As before, the voters' locations are sampled from a normal distribution, with centers $(0.5,0.5)$ and $(-0.5,-0.5)$, respectively, and standard deviation $0.1$.
\end{itemize}

We transform these coordinates into approval ballots as follows:
A voter approves the closest candidate as well as any candidate that is similarly close (within 1.5 times the distance).

Our default choice are multi-issue elections with $n=20$ voters, $k=20$ issues, and $5$ candidates per issue. Later on we investigate the effect of this parameter choices.
Finally, we sample $2000$ elections for each voter distribution and choice of parameters.

\subsection{Considered voting rules and forms of free-riding}

In our experiments, we consider a subclass of Thiele methods and a subclass of OWA rules.
For better comparison, we parameterize both classes with a parameter~$x$ (albeit this parameter has a different interpretation in both classes).
We consider $\alpha$-OWA rules with
\begin{align*}
\alpha_x&=(\underbrace{1,\dots,1}_{n-x\text{ many}},\frac{1}{kn}, \frac{1}{k^2n^2}, \dots) & \text{for } x \in \{0, 1,\dots, n-1\}.
\end{align*}
Note that for decreasing $x$, more and more voters receive full consideration (weight~1).
For $x=0$, all voters receive full consideration and we obtain the utilitarian rule.
In contrast, for $x=n-1$, only the least satisfied voter receives full consideration; this is the sequential leximin rule (cf. Proposition~\ref{prop:leximinOWA}). 

Further, we consider $f$-Thiele rules with
\begin{align*}
  &f_x(i) = \frac{1}{i^x} & \text{ for } x\in \{0, 0.25, 0.5, 0.75, \dots\}.
\end{align*}  
Note that for $x=0$ this is the utilitarian rule, for $x=1$ it is sequential PAV, and for increasing $x$ it approaches the sequential leximin rule, since $f_x$ diminishes quickly.

We distinguish two ways in which free-riding can occur. We speak of \emph{single-issue free-riding} if a voter may free-ride once in a given election. In contrast, with \emph{repeated free-riding} a voter free-rides whenever it is possible in a given election. Single-issue free-riding models situations where free-riding occurs rarely, e.g., if there is limited information available about expected outcomes, or if there is a high social cost for free-riding. 
In contrast, repeated free-riding models situations where free-riding is a viable strategy and and a free-riding voter can always exploit free-riding opportunities.

Note that we always assume that there is only one free-rider among all voters. We leave the study of free-riding group effects as a topic for future work.

\subsection{Results}
\subsubsection{Possibility and risk of free-riding}\label{sec:results-possibility}

  \begin{figure*}
  \centering
  \includegraphics[width = 0.49\textwidth]{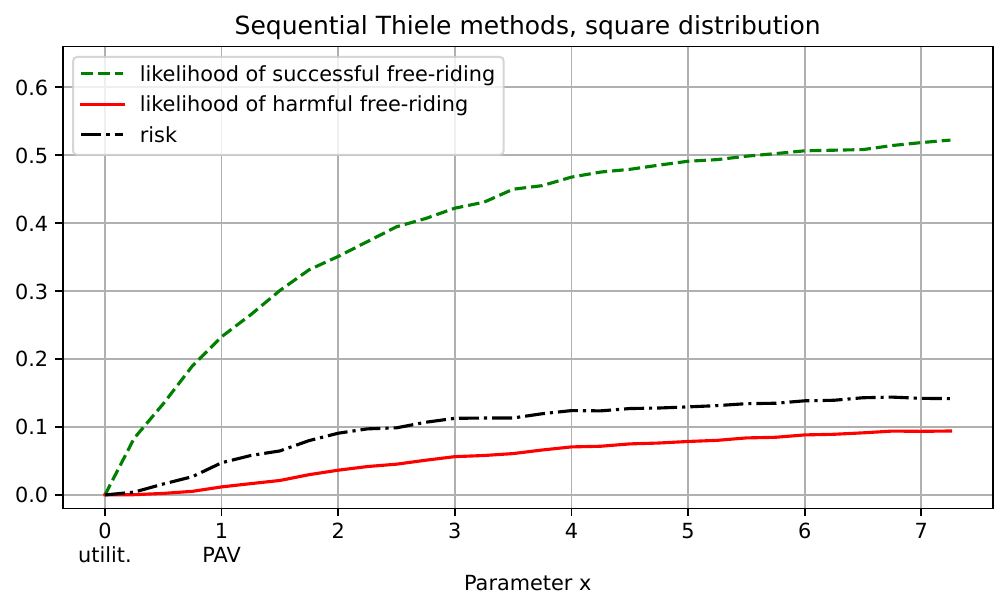}
  \includegraphics[width = 0.49\textwidth]{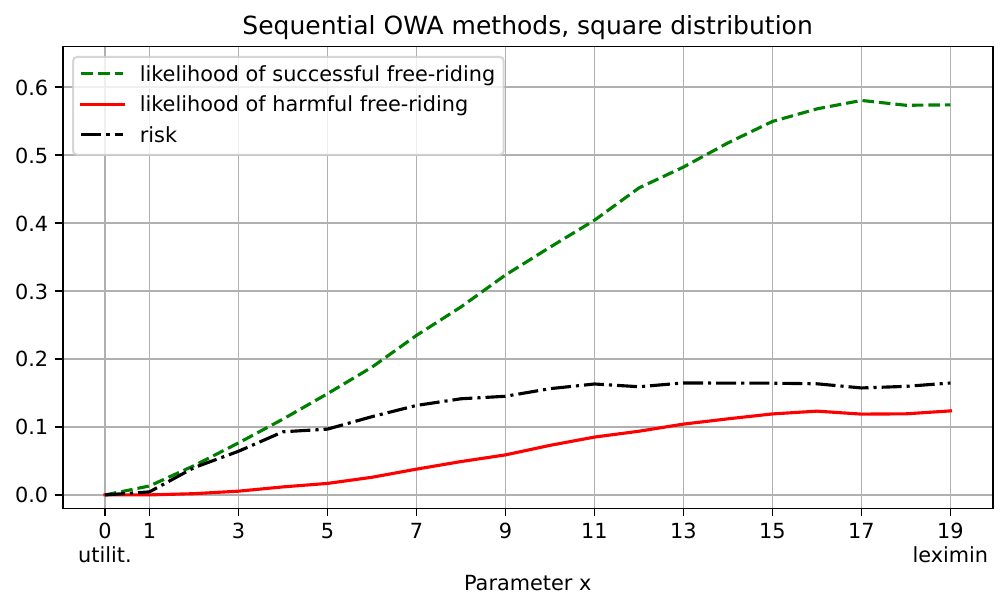} 
  
  \includegraphics[width = 0.49\textwidth]{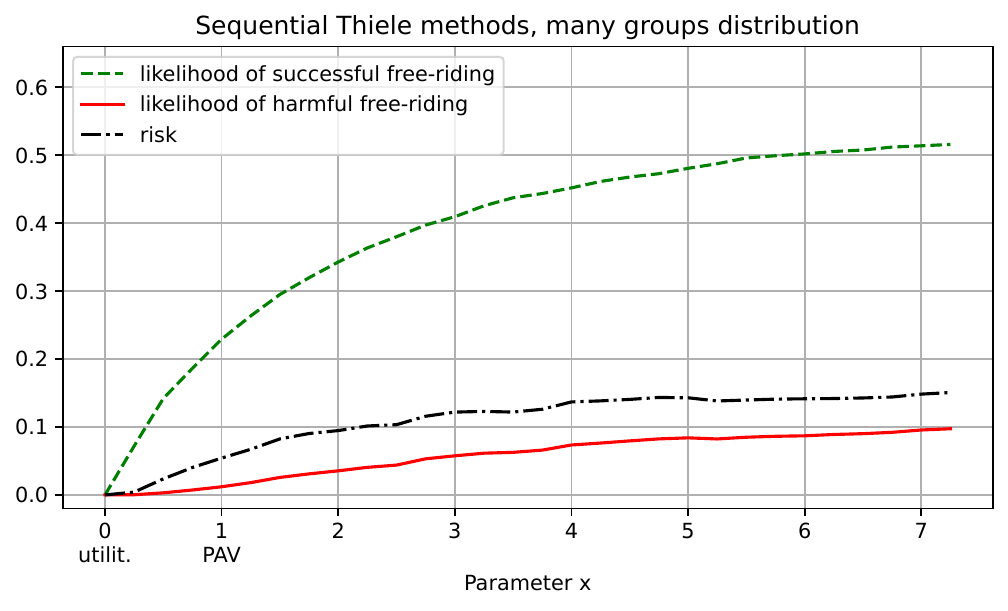}
  \includegraphics[width = 0.49\textwidth]{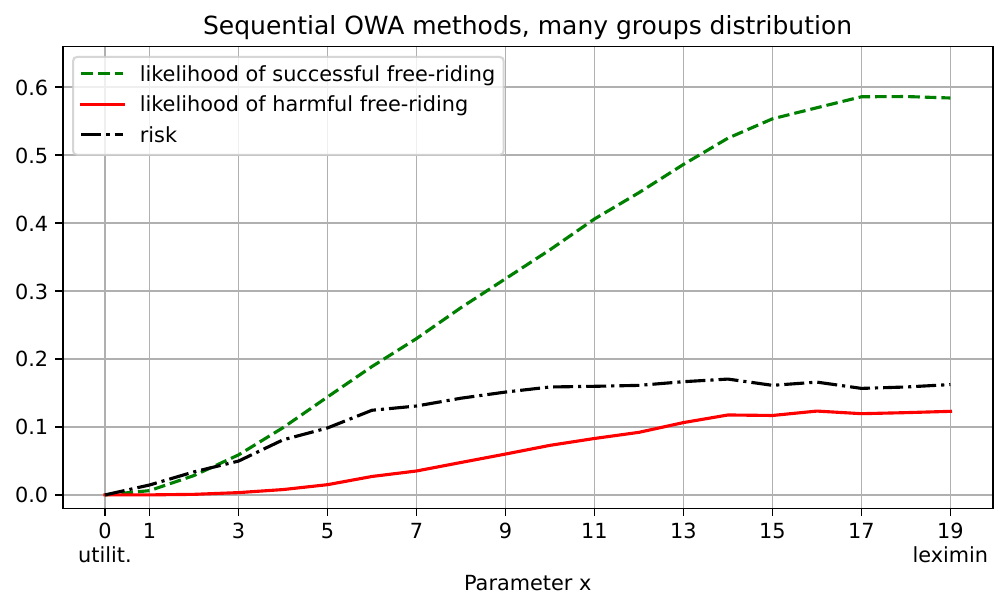} 
  
  \includegraphics[width = 0.49\textwidth]{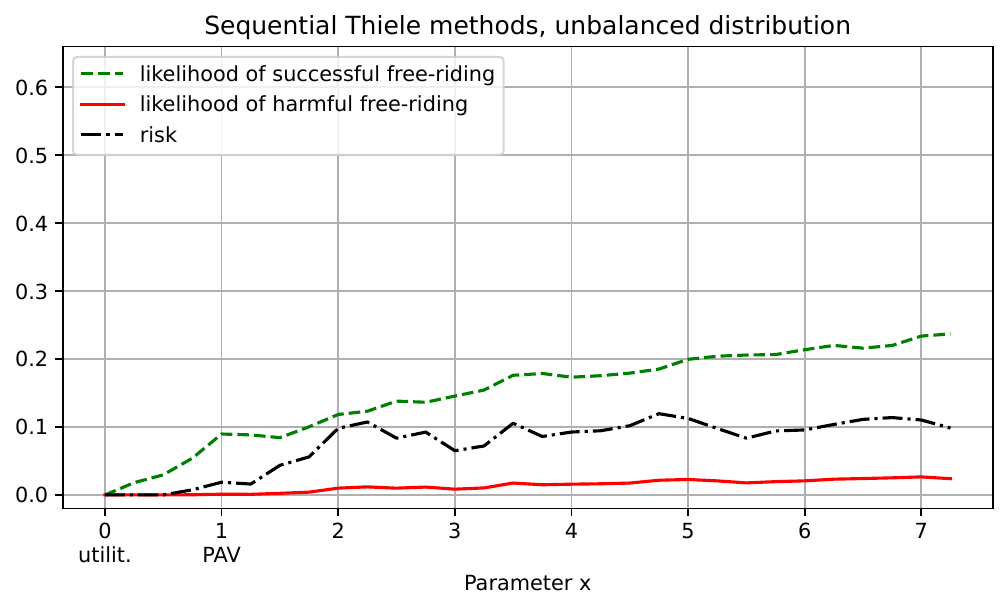}
  \includegraphics[width = 0.49\textwidth]{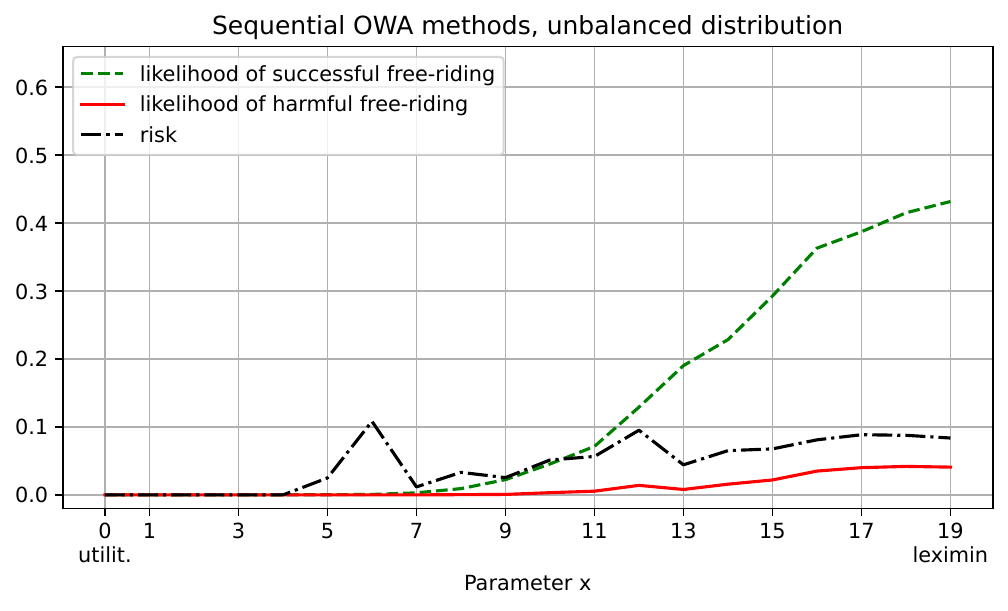} 
  \caption{Possibility and risk of single-issue free-riding. The left column of diagrams show sequential Thiele methods, the right sequential OWA methods. The three rows correspond to the \emph{square}, \emph{many groups}, and \emph{unbalanced} distributions.}
  \label{fig:exp-single}
\end{figure*}

In the single-issue free-riding model, we speak of \emph{successful free-riding} for a given election, voter, and issue, if the voter can free-ride in the given issue and this increases her overall satisfaction (i.e., at the end of the election); we speak of \emph{harmful free-riding} if the voter can free-ride but this decreases her satisfaction.
Note that free-riding can also be neutral (with no change in satisfaction).
In the repeated setting, we speak of \emph{successful free-riding} for a given election and voter if the voter free-rides whenever possible and this increases her overall satisfaction; we speak of \emph{harmful free-riding} if this decreases her satisfaction.

Let us first consider the single-issue free-riding model. That is, for each multi-issue election, we iterate over all voters and all issues and check whether free-riding is possible (Definition~\ref{def:freeriding}). 
Figure~\ref{fig:exp-single} shows the proportion of instances where in at least one issue successful/harmful free-riding is possible (averaged over all voters).
(As successful and harmful free-riding may be possible in the same election -- albeit in different issues -- the proportion of successful and harmful free-riding may be larger than 1.)
In addition, it shows the risk of free-riding: we define the risk of a voter in an election
as the number of issues where harmful free-riding occurs divided by the number of issues where either successful or harmful free-riding occurs.

Let us discuss the results from Figure~\ref{fig:exp-single}. First, we see that the results for the \emph{square} and \emph{many groups} distributions are very similar. In contrast, the \emph{unbalanced} distribution yields different results. 
For all distributions, we clearly see that rules closer to the utilitarian rule ($x=0$) are less susceptible to free-riding than those closer to leximin (larger values of $x$). We also see that -- as expected --
the utilitarian rule is the only rule where free-riding is not possible (cf. Proposition~\ref{prop:util-cannot-manipulated}).
We note that this increase in susceptibility (with distance to the utilitarian rule) has also been observed by \citet{barrot2017manipulation} for arbitrary manipulations.
Both the proportion of voters that can successfully free-ride and those with the possibility of harmful free-riding grow with parameter~$x$.

The most important conclusion from this experiment is that the risk of single-issue free-riding is considerable:  for the square/many groups/unbalanced distributions, the risks are $2.7\%$/$4.0\%$/$0.7\%$ for PAV (Thiele, $x=1$), and $16.5\%$/$16.2\%$/$8.4\%$ for leximin (OWA, $x=19$).
This shows that harmful free-riding is not merely a theoretical possibility. In particular for more egalitarian voting rules (larger x-values in Figure~\ref{fig:exp-single}), this risk could indeed decrease the temptation of free-riding.

If we move to repeated free-riding, the definition of risk changes: here, the risk of a voter in a given election is either 0 (the outcome of repeated free-riding is positive), or 1 (the outcome is negative). Risk is averaged over all voters (for whom successful or harmful free-riding is possible) and over all elections. We see that the risk is small compared to the single-issue model. This is because the probability of successful free-riding in a single issue is generally larger than harmful free-riding (i.e., risk in the single-issue model is $
\leq 0.5$), the risk almost vanishes with sufficient repetitions.
Figure~\ref{fig:exp-rep} shows exemplarily the results for the \emph{square} distribution. We see much higher likelihood of successful free-riding with almost no risk involved.
Recall, however, that this form of free-riding requires the capability to reliably identify free-riding possibilities and to repeatedly exploit them.

\begin{figure*}
  \centering
  \includegraphics[width = 0.49\textwidth]{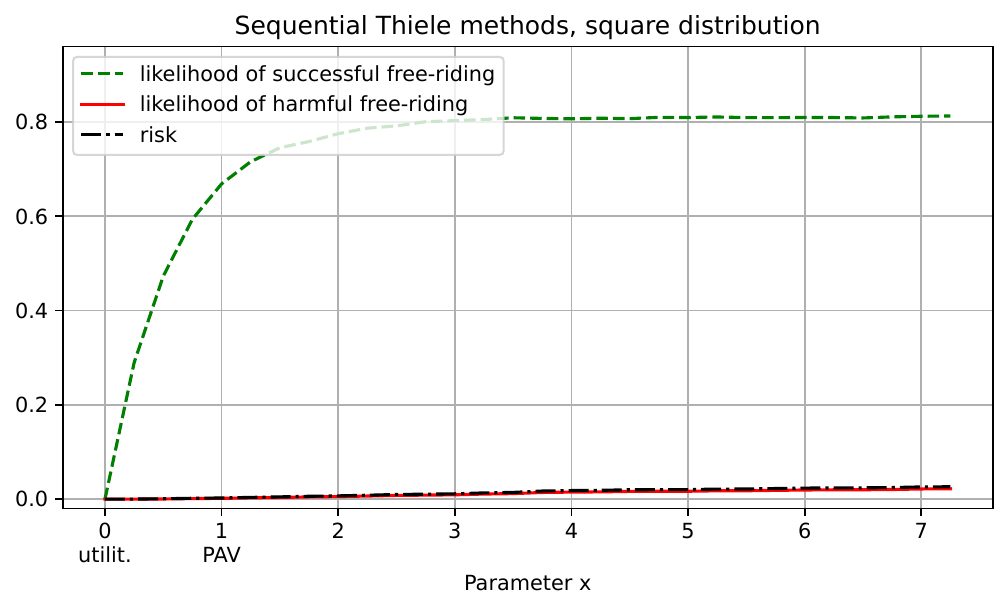}
  \includegraphics[width = 0.49\textwidth]{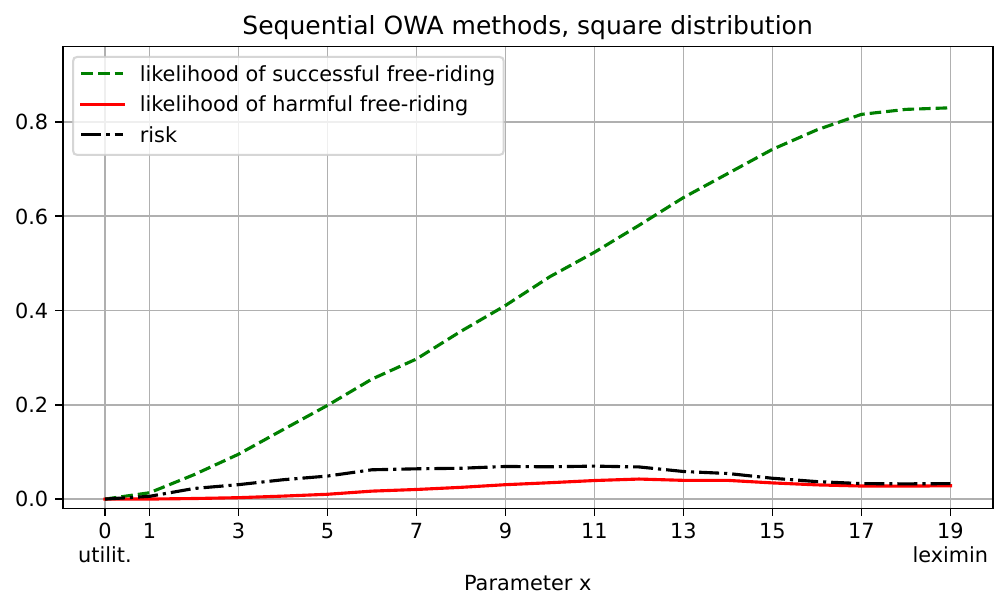} 
  \caption{Results for repeated free-riding with the \emph{square} voter distribution. In contrast to Figure~\ref{fig:exp-single}, we see a much higher likelihood of successful free-riding with very little risk.}
  \label{fig:exp-rep}
\end{figure*}

So far, we have not taken into account how much a voter can benefit from free-riding. To investigate this, we show in Figure~\ref{fig:exp-satisfaction} the expected change in satisfaction for a free-riding voter. As expected, repeated free-riding leads to a more pronounced change in satisfaction. Furthermore, the change in satisfaction grows with parameter $x$, i.e., with increasing distance to the utilitarian rule. This aligns with the increased possibility to free-ride (as we have seen in Figures~\ref{fig:exp-single} and~\ref{fig:exp-rep}).
Finally, note that in all cases, the change in satisfaction is not particularly large in comparison to the number of issues (20). Even with repeated free-riding when using the leximin rule, the expected number issues with an improvement is less than $1.6$.

\begin{figure*}
  \centering
  \includegraphics[width = 0.49\textwidth]{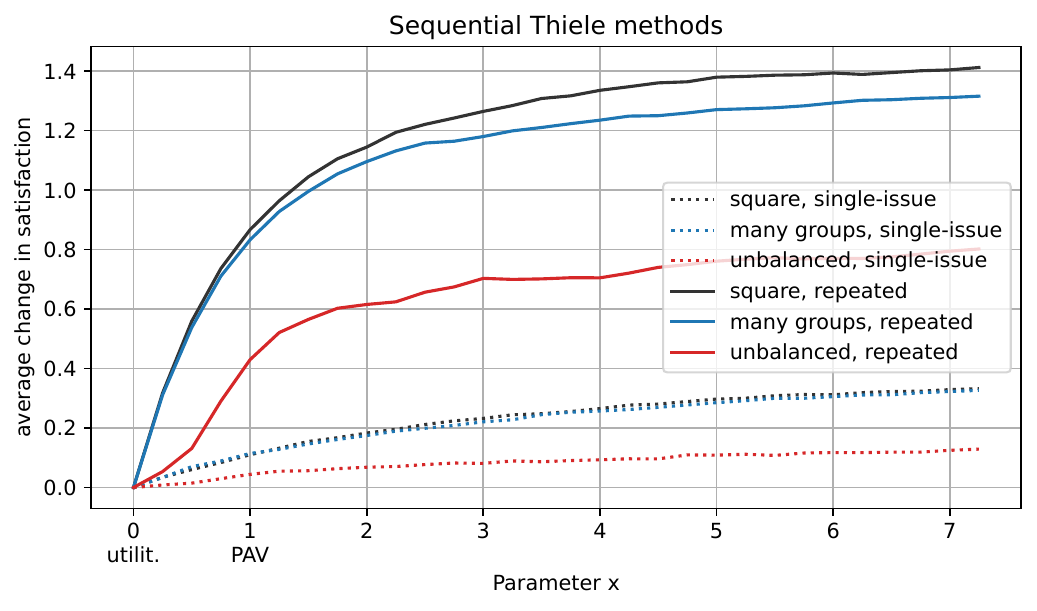}
  \includegraphics[width = 0.49\textwidth]{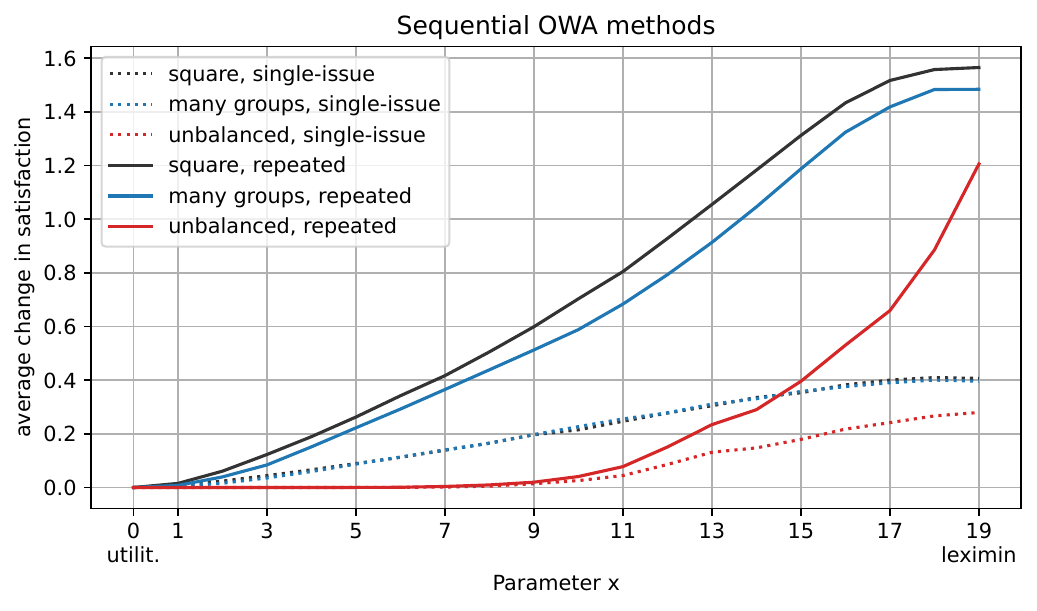} 
  \caption{Average change in satisfaction from free-riding, comparing different voter distributions as well as single-issue vs.\ repeated free-riding.}
  \label{fig:exp-satisfaction}
\end{figure*}

\subsubsection{Impact of model parameters}

Let us now briefly describe the impact of our chosen model parameters. 
In the following, we use our default setting ($n=20$ voters, $k=20$ issues, $5$ candidates per issue, \emph{square} distribution, single-issue free-riding) and vary a single parameter to observe its impact.

Increasing the number 
of voters (Figure~\ref{fig:exp-num_voters}) decreases the chance of voters being pivotal. In line with this observation, we see an overall decrease in both successful and harmful free-riding with an increase in voters. For a larger number of voters,
it would make sense to move to a model where groups of voters free-ride. This requires
additional assumptions about voter coordination (cf.\ the framework of iterative voting~\cite{meir2017iterative}).
\begin{figure*}
  \centering
  \includegraphics[width = 0.49\textwidth]{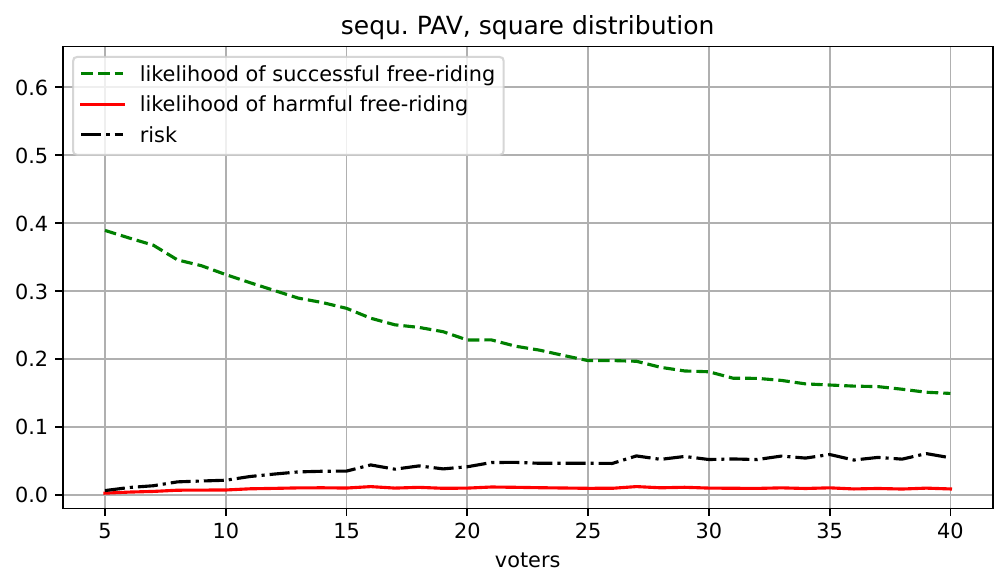}
  \includegraphics[width = 0.49\textwidth]{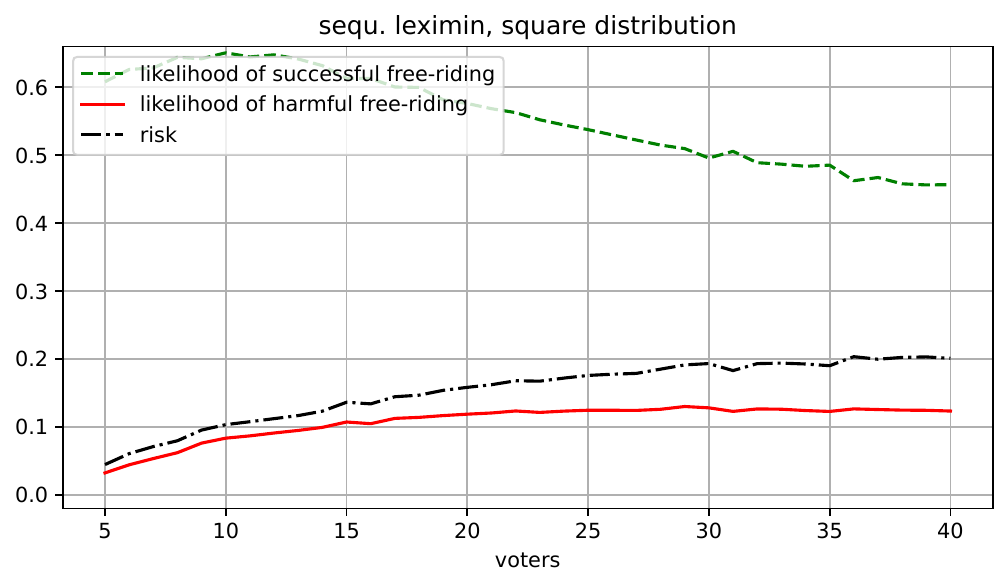} 
  \caption{Varying the number of voters (our default choice is $n=20$): an increase in the number of voters decreases the effectiveness of free-riding by a single voter.}
  \label{fig:exp-num_voters}
\end{figure*}

In Figure~\ref{fig:exp-num_cands}, we see that varying the number of candidates (per issue) has moderate impact. The increase in the likelihood of both successful and harmful free-riding can be explained by the increase in heterogeneity: as voters' preferences become more diverse and like-minded groups become smaller, a single free-riding action is more likely to change the outcome.
\begin{figure*}
  \centering
  \includegraphics[width = 0.49\textwidth]{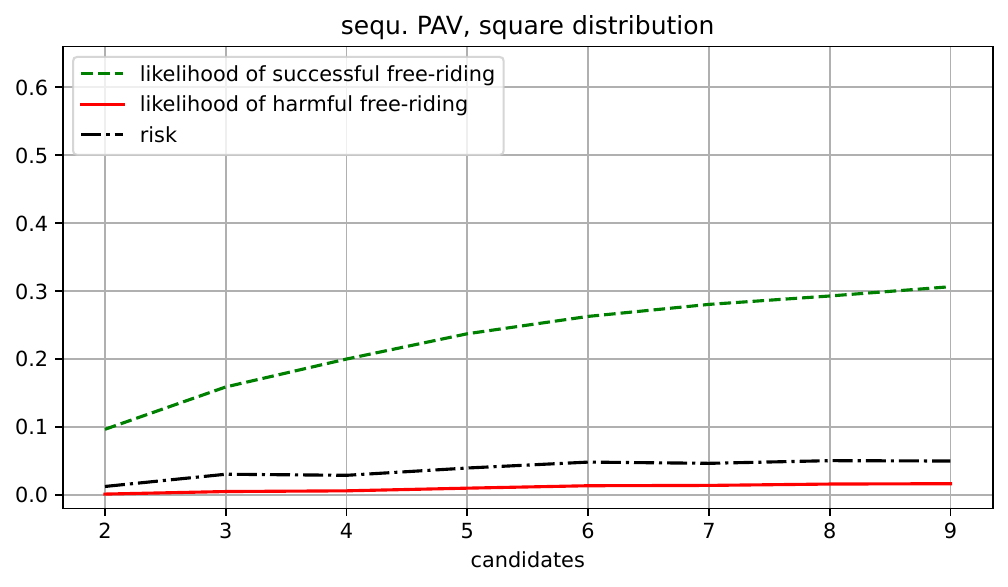}
  \includegraphics[width = 0.49\textwidth]{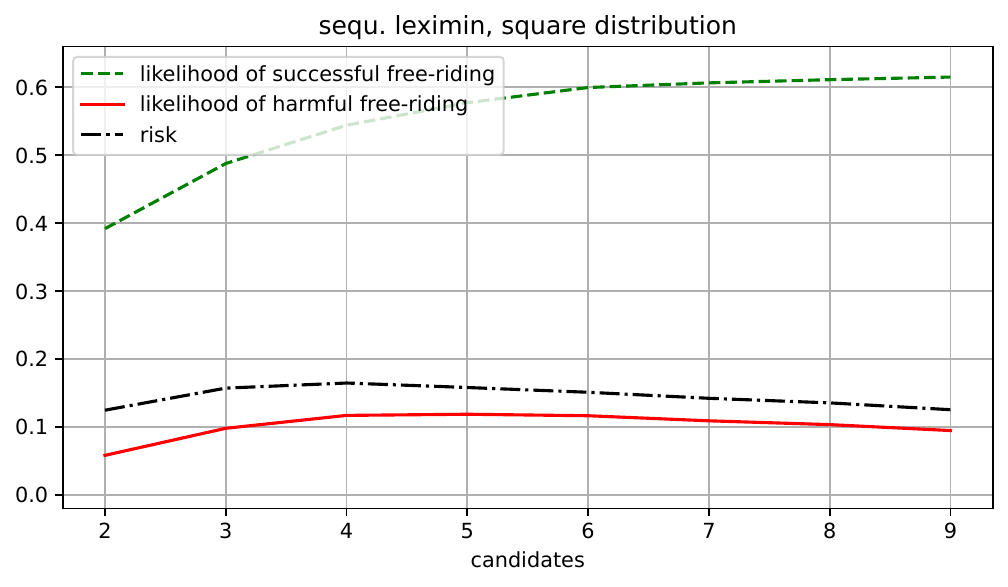}
  \caption{Varying the number of candidates (our default choice is $m=5$): more candidates per issue  increase the likelihood of free-riding.}\label{fig:exp-num_cands}
\end{figure*}

Finally, increasing the number of issues also increases the possibility of both successful and 
harmful free-riding, as some effects may only materialize in the long run. Note that also the risk of free-riding increases in a longer sequence of issues, in particular for sequential OWA methods.
\begin{figure*}
  \centering
  \includegraphics[width = 0.49\textwidth]{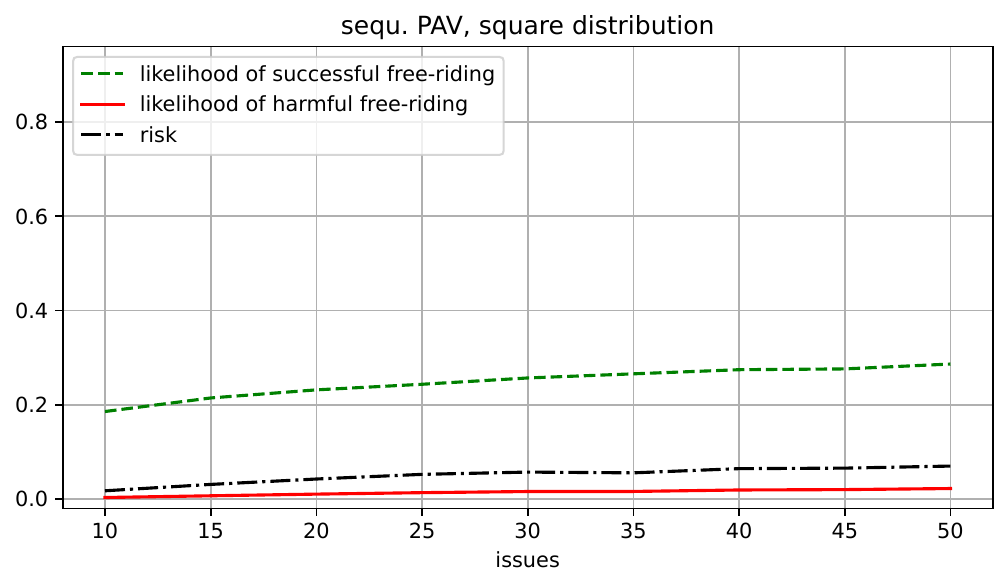}
  \includegraphics[width = 0.49\textwidth]{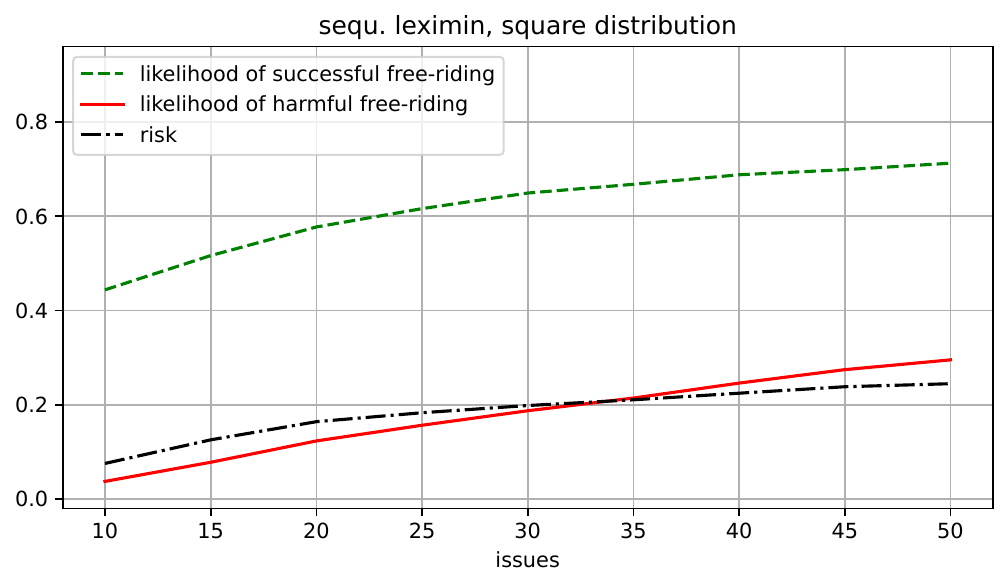} 
  \caption{Varying the number of issues (our default choice is $k=20$): a longer sequence of decisions increases the potential impact of free-riding.}
  \label{fig:exp-num_issues}
\end{figure*}

\subsubsection{Minority vs. majority free-riders}

\begin{figure*}
  \centering
  \includegraphics[width = 0.49\textwidth]{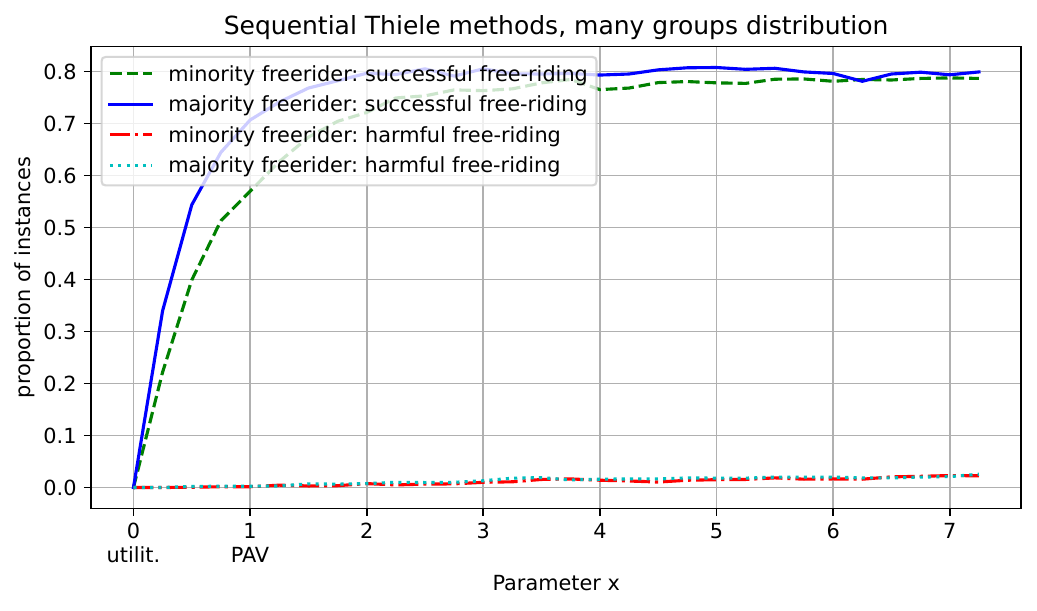}
  \includegraphics[width = 0.49\textwidth]{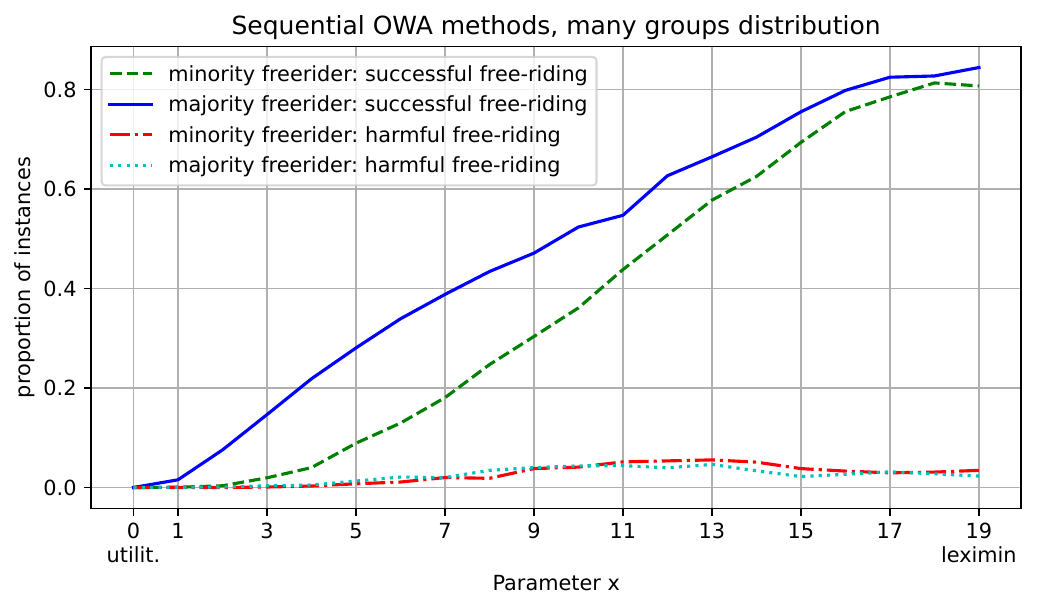} 
    \includegraphics[width = 0.49\textwidth]{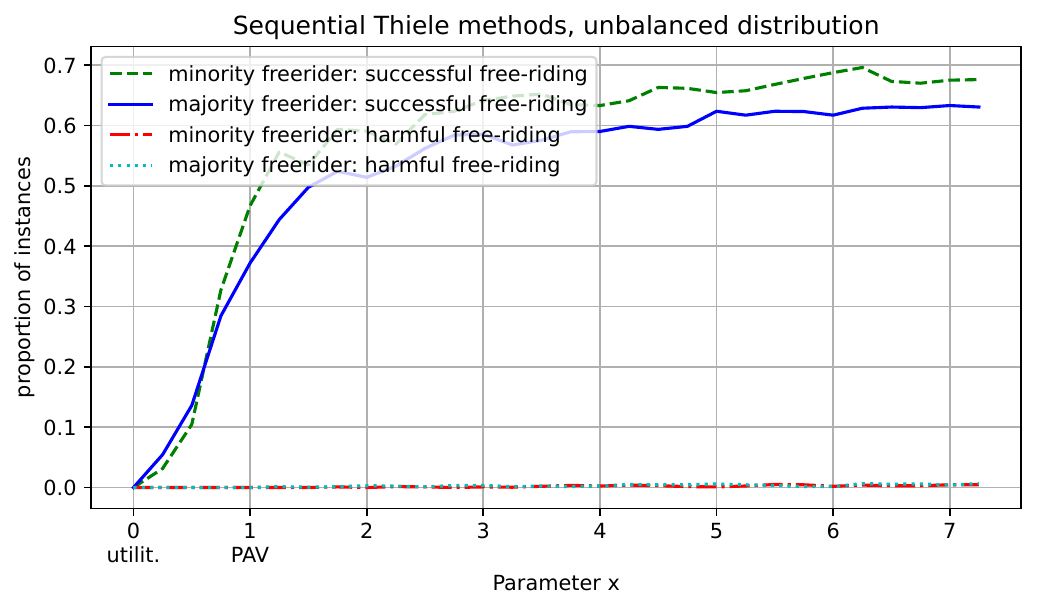}
  \includegraphics[width = 0.49\textwidth]{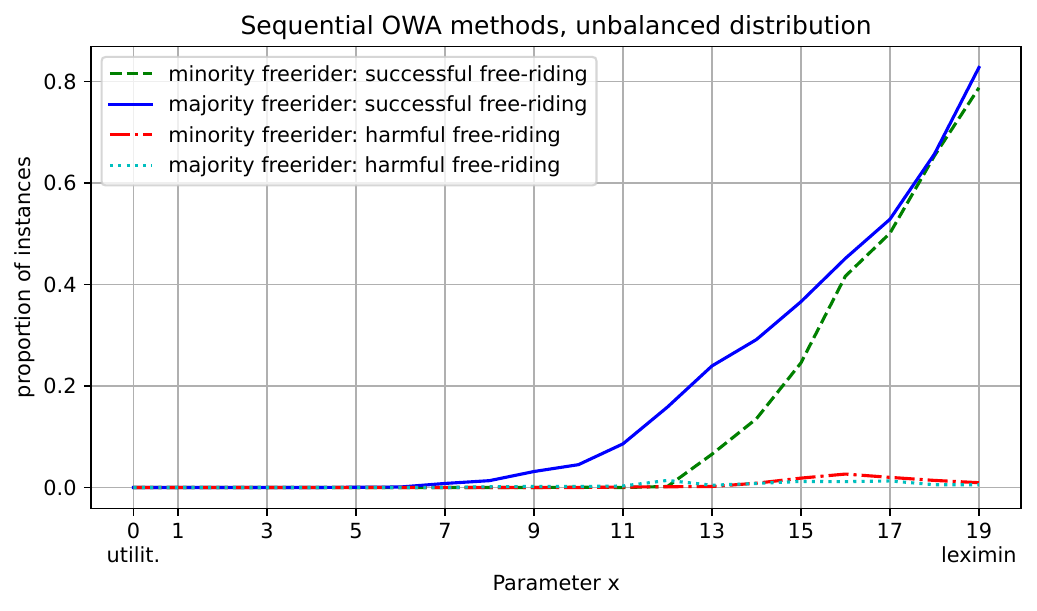} 
  \caption{The possibility of successful and harmful free-riding for minority and majority voters.}
  \label{fig:occ-minority-majority}
\end{figure*}

As a final experiment, we are studying the differences between free-riding voters belonging a minority or majority. Here, we use the \emph{repeated free-riding} model, as it is generally more beneficial for the free-rider (as we have discussed in Section~\ref{sec:results-possibility}) and thus magnifies differences between free-riding and truthful voters.
Let us first consider the many groups distribution. We randomly select one voter from the 40\% group as ``majority'' voter, and one from a 20\% group as ``minority'' voter.
Figure~\ref{fig:occ-minority-majority} (upper part for \emph{many groups}) shows that a majority voter is slightly more likely to successfully free-ride. This can be explained intuitively by the interpretation that free-riding majority voters form their own (one-voter) minority and thus are able to benefit from more egalitarian rules. However, when considering the average change in satisfaction for free-riding voters (Figure~\ref{fig:satisfaction-minority-majority},  upper part for \emph{many groups}), we see that the impact of free-riding is small: the change in satisfaction for a free-riding voter (both minority and minority) is small in comparison to their total satisfaction.
  
\begin{figure*} 
  \includegraphics[width = 0.49\textwidth]{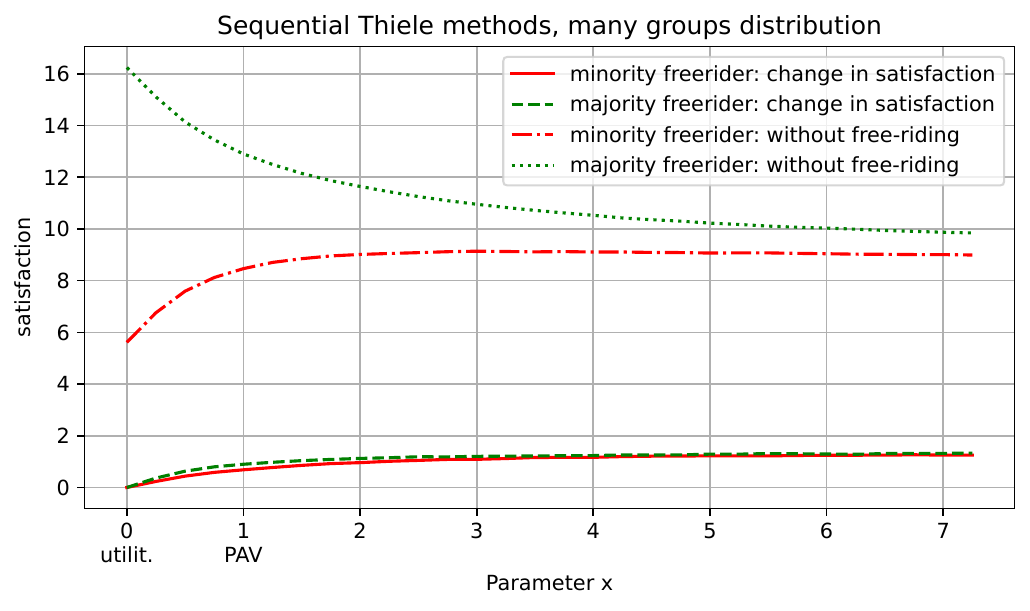}
  \includegraphics[width = 0.49\textwidth]{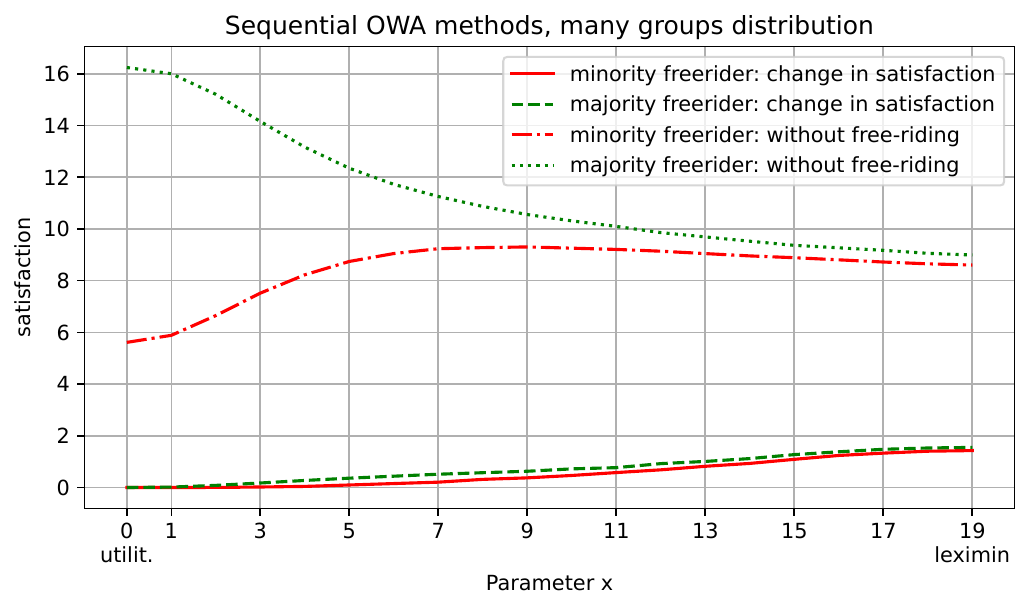}
  
    \includegraphics[width = 0.49\textwidth]{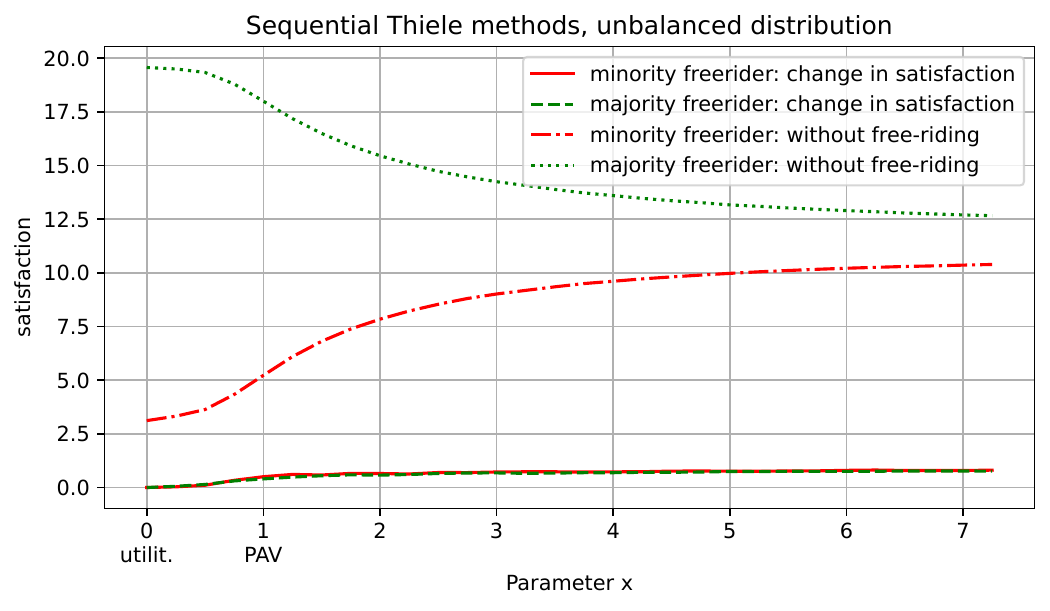}
  \includegraphics[width = 0.49\textwidth]{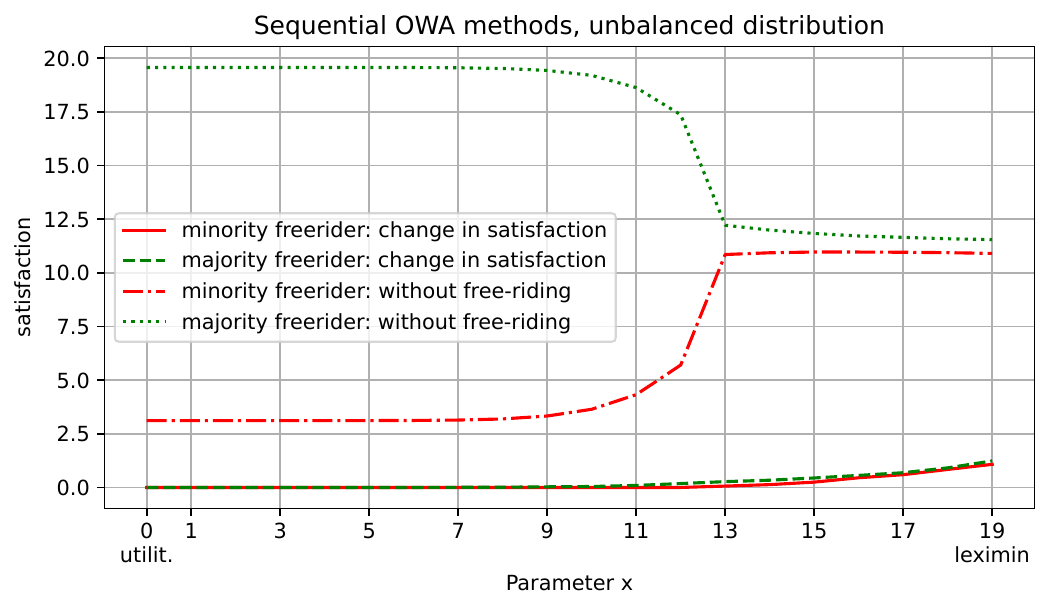}
  \caption{The change in satisfaction for minority and minority voters in relation to the average satisfaction of voters in either groups \emph{without} free-riding. We see that the change in satisfaction is similar for both groups (and relatively small).}
  \label{fig:satisfaction-minority-majority}
\end{figure*}

The results for the unbalanced distribution are very similar in terms of satisfaction (Figure~\ref{fig:satisfaction-minority-majority},  lower part). A noteworthy difference is that -- for Thiele rules -- minority voters have a slightly larger change of successful free-riding. This is probably due to the small weight that a single majority voter has within the majority block (16 voters in total).
A second interesting observation is that free-riding is here almost non-existent for close-to-utilitarian OWA rules (Figure~\ref{fig:occ-minority-majority}, lower part). This is due to the fact that for these rules the majority almost always dictates the decision and consequently free-riding is without effect.

Overall, we can conclude that for highly structured instances (such as the many groups and unbalanced distributions) free-riding is not an overly important phenomenon, due to its very limited impact on satisfaction. 
This statement, however, only holds for single-voter free-riding. In this paper, we are not studying models for groups of free-riders. In contrast, we expect that group free-riding has significant impact in these scenarios (depending on the chosen model), in particular if a majority subgroup forms a ``fake minority''.
Note that a group free-riding model will require additional assumptions on information exchange between free-riders (is group free-riding even possible?) and group cohesiveness (are group preferences identical or merely similar?).

\section{Free-Riding with Optimization-Based Rules}\label{sec:global}

We conclude this paper by briefly looking at the case of optimization-based rules. Recall that, in this setting, voters submit their preferences for all issues at the same time, and the result is computed simultaneously to maximize some global objective. We defer the proofs of this section to Appendix~\ref{app:global}.

We believe that free-riding in this setting, as a phenomenon, is quite distinct from the sequential case. In the sequential case, issues are presented one by one, and at each step, a voter may choose to free-ride \emph{in the current issue}. If a voter believes that a candidate she approves has a strong chance of victory (i.e., it has a large margin of support), she might be tempted to free-ride to immediately lower her (perceived) satisfaction. Our results show that it is computationally easy to free-ride but hard to determine whether it is possible to improve one's satisfaction -- even with full information. Moreover, in many settings, it is risky to do so with no information regarding future issues. 

In contrast, free-riding in the optimization-based setting is more akin to classical, unrestricted manipulation. Given the interconnections among issues, each voter here must, assuming full information, carefully consider how to change her ballots to decide whether free-riding is possible, let alone beneficial. Indeed, as we will see, the main barrier against free-riding in the optimization-based case is the hardness of determining the winner (or the consequences of changing one's ballot). Thus, it seems free-riding here loses its simple, ``greedy'' appeal. Restricting oneself to free-riding in this setting is not advantageous to a manipulator. While it reduces the strategy space, it is not significantly easier than arbitrary manipulation (i.e., to change the full ballot with the goal to increase satisfaction).

We start by studying the possibility and risk of free-riding. Given that our axioms from Section~\ref{sec:strategyproof} also apply to this setting,
our results already indicate that, for essentially all rules that include some form
of fairness towards minorities, free-riding is possible. However, not all the
rules we study satisfy those axioms (again, this is the case of opt-egalitarian). Thus, similarly to Theorem~\ref{thm:freeriding}, we integrate those results with the following.

\begin{restatable}{theorem}{freeridingglobal}
Every opt-Thiele and opt-OWA rule 
except the utilitarian rule can be manipulated by free-riding.\label{thm:freeriding-global}
\end{restatable}

Interestingly, in contrast to the sequential case, for all opt-Thiele rules and for at least opt-leximin
we find that the satisfaction of the free-riding voter can never decrease.

\begin{restatable}{proposition}{safefreeridingthiele}
Free-riding cannot reduce the satisfaction of the free-riding voter when
an opt-Thiele rule is used, but it can increase the satisfaction of the free-riding voter.
\end{restatable}

\begin{restatable}{proposition}{safefreeridingleximin}
Free-riding cannot reduce the satisfaction of the free-riding voter when
opt-leximin is used, but it can increase the satisfaction of the free-riding voter.
\end{restatable}

(It remains an open problem to generalize the latter result to other opt-OWA rules.) The previous two results seem to indicate that free-riding is safe in this setting: however, we will now show that free-riding is hard to perform \emph{at all}, i.e., it is hard to decide whether free-riding is possible in the first place. Hence, a manipulator has no reason to restrict herself to free-riding.

Towards our goal, we start from a more fundamental problem: outcome determination.
Indeed, any hypothetical free-rider needs to decide if, by voting dishonestly,
the outcome would be better than the ``truthful'' outcome.
To do so, she must be able to determine the outcome of an election.
If this step turns out to be intractable, then we already have
a first computational barrier against free-riding. Hence, we study the following problem:



\newcommand{\ODprob}{\textsc{$\calR$-Outcome-Determination}}
\newcommand{\RFprob}{\textsc{$\calR$-Free-Riding-Recognition}}
\newcommand{\genRFprob}{\textsc{Generalized-$\calR$-Free-Riding-Recognition}}
\newcommand{\bothRFprob}{\textsc{(Generalized-)$\calR$-Free-Riding Recognition}}

\begin{problem}
   \problemtitle{\ODprob}
   \probleminput{An election $\elec=\langle N, \bar A,\bar C\rangle$,
   an issue $i$ and a candidate $c\in C_i$.}
   \problemquestion{Does $c$ win in issue $i$ under $\calR$?}
\end{problem}

In the following, we assume that for all opt-$f$-Thiele rules,
$f$ is poly-time computable.\footnote{This is justified by the fact that all relevant values of 
$f$ can be computed ahead of time and stored in a look-up table.} Similarly, we assume that,
for a given opt-$\alpha$-OWA rule and $n$ voters, we can retrieve $\alpha^n$
in polynomial time. Now, we show that outcome determination is hard for both families of rules.

\begin{restatable}{theorem}{WDThiele}
\ODprob{} is \NP-hard for every opt-$f$-Thiele rule distinct from the utilitarian rule.\label{thm:WDThiele}
\end{restatable}

To prove the above we reduce from \textsc{CubicVertexCover} \cite{alimonti2000apx},
similarly as in the proof of Theorem 3 by \citet{skowron2016finding}. Their result
is similar to ours: they show NP-hardness for a large class of ``OWA rules'', which however correspond to what is commonly called ``Thiele rules''
in multi-winner voting~\cite{abcsurvey}.
We note that our result is not a consequence of theirs, since multi-winner voting is not a special case of our model. 

Moreover, we note that the hardness of opt-PAV was already shown by \citet[Theorem 5.1]{brill2022approval}, as the party-approval model used in this paper is a special case of ours. This implies that the hardness of opt-PAV holds even in the special case where the alternatives and preferences of the voters are constant across issues. Our result, in contrast, requires different preferences and candidates for each issue, but holds for all opt-$f$-Thiele rules except the utilitarian rule.

\begin{restatable}{theorem}{WDOWA}
\ODprob{} is \NP-hard for every opt-$\alpha$-OWA rule such that, for all $n$, $\alpha^n$ is nonincreasing and $\alpha_1>\alpha_n$.\label{thm:WDOWA}
\end{restatable}

Note that our result is related to Theorems 2 and 3 by \citet{amanatidis2015multiple}.
While their results hold even for
the special case of binary elections (i.e., in every issue there are exactly two alternatives), 
they only focus on a special subclass of OWA rules, i.e., given by vectors of the form $(1,\ \ldots,\ 1,\ 0,\ \ldots,\ 0)$.

In light of the above,
one could conclude that free-riding is unfeasible for optimization-based rules. Still,
one could argue -- especially since we use
worst-case complexity analysis -- that sometimes the fact that a certain candidate
wins can still be known (or guessed). For example, when a candidate receives
an extremely disproportionate support, or when
some external source (i.e.,
a polling agency having the computational power to solve \ODprob{})
communicates the projected winners. In this case, the manipulator would need to 
solve a slightly different, potentially easier, problem: \emph{Given that some candidate that I approve of wins in this specific issue,
can I deviate from my honest approval ballot, without making this candidate lose}?
Motivated by this, we study the following problem:

\begin{problem}
   \problemtitle{\RFprob}
   \probleminput{An election $\elec=\langle N, \bar A,\bar C\rangle$,
   an issue $i$, a candidate $c\in C_i$ such that $c\in\calR(\elec)_i$, and a voter $v$ such that $c\in A_i(v)$.}
   \problemquestion{Can $v$ free-ride in $\elec$ on issue $i$?}
\end{problem}

We define \genRFprob{} analogously. Luckily, the picture does not change: this problem is still computationally hard for essentially the same families of rules. 

\begin{restatable}{theorem}{RFThiele}
\bothRFprob{} is \NP-hard for every $f$-Thiele rule distinct from the utilitarian rule.\label{thm:RFThiele}
\end{restatable}

Hardness for OWA rules is split in two theorems: one showing \NP-hardness, the other \coNP-hardness.

\begin{restatable}{theorem}{RFOWA}
\bothRFprob{} is \NP-hard for every $\alpha$-OWA rule for which there is a $c\geq 3$ such that, for every $n\in\mathbb{N}$, there is a nonincreasing vector $\alpha$ of size $\ell$ (with $3n\leq\ell\leq cn$) such that $\alpha_1>\alpha_\ell$ and $\alpha_{3n}>0$.\label{thm:RFOWA}
\end{restatable}

\begin{restatable}{theorem}{RFOWAzero}
\bothRFprob{} is \coNP-hard for every $\alpha$-OWA rule for which there is a $c\geq2$ such that, for every $n\in\mathbb{N}$, there is a nonincreasing vector $\alpha$ of size $\ell$ (with $n<\ell\leq cn$) such that $\alpha_1>\alpha_\ell$ and $\alpha_{\ell-n+1}=0$.\label{thm:RFOWAzero}
\end{restatable}

Theorems~\ref{thm:RFThiele},~\ref{thm:RFOWA} and \ref{thm:RFOWAzero} strengthen our previous observations. We conclude that free-riding is generally unattractive for optimization-based rules, since the manipulator cannot even decide efficiently whether free-riding is possible.

\section{Discussion and Research Directions}
\label{sec:discussion}

In this paper, we have demonstrated that free-riding is an essentially unavoidable phenomenon in 
sequential multi-issue voting (cf.~Theorem~\ref{thm:many-issues} and other results in Section~\ref{sec:axiomatic}).
However, we have also identified computational challenges for voters attempting to assess the actual consequences of free-riding.
In addition, our numerical simulations show that the possibility of harmful free-riding is non-negligible, that is, a free-riding voter may end up with fewer satisfactory issues than without free-riding.
This holds especially for voters who will not or cannot free-ride whenever they have the chance to --
but only occasionally. Such reluctance may be caused by the social context in which free-riding occurs:
in small groups, it may be obvious to other group members that free-riding takes place and thus can entail negative social consequences.
Finally, our results show that the expected change in satisfaction of a free-riding voter is rather limited (albeit always positive in expectation).
All in all, we conclude that free-riding in real-world applications of multi-issue decision making may be less attractive to voters than
the theoretical possibility would initially suggest.

We conclude this paper with specific open problems.
First, we would like to point out that many of our hardness proofs use 
several candidates per issue. Do all of these results still hold for binary elections?
Second, our classification of sequential OWA rules with potentially harmful free-riding is not complete. Are there sequential OWA rules where free-riding is never harmful except for the utilitarian rule?
Third, Theorem~\ref{thm:seqFreeridingEgal} shows hardness of \freeridingProb{} for the sequential egalitarian rule. Can this result be extended to a larger class of OWA rules?
Fourth, there are further voting rules to be considered, such as rules based on \phragmen's ideas \cite{Phra95a,Brill2017Phragmens}.
Finally, coordinated free-riding of groups of voters may be much more impactful than single-voter free-riding. Studying this phenomenon provides an attractive opportunity to study group dynamics in voting related to coordination and strategy selection. 


\section*{Acknowledgements}
This research was funded by the Austrian Science Fund (FWF) research grants P31890, P32830, and 10.557766/COE12, by the Austrian Science Fund (FWF) and netidee SCIENCE [10.55776/\allowbreak PAT7221724], the Vienna Science and Technology Fund WWTF (10.47379/ICT23025 and ICT19-065), and the
European Union's Horizon 2020 research and innovation programme (under grant agreement \includegraphics[height=\fontcharht\font`\B]{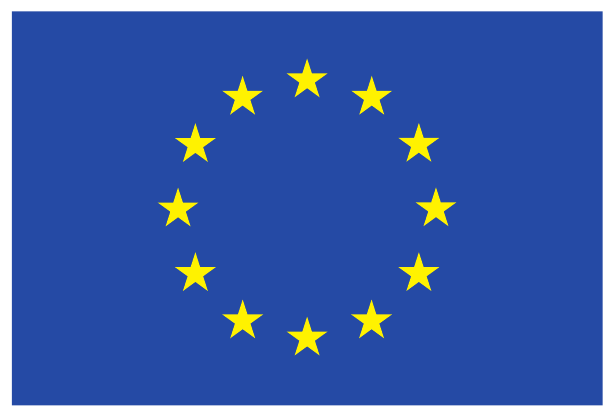}101034440)

\bibliography{literature}

\appendix

\section{Additional results: Possibility of Free-Riding}\label{app:possibility}

In this section, we provide additional results complementing those in Section~\ref{sec:axiomatic}. As mentioned in that section, we consider the following weaker variant of incentive for minorities:
\begin{itemize}
  \item \textbf{Weak incentive for minorities}: For $n\geq3$, there must be a number of issues $k$ such that the complete outcome of the $k$-simple $(k,n)$-election contains both $a$ and $b$ at least once.
\end{itemize}
We get results analogous to Theorems~\ref{thm:many-issues} and~\ref{thm:one-issue} via this weaker variant by strengthening the other axioms:
\begin{itemize}
  \item \textbf{Strong near-unanimity}: For $n\geq3$, the complete outcome of any $\ell$-simple $(k,n)$-election can contain $b$ at most $\ell-1$ times.
  \item \textbf{Strong monotonicity}: For any $\ell\leq k$, the complete outcome of any $\ell$-simple $(k,n)$-election cannot contain $b$ more times than the complete outcome of the $k$-simple $(k,n)$-election.
\end{itemize}

\begin{theorem}
  Any rule that satisfies strong near-unanimity and weak incentive for minorities can be manipulated by free-riding. If it additionally satisfies issue-wise unanimity and strong monotonicity, then it can be manipulated by free-riding in one issue.
\end{theorem}

\begin{proof}
  Consider a rule that satisfies strong near-unanimity and weak incentive for minorities. Consider any $n\geq3$ and let $k$ be the number of issues whose existence is prescribed by weak incentive for minorities. Focus on the $k$-simple $(k,n)$-election $\elec$ and let $I$ be the set of issues where $b$ is the outcome for $\elec$. Consider now a $|I|$-simple $(k,n)$-election $\elec^\prime$ where voter~$n$ approves of $b$ in every issue in $I$. By strong near-unanimity, the outcome here must contain $b$ at most $|I|-1$ times. Let $i$ be some issue where in $\elec^\prime$ voter~$n$ approves of $b$ but $a$ is in the outcome. But then voter~$n$ can free-ride in $\elec^\prime$ via $\elec$ in issues $[k]\setminus I$.

  Next, suppose the rule additionally satisfies issue-wise unanimity and strong monotonicity. Again, fix $n\geq3$ and let $k$ be the number of issues whose existence is prescribed by weak incentive for minorities. Let $s$ be the number of times where $b$ wins in the $k$-simple $(k,n)$-election. We know by weak incentive for minorities that $1\leq s\leq k-1$. Let $\ell$ be the minimal $\ell$ for which there is an $\ell$-simple $(k,n)$-election where $b$ wins exactly $s$ times, and let $\elec$ be such an election. Observe that $\ell>s$: indeed, we get $\ell\neq s$ via strong near-unanimity and $\ell\geq s$ via issue-wise unanimity. Furthermore, for all $(\ell-1)$-simple $(k,n)$-elections, $b$ wins at most $s-1$ times (by strong monotonicity). Let $i$ be some issue where $b$ wins in $\elec$; note that in this issue voter~$n$ must approve of $b$ (by issue-wise unanimity). Next, let $j$ be some issue distinct from $i$ where voter~$n$ approves of $b$ and where $a$ is in the outcome (such a $j$ exists as $\ell>s$). Let $\elec^\prime$ be the $(\ell-1)$-simple $(k,n)$-election obtained by letting voter~$n$ approve of $a$ instead of $b$ in issue $j$. Observe that voter~$n$ can free-ride in $\elec^\prime$ via $\elec$ in issue $j$, completing the proof.
\end{proof}

\section{Omitted proofs: Computational Complexity of Free-Riding}\label{app:complexity}

\seqFreeridingEgal*

\begin{proof}
First, note that, for the egalitarian rule, \freeridingProb{} is in \NP.
Indeed, given an insincere approval ballot for the manipulator, we can check whether it 
improves her satisfaction in polynomial time and whether it is a case of free-riding.

Now, we show hardness by a reduction from \textsc{3-SAT}.
Let $\phi$ be a 3-CNF with $n$ variables and $m$ clauses. We refer to the $j$-th clause as $C_j$.
We assume w.l.o.g.\ that $\phi$ is not satisfied
by setting all variables to false and that each clause contains 
exactly three literals.
We construct an instance of \freeridingProb{} with $2(n+1)$ voters and $5n+1$ rounds. In particular,
we will have two voters $v_i$ and $\bar v_i$ for each variable $x_i$, a voter $u$, and a distinguished voter $v$, the manipulator.

Let us start with the first $4n$ rounds. We subdivide this set of rounds into quadruples; that is,
the first quadruple consists of rounds $1$, $2$, $3$ and $4$, the second are the rounds $5$, $6$, $7$ and $8$, and so on.
We refer to the $j$-th round of quadruple $i$ as round $(i, j)$; for example, round $(2, 3)$ corresponds to round $7$.
In each round of every quadruple, there are two candidates $c$ and $\bar c$. Here (and in all subsequent rounds), we assume that if there is a tie $c$ loses.

Consider a generic quadruple $i$. In round $(i, 1)$, all voters vote for $\bar c$. In round $(i, 2)$, voters $v$ and $\bar v_i$ vote for $c$,
while voters $u$ and $v_i$ vote for $\bar c$. Everyone else approves of both. Furthermore, in round $(i, 3)$, voters $v$ and $u$ approve of $c$ and $\bar c$, respectively;
everyone else approves of both. Finally, in round $(i, 4)$, voter $v$ approves of both $c$ and $\bar c$, while everyone else approves only of $c$.

For each quadruple $i$, we claim that \emph{(i)} $v$ can free-ride (only) in round $(i,1)$ \emph{(ii)} if $v$ does not free-ride, the winners in this quadruple are $(\bar c,\bar c,c,c)$ and \emph{(iii)} if $v$ free-rides, the winners in this quadruple are $(\bar c,c,\bar c,\bar c)$. We show so by induction over the quadruples.

Consider quadruple $1$. Observe that, by tiebreaking, $\bar c$ wins in round $(1,1)$ (irrespectively of what $v$ votes for). Hence, here $v$ can free-ride. Suppose that she votes truthfully. Thus, in the next round, the minimal satisfaction if $c$ or $\bar c$ win is the same (namely, $1$). By tiebreaking, $\bar c$ wins. Now, up to here, every voter has satisfaction $2$, save for $v$ and $\bar v_1$, who have satisfaction $1$. In the next round, then, the minimal satisfaction if $c$ or $\bar c$ win is $2$ and $1$, respectively; hence, $c$ wins. Finally, in round $(1,4)$, the minimal satisfaction of $c$ winning is $3$, whereas the minimal satisfaction in case $\bar c$ wins is $2$ (namely, of voter $u$); hence, $c$ wins. With a similar line of reasoning, one can show that $(\bar c,c,\bar c,\bar c)$ is the result if $v$ does free-ride. Observe that $v$ can indeed free-ride only in round $(1,1)$: in every other round, either she is losing, or her vote would change the outcome. 

Now, suppose this property holds up to quadruple $i$, and consider quadruple $i+1$. If this holds, observe that no voter $v_j$ or $\bar v_j$ can have won fewer rounds than $v$ or $u$, and $v$ and $u$ won the same number of rounds. Therefore, at the beginning of each quadruple, $v$ and $u$ are among the voters with the lowest satisfaction. Let this minimal satisfaction be $s$. 

Again, observe that $v$ can free-ride in round $(i+1, 1)$. Suppose she votes truthfully. Then, she and $u$ will have the same satisfaction of $s+1$ in round $(i+1, 2)$, and $\bar c$ will win by tiebreaking. Next, in round $(i+1,3)$, $v$ will have the minimal satisfaction of $s+1$, and hence $c$ will win. Finally, in round $(i+1,4)$, if $c$ wins the minimal satisfaction will be $s+3$, whereas if $\bar c$ wins it will be $s+2$ (namely, of $u$); hence $c$ wins. With similar arguments, we could show that $(\bar c,c,\bar c,\bar c)$ is the result if $v$ does free-ride. Observe that $v$ can indeed free-ride only in round $(i+1,1)$: in every other round, either she is losing, or her vote would change the outcome.

Now, let us consider round $4n+1$ to round $5n-1$. From the previous discussion we know that, in each quadruple $i$, all voters $v_j$ and $\bar v_j$ (with $j\neq i$) win the same amount of rounds (either $3$ or $4$, depending on whether $v$ free-rides or not). Let this number be $\ell_i$. Furthermore, one voter in $v_i$ and $\bar v_i$ wins $\ell_i$ rounds, while the other wins $\ell_i-1$ rounds. Finally, both $v$ and $u$ won exactly $\ell_i-1$ rounds. Thus, we can partition the voters $v_i$ and $\bar v_i$ into two groups, with a satisfaction differing of exactly $1$ point. Let $s$ be the satisfaction of the voters in the group with the lowest satisfaction. Observe that the satisfaction of $v$ and $u$ will be exactly $s+1-n$, because in each quadruple $i$, both $v$ and $u$ lose exactly one round (compared to the voters $v_j$ and $\bar v_j$ with $i\neq j$). So then, in each of the rounds from $4n+1$ to $5n-1$, there are two candidates: $c$ and $\bar c$. Here, $v$ approves of $\bar c$, $u$ of both candidates, and everyone else only of $c$. Observe that $v$ and $u$ win every such round, as $v$ always has a strictly lower satisfaction than the rest of the voters, and $u$ always approves of all candidates. Furthermore, $v$ can't free-ride here: as she always has the lowest satisfaction, she is always pivotal, and hence her vote decides the outcome. Thus, after these rounds, both $v$ and $u$ will have satisfaction $s$.

In round $5n$, there are two candidates: $\bar c$ and $c$. Here, $v$ votes for $\bar c$, whereas every one else votes for $c$. If either candidate wins, the minimal satisfaction will be $s$, and thus $\bar c$ wins by tiebreaking. Furthermore, $v$ cannot free-ride: if she were to vote for $c$, then $c$ would win.

Finally, in round $5n+1$, we know that $v$ has satisfaction of $s+1$ and $u$ of $s$. Furthermore, if $v_i$ has satisfaction $s$ if $c$ won in round $(i,1)$ and $s+1$ otherwise (and similarly for $\bar v_i$ and $\bar c$). In this round, there are $m+1$ candidates, namely $c,c_1,\dots,c_m$. Here, $u$ approves of all candidates, voter $v_i$ (resp. $\bar v_i$) approves of $c$ and of all candidates $c_j$ such that $x_i\not\in C_j$ (resp. $\bar x_i\not\in C_j$). Finally, voter $v$ approves of $c$.

Observe that we can interpret $c$ winning in round $(i,2)$ as setting $x_i$ to true (and conversely for $\bar c$); we claim that this assignment satisfies $\phi$ if and only if $c$ wins in this final round. To see this, observe that if $c$ wins the minimal satisfaction will be $s+1$ (all voters approve of $c$). Now, consider a clause $C_j$ and its candidate $c_j$. If all literals in $C_j$ are unsatisfied, then the corresponding voters have all (up to this round) satisfaction $s+1$. Hence, if $c_j$ would win, the minimal satisfaction will be at least $s+1$, as all other voters (except $v$) approve of it, and $v$ has satisfaction at least $s+1$. Hence, $c_j$ would win by tiebreaking. Conversely, if at least one literal in $C_j$ is satisfied, there is at least one voter with satisfaction $s$ that does not vote for $c_j$: hence, $c_j$ loses against $c$. Our claim follows.

Now, observe that, if $v$ were to always vote truthfully, her true satisfaction would be $4n$ (she would win three rounds per quadruple, all rounds from $4n+1$ to $5n$, and lose the last round, by the assumption that setting all variables to false does not satisfy $\phi$). Observe also that, as we discussed before, she can only free-ride in the first round of every quadruple. Therefore, the only way she can raise her satisfaction to $4n+1$ is by winning the last round (observe that if she free-rides in some quadruple, she still truly wins three rounds). To do so, she has to force a satisfying assignment for $\phi$ by free-riding. It follows that $v$ can free-ride if and only if $\phi$ is satisfiable, and so we are done.\end{proof}

\seqFreeridingOWAGen*

\begin{proof}
Fix an $\alpha$-OWA rule $\calR$ satisfying the conditions of the theorem. First, notice that \genFreeridingProb{} is in \NP,
as we can guess an insincere approval ballot for the manipulator and check whether it 
improves her satisfaction (and is an instance of generalized free-riding) in polynomial time.

Next, we show hardness by a reduction from \textsc{3-SAT}. 
Let $\phi$ be a 3-CNF with $n$ variables and $m$ clauses. We refer to the $j$-th clause as $C_j$.
We assume w.l.o.g.\ that $\phi$ is not satisfied
by setting all variables to false and that each clause contains 
exactly three literals.
We will construct an instance of \genFreeridingProb{} with $2n+5$ voters.
More specifically, there are two voters $v_i$ and $\bar v_i$ for each variable $x_i$, four 
voters $u_1,\dots,u_4$, plus one distinguished voter $v$ who will try to manipulate. 

Given the weight vector $\alpha=(\alpha_1,\dots,\alpha_{2n+5})$, we distinguish three (not necessarily exclusive) cases: (1) $\alpha_2>\alpha_{n+5}$, (2) $\alpha_{n+1}>\alpha_{2n+5}$, and (3) $\alpha_{2}=\alpha_{2n+5}$. Since $\alpha_1>\alpha_{2n+5}$, at least one case must be true. In the following,
we will give a different reduction for each of the three cases.

\paragraph{First case: $\alpha_2>\alpha_{n+5}$.}
We construct an instance with $n+2$ rounds as follows.
In the first $n+1$ rounds, there are two candidates: $c$ and $\bar c$. In
the last round, there are $m+1$ candidates, namely $c,c_1,\dots,c_m$. We assume that, in the case of ties, $c$ always loses.

In the first round, everybody votes for $\bar c$ except for $v$ and $u_1$, who vote for $c$. Here, $\bar c$ wins by tiebreaking, and $v$ cannot manipulate.

In each round $i$ with $i\in\{2,\dots,n+1\}$, voter $v_i$ votes for candidate $\bar c$, voter $\bar v_i$ for candidate $c$; everyone else votes for both candidates. We claim that $v$ can manipulate in every such round $i$ and force the win of either $c$ or $\bar c$. We show so by induction. In round $2$,
suppose that $v$ votes only for $c$ (the case where she votes for $\bar c$ is analogous). Then, if $c$ were to win, the satisfaction vector would be of form $(1,1,1,2,\dots,2)$ (everyone but $v_1$ wins). On the other hand, if $\bar c$ wins, then it would be of form $(0,1,1,2,\dots,2)$ (everyone but $\bar v_1$ and $v$ win). Hence, $c$ wins. Observe that if $v$ votes truthfully, $\bar c$ wins by tiebreaking. Now, suppose this holds up to a round $i$. Before round $i+1$, $v$ has won $i-1$ rounds (all but the first one), whereas voters $u_1,\dots,u_4$, as well as any pair of voters $v_j,\bar v_j$ (with $j\geq i$) have won $i$ rounds. Furthermore, for every pair $v_j,\bar v_j$ (with $j<i$), exactly one voter won $i-1$ rounds while the other $i$ rounds. Suppose again that $v$ votes for $c$ (the case where she votes for $\bar c$ is similar). Then, observe that, if $c$ or $\bar c$ win,s the satisfaction vectors (excluding $v$) would be completely symmetric (and every voter would have at least a score of $i$); however, if $\bar c$ wins, $v$ would have a satisfaction of $i-1$, whereas if $c$ wins, she would get a satisfaction of $i$. Hence, since $\alpha_1>0$, here $c$ wins. Observe again that if $v$ votes truthfully, then $\bar c$ wins.

Consider the final round. Up to here, $v$ and $u_1$ have won $n$ rounds (they lost the first round), while $u_2,u_3,u_4$ have won $n+1$ rounds. Furthermore, every voter $v_i$ has won $n$ rounds if $c$ won in round $i+1$ and $n+1$ times otherwise (and conversely for $\bar v_i$ and $\bar c$). In this round, voters $u_1,\dots,u_4$ approve of all candidates but $c$, voter $v_i$ (resp. $\bar v_i$) approves of $c$ and every candidate $c_j$ such that $x_i\not\in C_j$ (resp. $\bar x_i\not\in C_j$). Finally, voter $v$ approves of $c$. Observe that we can interpret $c$ winning in round $i+1$ as setting $x_i$ to true, and $\bar c$ winning as setting $x_i$ to false. We claim that $c$ wins in the last round if and only if this assignment satisfies $\phi$. To see this, consider that, if $c$ were to win, the satisfaction vector would be:
\begin{equation*}
  (n,\underbrace{n+1,\ldots,n+1}_{n+4\ \text{times}},\underbrace{n+2,\ldots,n+2}_{n\ \text{times}}).
\end{equation*}
Let us call this vector $s$. Consider now a candidate $c_j$ and its corresponding clause $C_j$. If all three of its literals are unsatisfied, then the corresponding voters all have satisfaction $n+1$. Hence, if $c_j$ were to win in this case, the satisfaction vector would again be exactly $s$. By our assumptions on tiebreaking, here $c_j$ would win against $c$. Furthermore, suppose that either one, two, or three of the literals have been satisfied. Then, the vectors are, respectively:
\begin{align*}
  &(n,n,\underbrace{n+1,\ldots,n+1}_{n+2\ \text{times}},\underbrace{n+2,\ldots,n+2}_{n+1\ \text{times}})\\
  &(n,n,n,\underbrace{n+1,\ldots,n+1}_{n\ \text{times}},\underbrace{n+2,\ldots,n+2}_{n+2\ \text{times}})\\
  &(n,n,n,n,\underbrace{n+1,\ldots,n+1}_{n-2\ \text{times}},\underbrace{n+2,\ldots,n+2}_{n+3\ \text{times}})\\
\end{align*}
Let these vectors be $s_1$, $s_2$ and $s_3$, respectively. One can show that if $\alpha_{2}>\alpha_{n+5}$ the dot product between $\alpha$ and each of these three vectors would be strictly lower than the dot product between $\alpha$ and $s$. For example:
\begin{multline*}
    s\cdot\alpha >  s_1\cdot\alpha
    \iff \alpha_1n+\left(\sum_{i=2}^{n+5}\alpha_i(n+1)\right)+\left(\sum_{i=n+6}^{2n+5}\alpha_i(n+2)\right) >\\
    (\alpha_1+\alpha_2)n+\left(\sum_{i=3}^{n+4}\alpha_i(n+1)\right)+\left(\sum_{i=n+5}^{2n+5}\alpha_i(n+2)\right) \iff \\
    (\alpha_2+\alpha_{n+5})(n+1) > \alpha_2n+\alpha_{n+5}(n+2)
    \iff \alpha_{2} > \alpha_{n+5}.
\end{multline*}
The other two cases are similar. Hence, if $C_j$ is satisfied, candidate $c_j$ cannot win against $c$. Consequently, if all clauses are satisfied, candidate $c$ wins.

Now, if $c$ wins in the last round, then the satisfaction of $v$ would be $n+1$; if $c$ loses, it would be $n$. Notice also that $v$ cannot raise her satisfaction by manipulating in the final round. Furthermore, if $v$ always submits her true preferences, then by tiebreaking $\bar c$ would win in every round $i$ with $i\in\{2,\dots,n+1\}$. By assumption, this would not satisfy $\phi$, and hence $c$ would not win in the last round. Therefore, $v$ has an incentive to manipulate in these rounds to try and choose a satisfying assignment for $\phi$. It follows that $v$ can manipulate via generalized free-riding if and only if $\phi$ is satisfiable, so we are done.

\paragraph{Second case: $\alpha_{n+1}>\alpha_{2n+5}$.}

We construct an instance with $n+2$ rounds as follows.
In the first $n+1$ rounds, there are two candidates, $c$ and $\bar c$. In the last round, there are $m+1$ candidates, namely $c,c_1,\dots,c_m$. We assume that, in the case of ties, $c$ always loses.

In the first round, $v,u_1,\dots,u_4$ approve of $c$, whereas everyone else approves of $\bar c$.

In each round $i$ with $i\in\{2,\dots,n+1\}$, voter $v_i$ votes for candidate $c$, voter $\bar v_i$ for candidate $\bar c$, and everyone else votes for both candidates.

In the final round, voters $v,u_1,u_2$ approve of $c$, voter $v_i$ (resp. $\bar v_i$) approves of candidate $c_j$ if $x_i\in C_j$ (resp. $\bar x_i\in C_j$). Finally, voters $u_3$ and $u_4$ approve of all candidates.

We claim that $\emph{(i)}$ voters $v,u_1,\dots,u_4$ lose in the first round, and that in the following $n$ rounds $\emph{(ii)}$ candidate $\bar c$ wins if $v$ votes truthfully and \emph{(iii)} $v$ can force the win of either $c$ or $\bar c$ by manipulating. The arguments for this are essentially the same as in the first case.

Now, consider the last round. Up to here, $v,u_1,\dots,u_4$ have won $n$ rounds (all but the first). Furthermore, every voter $v_i$ has won $n+1$ rounds if $c$ won in round $i$ and $n$ times otherwise (and conversely for $\bar v_i$ and $\bar c$). We interpret again $c$ winning in round $i$ as setting $x_i$ to true, and $\bar c$ winning as setting $x_i$ to false. We claim that $c$ wins in the last round if and only if this assignment satisfies $\phi$. To see this, consider that, if $c$ were to win, the satisfaction vector would be:
\begin{equation*}
  (\underbrace{n,\ldots,n}_{n\ \text{times}},\underbrace{n+1,\ldots,n+1}_{n+5\ \text{times}}).
\end{equation*}
Let us call this vector $s$. Consider now a candidate $c_j$ and its corresponding clause $C_j$. If all three of its literals are unsatisfied, then the corresponding voters all have satisfaction $n$. Hence, if $c_j$ were to win in this case, the satisfaction vector would again be exactly $s$. By our assumptions on tiebreaking, here $c_j$ would win against $c$. Furthermore, suppose that either one, two, or three of the literals have been satisfied. Then, the vectors are, respectively:
\begin{align*}
  &(\underbrace{n,\ldots,n}_{n+1\ \text{times}},\underbrace{n+1,\ldots,n+1}_{n+3\ \text{times}},n+2)\\
  &(\underbrace{n,\ldots,n}_{n+2\ \text{times}},\underbrace{n+1,\ldots,n+1}_{n+1\ \text{times}},n+2,n+2)\\
  &(\underbrace{n,\ldots,n}_{n+3\ \text{times}},\underbrace{n+1,\ldots,n+1}_{n-1\ \text{times}},n+2,n+2,n+2)\\
\end{align*}
One can show that if $\alpha_{n+1}>\alpha_{2n+5}$ the dot product between $\alpha$ and each of these three vectors would be strictly lower than the dot product between $\alpha$ and $s$. The computation is analogous to the one we did in the first case. Hence, if $C_j$ is satisfied, candidate $c_j$ cannot win against $c$. Consequently, if all clauses are satisfied, candidate $c$ wins. With similar arguments as before, we conclude that here $v$ can manipulate if and only if $\phi$ is satisfiable.

\paragraph{Third case: $\alpha_{2}=\alpha_{2n+5}$.}

We construct an instance with $n+3$ rounds as follows. The rounds $2,\dots,n+1$ and the last round are equal to the first case. The first round is almost identical, save for the fact that $u_1$ also votes for $\bar c$. 

Hence, we know that $v$ loses in the first round and that she can manipulate in all the following rounds to force a win of either $c$ or $\bar c$. Let us focus on round $n+2$. We will design this round to make sure that only $v$ wins, and that she cannot manipulate via generalized free-riding. We distinguish two sub-cases:

\begin{enumerate}
  \item $\alpha_2=\dots=\alpha_{2n+5}=0$. Here, everyone votes for $c$, save for $v$, who votes for $\bar c$. There are no other candidates. Observe that, in case either $c$ or $\bar c$ wins here, the minimal satisfaction will be $n$ in both cases; $\bar c$ wins by tiebreaking. Furthermore, were $v$ to vote for $c$, then $c$ would win (as the minimal satisfaction for $c$ winning would raise to $n+1$).
  \item $\alpha_2=\dots=\alpha_{2n+5}>0$. Here, there is one candidate $c_{v^*}$ for every voter $v^*\in N$, and we assume that in case of ties $c_v$ wins. Furthermore, we assume that all voters vote for their voter-candidate. We show that here all candidates receive the same score. Consider any two candidates $c_y$ and $c_z$. Let $y=(y_1,\dots,y_{2n+5})$ and $z=(z_1,\dots,z_{2n+5})$ be the satisfaction vectors corresponding to $c_y$ and $c_z$ winning, respectively. For both $c_y$ and $c_z$, there is surely at least one voter with satisfaction $n$ that disapproves of them; hence, $y_1=z_1=n$. Furthermore, there is exactly one voter approving each candidate, and hence $\sum_{i=2}^{2n+5}y_i=\sum_{i=2}^{2n+5}z_i$. These two facts, together with the fact that $\alpha_2=\dots=\alpha_{2n+5}$, imply that $\alpha\cdot y = \alpha\cdot z$. Hence, every two candidates receive the same score: by tiebreaking, $c_v$ wins. Now, notice that, if $v$ approves of any other candidate $c_{v^*}$ distinct from $c_v$, then $c_{v^*}$ will receive a strictly greater score than any other candidate (as now two voters approve of it).
\end{enumerate}

Now, consider the last round. Up to here, $v,u_1,\dots,u_4$ won $n+1$ rounds. Furthermore, every voter $v_i$ has won $n+1$ rounds if $c$ won in round $i$ and $n$ times otherwise (and conversely for $\bar v_i$ and $\bar c$). We interpret again $c$ winning in round $i$ as setting $x_i$ to true, and $\bar c$ winning as setting $x_i$ to false. We claim that $c$ wins in the last round if and only if this assignment satisfies $\phi$. To see this, consider that, if $c$ were to win, the satisfaction vector would be:
\begin{equation*}
  (\underbrace{n+1,\ldots,n+1}_{n+4\ \text{times}},\underbrace{n+2,\ldots,n+2}_{n+1\ \text{times}}).
\end{equation*}
Let us call this vector $s$. Consider now a candidate $c_j$ and its corresponding clause $C_j$. If all three of its literals are unsatisfied, then the corresponding voters all have satisfaction $n+1$. Hence, if $c_j$ were to win in this case, the satisfaction vector would again be exactly $s$. By our assumptions on tiebreaking, here $c_j$ would win against $c$. Furthermore, suppose that either one, two, or three of the literals have been satisfied. Then, the vectors are, respectively:
\begin{align*}
  &(n,\underbrace{n+1,\ldots,n+1}_{n+2\ \text{times}},\underbrace{n+2,\ldots,n+2}_{n+2\ \text{times}})\\
  &(n,n,\underbrace{n+1,\ldots,n+1}_{n\ \text{times}},\underbrace{n+2,\ldots,n+2}_{n+3\ \text{times}})\\
  &(n,n,n,\underbrace{n+1,\ldots,n+1}_{n-2\ \text{times}},\underbrace{n+2,\ldots,n+2}_{n+4\ \text{times}})\\
\end{align*}
One can show that if $\alpha_{1}>\alpha_{n+4}$ (which is implied by $\alpha_1>\alpha_{2n+5}$ and $\alpha_{2}=\alpha_{4}=\alpha_{2n+5}$) the dot product between $\alpha$ and each of these three vectors would be strictly lower than the dot product between $\alpha$ and $s$. The computation is analogous to the one we did in the first case. Hence, if $C_j$ is satisfied, candidate $c_j$ cannot win against $c$. Consequently, if all clauses are satisfied, candidate $c$ wins. With similar arguments as the first case, we conclude that here $v$ can manipulate if and only if $\phi$ is satisfiable.\\

This concludes the proof.\end{proof}

\section{Omitted proofs: Free-Riding with Optimization-Based Rules}\label{app:global}

\freeridingglobal*

\begin{proof}
Let $\calR$ be an opt-$f$-Thiele Rule different from the utilitarian rule.
Then, there exists a $k$ such that $f(k-1)> f(k)$. 
Consider a $k+1$ issue election with four voters and two candidates $a$ and $b$ 
such that for the first $k$ issues all voters only approve candidate $a$.
Moreover, on issue $k+1$ voters $1$ and $2$ approve $b$ while voter $3$ and $4$
approve $a$. Assume further that $a$ is preferred to $b$ in the tiebreaking order.
Clearly, selecting $b$ in one of the first $k$ rounds just reduces the score of all voters,
hence in any optimal outcome $a$ wins in the first $k$ issues. Letting $a$ or $b$ win on
issue $k+1$ increases the score of the outcome by $2f(k)$ for both candidates. We can assume that
$a$ wins by tiebreaking.
We claim that voter $1$ can manipulate by changing her vote in one of the first $k$ issues
to $\{b\}$. Assume that $1$ manipulates on issue $k$. Then, it is still clearly best 
to let $a$ win in the first $k-1$ issues. This leads to score of $S:= 4\sum_{i=1}^{k-1} f(i)$. 
Let us now consider the score of four possible outcomes on issue $k$ and $k+1$.
The outcome $(a, \dots, a,a)$ has score of $S + 3f(k-1) + 2f(k)$,
the outcome $(a, \dots, a,b)$ has score of $S + 3f(k-1) + f(k) + f(k-1)$,
the outcome $(a, \dots, b,a)$ has score of $S + f(k-1) + 2f(k-1)$
and the outcome $(a, \dots, b,b)$ has score of $S + f(k-1) + f(k) +f(k-1)$.
As, $f(i-1) > f(k)$ this implies that $(a, \dots, a,b)$ is the winning outcome.
Therefore, voter $1$ did free-ride successfully.

Now, let $\calR$ be an opt-OWA rule that is not the utilitarian rule. Then there exists a $n$
for which the vector $\alpha$ for $k$ voters satisfies $\alpha_{1} > \alpha_{n}$.
Clearly, $n\geq 2$.
Consider an election with $2$ issues and $n$ voters. In each issue there are $n$ candidates
$a_1, \dots a_n$. In the first issue, voters~$1$ and $2$ approve $a_1$. Every other voter $i\in\{3,\dots, n\}$
approves $a_i$. In the second issue voter $1$ approves $a_1$, voter $2$ approves $a_2$ and
all other voters approve both $a_1$ and $a_2$.
We assume that candidates with a lower index are preferred by the tiebreaking,
which is applied lexicographically.
Selecting a candidate other than $a_1$ in the first issue leads to satisfaction vector $(0,1,\ldots,1,2)$,
independently of whether $a_1$ or $a_2$ is selected in issue $2$.
On the other hand, selecting $a_1$ in issue $1$ leads to satisfaction vector $(1,1,\ldots,1,2)$
independently of whether $a_1$ or $a_2$ is selected in issue $2$.
This means $(a_1,a_1)$ and $(a_1,a_2)$ lead to the highest OWA score.
By tiebreaking, $(a_1,a_1)$ wins.
Now, we claim that voter $2$ can free-ride by approving $a_2$ instead of $a_1$ in the first issue.
Assume first, that a candidate other than $a_1$ or $a_2$ is selected in the first issue.
This still leads to a satisfaction vector of $(0,1,\ldots,1,,2)$,
independently of whether $a_1$ or $a_2$ is selected in issue $2$.
Choosing $a_1$ in both issues leads to a satisfaction vector of $(0,1,\ldots,1,2)$.
Choosing $a_1$ in issue $1$ and $a_2$ in issue $2$ leads to vector
$(1,\ldots,1)$.
Choosing $a_2$ both times or first $a_2$ and then $a_1$ is symmetric.
As $\alpha_{1} > \alpha_{n}$ we know that 
\[
  \alpha \cdot (\underbrace{1,\ldots,1}_{n\ \text{times}})= \sum_{i=1}^n \alpha_{i} > 
  \alpha_{n} - \alpha_{1} + \sum_{i=1}^n \alpha_{i} =
  \alpha \cdot (0,\underbrace{1,\ldots,1}_{n-2\ \text{times}}, 2).
\]
It follows that $(a_1,a_2)$ and $(a_2,a_1)$ are the outcomes maximizing the OWA score.
By tiebreaking, $(a_1,a_2)$ is the winning outcome. It follows that $2$ did successfully 
free-ride.
\end{proof}

\safefreeridingthiele*

\begin{proof}

Fix the opt-$f$-Thiele rule. It follows directly from Theorem~\ref{thm:freeriding-global} that free-riding can increase
the satisfaction of the free-riding voter.
Let us show that it can never decrease the satisfaction of the free-riding voter.
Let $\elec$ be an election, $\bar w$ be the outcome
of $\elec$ under the $f$-Thiele rule, and consider a voter $v^*$ such that $v^*$ can 
free-ride in issues $I\subset[k]$. Finally, let $\elec^*$ be the election after $v^*$ free-rides 
and $\bar w^*$ the outcome of $\elec^*$ under the same $f$-Thiele rule.
Now, as $v^*$ free-rides, i.e., the winners in issues~$I$ are the same in $\bar w$ and $\bar w^*$,
we know that $v^*$ approves the winners of issues $I$ in her honest ballot in $\elec$ and 
does not approve the winners of issues $I$ in her manipulated ballot in $\elec^*$.
It follows the satisfaction of $v^*$ with $\bar w^*$ resp.\ $\bar w$ in $\elec$
is higher by one than in $\elec^*$, i.e.,
\begin{align}
\sat_{\elec^*}(v^*, \bar w^*) &= \sat_{\elec}(v^*, \bar w^*) - |I| \quad\text{as well as }\nonumber\\
\sat_{\elec^*}(v^*, \bar w) &= \sat_{\elec}(v^*, \bar w) - |I|.\label{eq:3}
\end{align}
All other voters submit the same ballot in $\elec$ and $\elec^*$.
Hence, for all $v \neq v^*$ we have 
\begin{align}
\sat_{\elec^*}(v, \bar w^*) &= \sat_{\elec}(v, \bar w^*) \quad\text{as well as  }\nonumber\\
\sat_{\elec^*}(v, \bar w) &= \sat_{\elec}(v, \bar w). \label{eq:4}
\end{align}
Now assume for the sake of a contradiction that 
$\sat_{\elec}(v^*, \bar w) > \sat_{\elec}(v^*, \bar w^*)$, i.e.,
free-riding led to a lower satisfaction for $v^*$ with respect to her honest ballot.

As $\bar w^*$ is the winning outcome of $\elec^*$, we know that
\[
  \sum_{v\in N} \smashoperator[r]{\sum_{i=1}^{\sat_{\elec^*}(v, \bar w^*)}} f(i) \quad>\quad \sum_{v\in N} \smashoperator[r]{\sum_{i=1}^{\sat_{\elec^*}(v, \bar w)}} f(i).
\]
Note that the inequality is strict because, if the two sides were equal, it would mean that $w^*$ is preferred over $w$ by the tiebreaking and that, by Equations~\eqref{eq:3} and~\eqref{eq:4}, that $w$ and $w^*$ would have the same score also in election $\elec$. Hence, $w^*$ would win in $\elec$, contradicting our assumptions. Next, by the previous inequality and Equations~\eqref{eq:3} and~\eqref{eq:4},
\[
  \smashoperator{\sum_{i=1}^{\sat_{\elec}(v^*, \bar w^*)-|I|}} f(i) + \smashoperator{\sum_{v\in N\setminus\{v^*\}}} \smashoperator[r]{\sum_{i=1}^{\sat_{\elec}(v, \bar w^*)}} f(i) \quad > \smashoperator[r]{\sum_{i=1}^{\sat_{\elec}(v^*, \bar w)-|I|}} f(i) + \smashoperator{\sum_{v\in N\setminus\{v^*\}}} \smashoperator[r]{\sum_{i=1}^{\sat_{\elec}(v, \bar w)\ }} f(i).
\]
This can be rewritten as
\[
  \sum_{v\in N} \smashoperator[r]{\sum_{i=1}^{\sat_{\elec}(v, \bar w^*)}} f(i) \quad - \smashoperator[r]{\sum_{i=\sat_{\elec}(v^*,\bar w^*)-|I|+1}^{\sat_{\elec}(v^*,\bar w^*)}}f(i) \quad > \quad \sum_{v\in N} \smashoperator[r]{\sum_{i=1}^{\sat_{\elec}(v, \bar w)}} f(i) \quad - \smashoperator[r]{\sum_{i=\sat_{\elec}(v^*,\bar w)-|I|+1}^{\sat_{\elec}(v^*,\bar w)}}f(i),
\]
which gives
\begin{equation}
  \sum_{v\in N} \smashoperator[r]{\sum_{i=1}^{\sat_{\elec}(v, \bar w^*)}} f(i) - \sum_{v\in N} \smashoperator[r]{\sum_{i=1}^{\sat_{\elec}(v, \bar w)}} f(i) \quad >  \smashoperator[r]{\sum_{i=\sat_{\elec}(v^*,\bar w^*)-|I|+1}^{\sat_{\elec}(v^*,\bar w^*)}}f(i) \quad - \smashoperator[r]{\sum_{i=\sat_{\elec}(v^*,\bar w)-|I|+1}^{\sat_{\elec}(v^*,\bar w)}}f(i).\label{eq:5}
\end{equation}
Next, we know that the left-hand side of~\eqref{eq:5} must be non-positive, as $\bar w$, and not $\bar w^*$, is the winner of $\elec$. Moreover, we assumed that $\sat_{\elec}(v^*, \bar w) > \sat_{\elec}(v^*, \bar w^*)$, which together with the non-increasingness of $f$ implies that the right-hand side of~\eqref{eq:5} must be non-negative: a contradiction. Hence, we get $\sat_{\elec}(v^*, \bar w) \leq \sat_{\elec}(v^*, \bar w^*)$, completing the proof.
\end{proof}

\safefreeridingleximin*

\begin{proof}

It follows directly from Theorem~\ref{thm:freeriding} that free-riding can increase
the satisfaction of the free-riding voter.
Let us show that it can never decrease the satisfaction of the free-riding voter.
Let $\elec$ be an election, $\bar w$ be the outcome
of $\elec$ under the leximin rule, and consider a voter $v^*$ such that $v^*$ can 
free-ride in issues $I\subset[k]$. Finally, let $\elec^*$ be the election after $v^*$ free-rides 
and $\bar w^*$ the outcome of $\elec^*$ under opt-leximin.
In the following we write $N^{\elec}_i(\bar w) = \{v \in N\mid\sat_{\elec}(v, \bar w) = i\}$.  
Now, as $v^*$ free-rides, i.e., the winners in issues~$I$ are the same in $\bar w$ and $\bar w^*$,
we know that $v^*$ approves the winners of issues $I$ in her honest ballot in $\elec$ and 
does not approve the winners of issues $I$ in her free-riding ballot in $\elec^*$.
It follows the satisfaction of $v^*$ with $\bar w^*$ resp.\ $\bar w$ in $\elec$
is higher by one than in $\elec^*$, i.e.,
\begin{align}
\sat_{\elec^*}(v^*, \bar w^*) &= \sat_{\elec}(v^*, \bar w^*) -|I| \quad\text{as well as }\nonumber\\
\sat_{\elec^*}(v^*, \bar w) &= \sat_{\elec}(v^*, \bar w) -|I|.\label{eq:1}
\end{align}
All other voters submit the same ballot in $\elec$ and $\elec^*$.
Hence, for all $v \neq v^*$ we have 
\begin{align}
\sat_{\elec^*}(v, \bar w^*) &= \sat_{\elec}(v, \bar w^*) \quad\text{as well as  }\nonumber\\
\sat_{\elec^*}(v, \bar w) &= \sat_{\elec}(v, \bar w). \label{eq:2}
\end{align}
Now assume for the sake of a contradiction that 
$\sat_{\elec}(v^*, \bar w) > \sat_{\elec}(v^*, \bar w^*)$, i.e.,
free-riding led to a lower satisfaction for $v^*$ with respect to her honest ballot.

First of all, observe that $\bar w^*$ and $\bar w$ cannot tie in the leximin order in $\elec^*$
(i.e., the victory of $\bar w^*$ over $\bar w$ cannot be due to tiebreaking).
Were this not the case, then we would have $|N^{\elec^*}_j(\bar w^*)| = |N^{\elec^*}_j(\bar w)|$
for every index $j$. However, $\sat_{\elec}(v^*, \bar w) > \sat_{\elec}(v^*, \bar w^*)$ and \eqref{eq:1} imply that
$\sat_{\elec^*}(v^*, \bar w) > \sat_{\elec^*}(v^*, \bar w^*)$. This,
together with \eqref{eq:2}, implies that $|N^{\elec^*}_j(\bar w^*)| \neq |N^{\elec^*}_j(\bar w)|$
for some index $j$.

Consequently, we know that $\bar w^* \succ \bar w$
according to the leximin order in $\elec^*$. In other words, there is a $j$
such that $|N^{\elec^*}_j(\bar w^*)| < |N^{\elec^*}_j(\bar w)|$ and 
$|N^{\elec^*}_\ell(\bar w^*)| = |N^{\elec^*}_\ell(\bar w)|$ for all $\ell < j$.
We claim that the deciding index $j$ cannot be smaller than $\sat_{\elec^*}(v^*, \bar w^*)$
as for all smaller indices $\ell < \sat_{\elec^*}(v^*, \bar w^*)$ it follows from
$\sat_{\elec}(v^*, \bar w) > \sat_{\elec}(v^*, \bar w^*)$ that
$v^*$ is not in $N^{\elec}_\ell(\bar w^*)$, $N^{\elec^*}_\ell(\bar w^*)$, $N^{\elec}_\ell(\bar w)$ 
and $N^{\elec^*}_\ell(\bar w)$.
Therefore, it follows from \eqref{eq:2} that $|N^{\elec}_\ell(\bar w^*)| =
|N^{\elec^*}_\ell(\bar w^*)|$ and
$|N^{\elec}_\ell(\bar w)| = |N^{\elec^*}_\ell(\bar w)|$. Hence,
$j < \sat_{\elec^*}(v^*, \bar w^*)$ would be a contradiction to the assumption that
$\bar w$ is the leximin outcome of $\elec$ and hence leximin preferred to $\bar w^*$ in $\elec$. 

Therefore, we know that $|N^{\elec^*}_\ell(\bar w^*)| = |N^{\elec^*}_\ell(\bar w)|$ for all 
$\ell < \sat_{\elec^*}(v^*, \bar w^*) \leq j$ and 
\[|N^{\elec^*}_{\sat_{\elec^*}(v^*, \bar w^*)}(\bar w^*)| \leq
|N^{\elec^*}_{\sat_{\elec^*}(v^*, \bar w^*)}(\bar w)|.\]
It follows that also $|N^{\elec}_\ell(\bar w^*)| = |N^{\elec^*}_\ell(\bar w^*)|
= |N^{\elec^*}_\ell(\bar w)| = |N^{\elec}_\ell(\bar w)|$
for all $\ell < \sat_{\elec^*}(v^*, \bar w^*) \leq j$.
Finally, it follows from \eqref{eq:1} that $v^*$ is in
$N^{\elec^*}_{\sat_{\elec^*}(v^*, \bar w^*)}(\bar w^*)$ but not in
$N^{\elec}_{\sat_{\elec^*}(v^*, \bar w^*)}(\bar w^*)$.
Moreover, because we assumed $\sat_{\elec}(v^*, \bar w) > \sat_{\elec}(v^*, \bar w^*)$,
$v^*$ is neither in
$N^{\elec^*}_{\sat_{\elec^*}(v^*, \bar w^*)}(\bar w)$ nor in
$N^{\elec}_{\sat_{\elec^*}(v^*, \bar w^*)}(\bar w)$.
Therefore, we have
\[
|N^{\elec}_{\sat_{\elec^*}(v^*, \bar w^*)}(\bar w^*)| +1 =
|N^{\elec^*}_{\sat_{\elec^*}(v^*, \bar w^*)}(\bar w^*)| \leq
|N^{\elec^*}_{\sat_{\elec^*}(v^*, \bar w^*)}(\bar w)| =
|N^{\elec}_{\sat_{\elec^*}(v^*, \bar w^*)}(\bar w)|.
\]
However, that means that $\bar w^*$ is leximin preferred to $\bar w$ in $\elec$, which is
a contradiction to the assumption that $\bar w$ is the outcome of $\elec$. 
\end{proof}

\WDThiele*

\begin{proof}
Fix an opt-$f$-Thiele rule $\calR$ distinct from the utilitarian rule. We show hardness by a reduction from \textsc{CubicVertexCover}, a variant of \textsc{VertexCover} where every node has a degree of exactly three \cite{alimonti2000apx}. In the following, let $\ell$ be the smallest $\ell$ such that $f(\ell)>f(\ell+1)$ (such an $\ell$ must exist by virtue of $\calR$ not being the utilitarian rule).

Consider an instance $(G, k)$ of \textsc{CubicVertexCover}. Here, $k$ is a natural number and $G=(V,E)$ is an undirected graph with $n$ nodes and $m$ edges where every node has a degree of $3$. We assume w.l.o.g.\ that $k<m$.
We will construct an instance $(\elec, \ell+k, c_{d_1})$ of \ODprob{}, where $\elec=\langle N,\bar A, \bar C\rangle$ is an election with $(\ell+k)$ issues and $2m$ voters. Observe in particular that $\ell$ does not depend on $(G, k)$. 

We construct the instance as follows. We have one voter $v_e$ for every edge $e\in E$, plus $m$ extra dummy voters $\{d_1,\dots,d_m\}$. In the first $\ell-1$ issues, there are two candidates $c$ and $c'$, and all voters approve of both. In the next $k$ issues, there is one candidate $c_\eta$ for every node $\eta\in V$, plus one candidate $c_{d_i}$ for every dummy voter $d_i$. In all of these issues, every edge voter $v_e$ approves of the two candidates $c_\eta$ and $c_{\eta'}$ such that $e=\{\eta, \eta'\}$. Furthermore, every dummy candidate $d_i$ approves of only $c_{d_i}$. Finally, in the last issue, there is one candidate $c_v$ for every voter $v\in N$, and every such voter only approves of $c_v$.

To deal with ties, we assume that each issue $i$ is associated with a total ordering $\succ_i$ such that:
\begin{enumerate}
  \item If $i\in\{\ell,\dots,\ell+k-1\}$, then node-candidates are preferred over other candidates, and $c_{d_n} \succ_i \dots \succ_i c_{d_1}$;
  \item If $i=\ell+k$, then all candidates $c_{v_e}$ (with $e\in E$) are preferred over other candidates, and $c_{d_1} \succ_i \dots \succ_i c_{d_n}$;
\end{enumerate}

We compare outcomes $\bar w$ and $\bar w'$ with $\bar w=(w_1,\dots,w_{\ell+k})$ and $\bar w'=(w_1',\dots,w_{\ell+k}')$ lexicographically, that is $\bar w \succ \bar w'$ if there exists an index $j\in [\ell+k]$ such that $w_1=w_1'$, $\dots$, $w_{j-1}=w_{j-1}'$ and $w_{j} > w_{j}'$. Among the outcomes with maximal scores, we return the maximal outcome according to $\succ$.

We want to show that $(G, k)$ is a yes-instance if and only if $(\elec, \ell+k, c_{d_1})$ is. First, note that all voters have a satisfaction of at least $\ell-1$ (because of the first $\ell-1$ issues). Next, let us show the following, useful claim:

\begin{claim}
Let $\bar w$ be an outcome of the election $\elec$, let $\elec_{[-1]}$ be the election
that only differs from $\elec$ in that issue $\ell +k$ is missing and let $\bar w_{[-1]}$
be $\bar w$ restricted to $\elec_{[-1]}$. Then 
\[\thielescore_f(\bar w) =  \thielescore_f(\bar w_{[-1]}) + f(\ell).\] 
\end{claim}

\begin{claimproof}
First, note that at most one dummy voter can win in each issue in $\{\ell, \dots,\ell+k-1\}$. As there are $m$ dummy voters and $k<m$, at least one voter will win no issue in $\{\ell,\dots,\ell+k-1\}$. Thus, whatever outcome we fix for issue $1$ to $\ell +k-1$, there will be at least one voter with satisfaction $\ell-1$. Now, in issue $\ell+k$ every outcome increases the satisfaction of one voter by one. As $f(\ell) > f(\ell+1) \geq f(\ell^*)$ for $\ell^* \geq \ell +1$, it is always optimal to pick a candidate corresponding to a voter with satisfaction $\ell-1$ in $\elec_{[-1]}$.\end{claimproof}\\

Using this fact, we show that $c_{d_1}$ wins in the last issue if and only if
the candidates selected in issues $\ell$ to $\ell +k-1$ correspond to a vertex cover of $G$.

Let $\bar w$ be the winning outcome and assume that the winners of issue
$\ell$ to $\ell +k-1$ correspond to a vertex cover of $G$, i.e., that
$V[\bar w] := \{\eta \in V \mid \exists i \leq k-1 \text{ s.t.\ } w_{\ell + i} = c_{\eta}\}$
is a vertex cover. Then, clearly, every edge voter $v_e$ has a satisfaction of at least $\ell$
in $\elec_{[-1]}$. As we observed above, this means no candidate $c_{v_e}$
can be winning in the last issue. Moreover, $c_{d_i}$ does not win in issues $\ell$
to $\ell +k-1$: choosing this candidate cannot give a higher score than choosing another
dummy candidate $c_{d_i}$ (with $i>1$) for a voter $d_i$ with satisfaction $\ell -1$,
as every such candidate is approved by exactly one dummy voter.
Moreover, the outcome where we replace $c_{d_1}$ by $c_{d_i}$ will always be preferred by our 
tiebreaking. Therefore, $d_1$ has satisfaction $\ell -1$ in $\elec_{[-1]}$.
It follows that $c_{d_1}$ must win in the last issue: Selecting a candidate $c_{v_e}$ leads
to a worse score, and selecting a candidate $c_{d_i}$ for $i > 1$ does not lead to a higher score 
but to an outcome that is less preferred by the tiebreaking.

Now assume that the winners of issues $\ell$ to $\ell +k-1$ do not correspond to
a vertex cover of $G$. Then there is one edge voter $v_e$ with satisfaction $\ell-1$ 
in $\elec_{[-1]}$. Hence, selecting $c_{d_1}$ in the last issue cannot lead to a higher score
than selecting $c_{v_e}$, and the latter is preferred lexicographically.
Thus, $c_{d_1}$ does not win in the last issue.

It remains to show that if there is a vertex cover of $G$ with at most $k$ vertices, then 
$V[\bar w]$ is a vertex cover for the winning outcome $\bar w$. 
Assume a vertex cover of $G$ of size at most $k$ exists. Further, assume for the sake of
a contradiction that $V[\bar w]$ is not a vertex cover. 

If $V[\bar w]$ is not a vertex cover, then
for every issue $i \in \{\ell, \dots, \ell +k-1\}$ the winner must be a vertex candidate
$c_{\eta}$.
Assume otherwise that there is an issue $i \in \{\ell, \dots, \ell +k-1\}$
where a candidate $c_{d_j}$ wins. We observe that because $\bar w$ does not correspond
to a vertex cover, there is at least one voter $v_e$ that has satisfaction $\ell-1$ in $\elec_{[-1]}$.
Then $w_i$ contributes at most $f(\ell)$ to the score of $\bar w$.
If the winner in the last round is not $c_{v_e}$ we can replace 
$c_{d_j}$ by $c_{\eta}$ which would contribute at least $f(\ell)$ to the score and be preferable 
by tiebreaking. If the winner in the last round is $c_{v_e}$, then we can replace $c_{d_j}$
in issue $i$ by $c_{\eta}$ and $c_{v_e}$ in the last issue by any other candidate 
corresponding to a voter with satisfaction $\ell -1$ without the last issue. The score
of the resulting outcome is at least as good as the score $\bar w$ and it is preferred by 
tiebreaking.


Now, let $E_{\bar w}^\ell := \{v_e \in N \mid e\in E \land \sat_{\elec_{[-1]}}(v_e,\bar w_{[-1]})\geq\ell\}$
be the set of edge-voters with satisfaction at least $\ell$ in $\bar w_{[-1]}$.
We define $E_{\bar w}^{\ell+1}\subseteq E_{\bar w}^\ell$ analogously.
Now, we observe that
\begin{multline*}
  \sum_{v_e\in E_{\bar w}^{\ell}}\left(\sat_{\elec_{[-1]}}(v_e,\bar w_{[-1]})-(\ell-1)\right) = \\
  |E_{\bar w}^\ell|+\sum_{v_e\in E_{\bar w}^{\ell+1}}\left(\sat_{\elec_{[-1]}}(v_e,\bar w_{[-1]})-\ell\right) = 3|V[\bar w]| = 3k
\end{multline*}
because the set contains $k$ nodes and each node has degree 3.

Finally, any outcome $\bar w^*$ on $\elec_{[-1]}$ in which on every issue
in $\{\ell, \dots, \ell+k-1\}$ a vertex candidate $c_{\eta}$ wins has the following Thiele score:
\begin{equation*}
\sum_{v\in V} \left(\sum_{i=1}^{\ell-1} f(i)\right) + |E_{\bar w}^\ell| f(\ell) + 
\underbrace{\sum_{v_e\in E_{\bar w}^{\ell+1}}\sum_{i=\ell+1}^{\sat_{\elec_{[-1]}}(v_e,\bar w_{[-1]})} f(i)}_{3k-|E_{\bar w}^\ell|\text{ addends}}.
\end{equation*}
As $f(\ell) > f(\ell +1)$, this function is maximized by maximizing $|E_{\bar w}^\ell|$.
As a vertex cover exists, we know that we can reach $|E_{\bar w}^\ell|= m$. Hence, the outcome
maximizing the Thiele score must do so, which means that it must be a vertex cover.
\end{proof}

\WDOWA*

\begin{proof}

Fix a rule $\calR$ satisfying the condition of the theorem. We show hardness by a reduction from \textsc{CubicVertexCover}. Consider an instance $(G, k)$ of this problem. Here, $G=(V,E)$ is a graph with $n$ nodes and $m$ edges where each node has a degree of exactly three, and $k\in\mathbb{N}$. We assume w.l.o.g. that $k<n$.
We construct an instance of \ODprob{} with $(k+1)$ issues and $3m$ voters.
As $\alpha_1>\alpha_{3m}$, there are two cases:

\begin{enumerate}
  \item There is a $p\in[2m]$ such that $\alpha_p>\alpha_{p+1}$, or
  \item There is a $p>2m$ with $p<3m$ such that $\alpha_1=\alpha_p>\alpha_{p+1}$.
\end{enumerate}

In the following, we treat these cases separately.

\paragraph{First case.}

We construct an instance $(\elec, k+1, c_{d_1})$ of \ODprob{}. Here, we have one voter $v_e$
for each edge $e\in E$, and two sets of dummy voters, $\{d_1,\dots,d_p\}$ and
$\{w_1,\dots,w_{2m-p}\}$. In the first $k$ issues, there is one candidate $c_\eta$ for each node $\eta\in V$,
plus one dummy candidate $c_{d_i}$ for each dummy voter $d_i$. Here, each edge-voter $v_e$ approves of the two node-candidates $v_\eta$ and $v_{\eta'}$
such that $e=\{\eta,\eta'\}$. Moreover, each dummy voter $d_i$ approves only of dummy candidate $c_{d_i}$, and all dummy candidates $w_i$ approve of all candidates.
In the last issue, there is one candidate $c_v$ for all voters $v\in N\setminus\{w_i\}_{i\in[2m-p]}$, and every such $v$ only approves of $c_v$. Finally, here, all voters in $\{w_i\}_{i\in[2m-p]}$ approve of all candidates.

We use the following tiebreaking mechanism, which is essentially identical to the one used in the proof of Theorem~\ref{thm:WDThiele}.
We assume that each issue $i$ is associated with a total ordering $\succ_i$ such that:
\begin{enumerate}
  \item If $i\in\{1,\dots,k\}$, then node-candidates are preferred over other candidates, and $c_{d_n} \succ_i \dots \succ_i c_{d_1}$;
  \item If $i=k+1$, then all candidates $c_{v_e}$ (with $e\in E$) are preferred over other candidates, and $c_{d_1} \succ_i \dots \succ_i c_{d_n}$;
\end{enumerate}

We compare outcomes $\bar w$ and $\bar w'$ with $\bar w=(w_1,\dots,w_{\ell+k})$ and $\bar w'=(w_1',\dots,w_{\ell+k}')$ lexicographically, that is $\bar w \succ \bar w'$ if there exists an index $j\in [\ell+k]$ such that $w_1=w_1'$, $\dots$, $w_{j-1}=w_{j-1}'$ and $w_{j} > w_{j}'$. Among the outcomes with maximal scores, we return the maximal outcome according to $\succ$.

We want to show that $(G, k)$ is a yes-instance if and only if $(\elec, k+1, c_{d_1})$ is. 
Suppose that there exists a vertex cover for $G$ with size at most $k$. First, we show that all edge-voters must win at least one issue in $[k]$. Then, we show that, if all edge-voters win at least one issue in $[k]$, then $c_{d_1}$ wins in issue $k+1$.

Let us show that all edge-voters win at least one issue in $[k]$. Let $\bar w=\calR(\elec)$, and assume towards a contradiction that some edge-voter $v_e$ never win any issue in $[k]$. Assume that some dummy candidate $c_{d_j}$ wins some issue $i\in[k+1]$. If $v_e$ never wins at all, we can make $c_\eta$ (for some $\eta\in e$) win in issue $i$ and obtain an outcome that has a greater or equal score (as $\alpha$ is nonincreasing) and is preferred lexicographically. If $v_e$ wins in issue $k+1$, we can make a similar argument by making $c_\eta$ win issue $i$ and $c_{d_j}$ win issue $k+1$. Hence, in the following, we assume w.l.o.g. that no dummy candidate wins in $\bar w$.

Next, let $\bar w^*$ be some outcome where all edge-voters win at least one issue in $[k]$ (which is possible, because $(G,k)$ is a yes-instance), no dummy voter wins any issue in $[k]$ while each node-candidate is chosen at most once (which is possible, since $k<n$), and some dummy voter wins in issue $k+1$. We will show that $\bar w^*$ leads to a strictly higher score than $\bar w$.

First, observe that, in both outcomes, since each time a node-candidate is selected exactly three edge-voters approve of it, the total satisfaction (ignoring the dummy voters $w_i$) will be $3k+1$ (the extra $1$ comes from the last issue). Next, let $s=(s_1,\dots,s_{m+p})$ and $s^*=(s^*_1,\dots,s^*_{m+p})$ be the sorted satisfaction vectors (ignoring the dummy voters in $\{w_i\}_{i\in[2m-p]}$) when $\bar w$ and $\bar w^*$ are the outcomes, respectively. Furthermore, let $i_1$, $i_2$ and $i_3$ be the three smallest indexes such that $s_{i_1}=1$, $s_{i_2}=2$, and $s_{i_3}=3$ hold. If any of these indexes is undefined, we set it to $m+p+1$. Moreover, we define $i_1^*$ and $i_2^*$ analogously for $s^*$ (observe that no voter here can have satisfaction greater than $2$). Clearly $i_1^*=p<i_1$, as $\bar w$ satisfies once at most $3m-p$ voters, whereas $\bar w^*$ satisfies once exactly $3m-p+1$ voters. We get that:
\begin{multline*}
    \owascore_\alpha(\bar w)<\owascore_\alpha(\bar w^*) \implies \alpha\cdot s < \alpha\cdot s^*\implies\\
    \sum_{i=i_1}^{m+p}\alpha_i+\sum_{i=i_2}^{m+p}\alpha_i+\sum_{i=i_3}^{m+p}(s_i-2)\alpha_i < \sum_{i=p}^{m+p}\alpha_i+\sum_{i=i^*_2}^{m+p}\alpha_i\implies\\
    \sum_{i=i_2}^{m+p}\alpha_i+\sum_{i=i_3}^{m+p}\sum_{j=3}^{s_i}\alpha_i < \sum_{i=p}^{i_1-1}\alpha_i+\sum_{i=i^*_2}^{m+p}\alpha_i.
\end{multline*}
If $i_2^*<i_2$, we obtain:
\begin{equation*}
  \sum_{i=i_3}^{m+p}\sum_{j=3}^{s_i}\alpha_i < \sum_{i=p}^{i_1-1}\alpha_i+\sum_{i=i^*_2}^{i_2-1}\alpha_i.\\
\end{equation*}
Since $i_1-1\leq i_2-1<i_3$, every addend occurring on the left-hand side is smaller or equal to every addend occurring on the right side. In particular, $\alpha_p$ is positive and strictly greater than all the addends on the left side (as $p<i_3$). Furthermore, since $\sum_i s_i=\sum_i s_i^*=3k+1$, there is the same number of addends being summed on both sides. It follows that $\owascore_\alpha(\bar w)<\owascore_\alpha(\bar w^*)$. If, on the other hand, $i_2^*\geq i_2$, we obtain:
\begin{equation*}
  \sum_{i=i_2}^{i_2^*-1}\alpha_i+\sum_{i=i_3}^{m+p}\sum_{j=3}^{s_i}\alpha_i < \sum_{i=p}^{i_1-1}\alpha_i.\\
\end{equation*}
Since $i_1-1<i_2$ and $i_1-1<i_3$, by similar arguments as above, we conclude that $\owascore_\alpha(\bar w)<\owascore_\alpha(\bar w^*)$. But this is impossible, as we assumed that $\bar w$ is the outcome. We have finally reached the required contradiction: it cannot be that some edge-voter $v_e$ loses all issues in $[k]$.

Now, let us show that $c_{d_1}$ wins in $k+1$ if all edge-voters win at least once in issue in $[k]$. If voter $d_1$ never won an issue in $[k]$, then it means she has a satisfaction of $0$. Since all edge-voters and all the $w_i$ won at least once, there are at least $m+2m-p=3m-p$ voters with a satisfaction of at least $1$. Therefore, $d_1$
occupies a position within the first $p$ entries of the satisfaction vector, whereas all edge-voters occupy a position within the last $4m-p$ entries. Since $\alpha_p>\alpha_{p+1}$, in this case choosing in issue $k+1$ candidate $c_{d_1}$ will yield a greater score than choosing a voter-candidate $c_{v_e}$ for any edge $e\in E$. Finally, since $c_{d_1}$ dominates in the tiebreaking every other candidate $c_{d_j}$ in issue $k+1$, here we must choose $c_{d_1}$. On the other hand, suppose that $d_1$ wins at least one issue $i\in[k]$. Suppose -- towards a contradiction -- that $c_{d_1}$ is not selected in issue $k+1$. Let $c_v$ (for some voter $v\in N\setminus\{w_i\}_{i\in[2m-p]}$ distinct from $d_1$) be the candidate winning issue $k+1$. Observe that if we make $c_{d_1}$ win in issue $k+1$ and make some candidate approved by $v$ win in issue $i$, we would obtain a score that is higher or equal than before, and this would surely be preferred by tiebreaking: contradiction. We conclude that $c_{d_1}$ must win in the final issue.

Finally, suppose that there exists no vertex cover for $G$ with size at most $k$. Then, surely there is one edge-voter that never wins an issue in $[k]$ (otherwise, some vertex cover would exist). By tiebreaking, this edge-voter would decide the last issue, i.e., $c_{d_1}$ would not win.

\paragraph{Second case.}

Here, we can assume that $\alpha_1=\dots=\alpha_p=1>\alpha_{p+1}$. We construct an instance $(\elec, k+1, c)$ of \ODprob{}. Here, we have one voter $v_e$
for each edge $e\in E$, and three sets of dummy voters: $\{d_1,\dots,d_{p-2m+1}\}$,
$\{a_1,\dots,a_m\}$, and $\{w_1,\dots,w_{3m-p-1}\}$. In the first $k$ issues, there is one candidate $c_\eta$ for each node $\eta\in V$,
plus one dummy candidate $c_{d_i}$ for each dummy voter $d_i$. Here, each edge-voter $v_e$ (with $e=\{\eta,\eta'\}$) approves of every node-candidate $v_\eta$ where $\eta\not\in e$. Furthermore, each dummy voter $d_i$ approves only of dummy candidate $c_{d_i}$. Every other dummy candidate approves of all candidates.
In the last issue, there are two candidates $c$ and $c'$. All dummy candidates $d_i$ and $w_i$ approve of both, every edge-voters approves only of $c$, while every dummy voter $a_i$ approves only of $c'$.

We assume a tiebreaking mechanism almost identical to the one used in the proof of Theorem~\ref{thm:WDThiele}. However, here, in the last issue, $c$ loses against $c'$.

First, suppose that there exists a vertex cover for $G$ with size at most $k$. Since every dummy candidate $c_{d_i}$ is always approved only by one voter and $\alpha_1=\alpha_p=1$, it is easy to see that any outcome where at least one such dummy candidate wins in the first $k$ issues cannot have maximal score. Now, observe that, in total, the edge-voters will receive exactly a score of $k(m-3)$ for the first $k$ issues (as every time we select some node-candidate, $m-3$ voters approve of it). This does not depend on which node-candidates we select; thus, to determine the outcome with the greatest score, we can focus on the last issue. First, observe that if the node-candidates selected in the first $k$ issues correspond to a vertex cover, no edge-voter will have won more than $k-1$ issues within the first $k$ issues (as every edge-voter loses at least once). Now, focusing on the last issue, note that the last $4m-p-1$ positions of the satisfaction vector will be occupied by dummy voters in $\{a_1,\dots,a_m\}$, and $\{w_1,\dots,w_{3m-p-1}\}$ (as they all have a satisfaction of at least $k$). Thus, if $c$ wins in the last round, we get an extra score of $\sum_{i=p-2m+2}^{p-m+1}\alpha_i=m$; if $c'$ wins, we get a score of $\sum_{i=p-m+2}^{p+1}\alpha_i=(m-1)+\alpha_{p+1}$. As $\alpha_{p+1}<1$, here $c$ wins. Similarly, if the node-candidates selected in the first $k$ issues \emph{do not} correspond to a vertex cover, we would still get a score of $(m-1)+\alpha_{p+1}$ for $c'$ winning in the last round. So we have that $c$ wins in the last issue if a vertex cover of size $k$ exists.

Now, suppose that there exists no vertex cover for $G$ with size at most $k$. Consider any outcome $\bar w$. Then, surely there is one edge-voter that wins all issues in $[k]$ (otherwise, if every edge-voter loses at least once, then some vertex cover would exist). In issue $k+1$, ignoring the voters supporting both candidates, both $c$ and $c'$ have exactly $m$ voters supporting them, and in case either $c$ or $c'$ wins, the last $3m-p-1$ positions of the satisfaction vectors would be occupied by the dummy voters $w_i$ (that have all satisfaction $k+1$). If $c'$ wins, then we get an extra score of $\sum_{i=p-m+2}^{p+1}\alpha_i=(m-1)+\alpha_{p+1}$ (recall that all voters approving only of $c'$ have a satisfaction of at least $k$, if we ignore the last issue). If $c$ wins, we get at most the same score (as at least one voter has satisfaction $k$, she will occupy the $(p+1)$-position in the satisfaction vector). By tiebreaking, $c$ cannot win in $\bar w$.\\

This concludes the proof.\end{proof}

\RFThiele*

\begin{proof}
We show the hardness of \RFprob{} by a reduction from \textsc{CubicVertexCover}. Again, let $\ell$ be the 
smallest $\ell$ where $f(\ell)>f(\ell+1)$ holds, and consider an instance $(G, k)$ of \textsc{CubicVertexCover}. Here, $k\in[m-1]$, and $G=(V,E)$ is an undirected graph with $n$ nodes and $m$ edges where every node has degree of $3$.
We construct an instance $(\elec, \ell+k, c_{d_1}, d_2)$ of \RFprob{}, where $\elec=\langle N,\bar A, \bar C\rangle$ is an election with $(\ell+k)$ issues and $2m$ voters.
We use a construction similar to the one in the proof of Theorem~\ref{thm:WDThiele},
except for the fact that, on issue $\ell+k$, voter $d_2$ approves only of $c_{d_1}$.

First, suppose that $(G, k)$ is a yes-instance. By the same arguments used in the proof of Theorem~\ref{thm:WDThiele}, we know that $c_{d_1}$ wins in issue $\ell+k$ (the fact that now $d_2$ also approves of it is irrelevant). Moreover, if $d_2$ votes for $c_{d_2}$ in the last issue, we obtain the same election constructed in the proof of Theorem~\ref{thm:WDThiele}. We have already shown that here $c_{d_1}$ wins the final issue: hence, $d_2$ can free-ride.

Now, suppose that $(G, k)$ is a no-instance. Let us first show that $c_{d_1}$ is still selected for issue $\ell+k$. Clearly, at least one edge-voter does not win any issue in $\ell,\dots,\ell+k$ (otherwise, a vertex cover would exist). Towards a contradiction, suppose that in $\calR(\elec)$ some dummy candidate $c_{d_j}$ is winning some issue $i\in\{\ell,\dots,\ell+k-1\}$. Then, at least one edge-voter must have satisfaction $\ell-1$ (for if all edge-voters were to have satisfaction at least $\ell$, we could cover all but one edge with $k-1$ nodes). So let $v_e$ be some edge-voter with satisfaction $\ell-1$ in $\calR(\elec)$. If we select $c_\eta$ (for some $\eta\in e$) in $i$ instead of $c_{d_j}$, we would increase the total score by at least $f(\ell)$ (contributed by $v_e$) and decrease it by $f(\ell^*)$ for some $\ell^*\geq \ell$ (contributed by $d_j$). Since $f(\ell^*)\leq f(\ell)$, this new outcome cannot have a lower score than $\calR(\elec)$, and would be preferred to it by the tiebreaking. Contradiction: we conclude that no dummy voter $d_j$ can win in $\ell,\dots,\ell+k-1$. This implies, in particular, that neither $d_1$ nor $d_2$ win any issue in $\ell,\dots,\ell+k-1$. Therefore, selecting $c_{d_1}$ for issue $\ell+k$ contributes $2f(\ell)$ to the total score, whereas selecting any other candidate can contribute at most $f(\ell)$. Since $f(\ell)>f(\ell+1)\geq 0$, $c_{d_1}$ must win in the final issue.

It remains to show that $d_2$ cannot free-ride. Suppose that $d_2$ does not approve of $c_{d_1}$. Consider some edge-voter $v_e$ that never wins in $\ell,\dots,\ell+k-1$ (which, as argued above, must exist). Choosing $c_{v_e}$ in the final issue contributes at least $f(\ell)$ to the total score, whereas choosing $c_{d_1}$ can contribute at most $f(\ell)$. By tiebreaking, $c_{d_1}$ does not win in the final issue, that is, $d_2$ cannot free-ride.

To conclude, observe that the same construction can be used to show hardness for \genRFprob{}. Indeed, here $d_2$ only approves of $c_{d_1}$, and hence free-riding and generalized free-riding coincide.\end{proof}

\RFOWA*

\begin{proof}

We show the hardness of \RFprob{} by a reduction from \textsc{CubicVertexCover}. Consider an instance $(G, k)$ of this problem. Here, $G=(V,E)$ is a graph with $n$ nodes and $m$ edges where each node has a degree of exactly three, and $k\in\mathbb{N}$. By the condition of the theorem, we know there is an $\ell\geq 3m$ (polynomial in the size of $m$) such that $\alpha=(\alpha_1,\dots,\alpha_\ell)$ contains at least $3m$ non-zero entries and $\alpha_1>\alpha_\ell$.
We will construct an instance of \RFprob{} with $(k+1)$ issues and $\ell$ voters.
Since $\alpha_1>\alpha_\ell$ and $\alpha_{3m}>0$, there are two cases:

\begin{enumerate}
  \item There is a $p\in[2m]$ such that $\alpha_p>\alpha_{p+1}$ and $\alpha_{p+m}>0$, or
  \item There is a $p\in\{2m+1,\dots,\ell-1\}$ such that $\alpha_1=\alpha_p>\alpha_{p+1}$.
\end{enumerate}

We treat them separately.

\paragraph{First case.}

We construct an instance $(\elec, k+1, v_{e^*}, c_{d_1})$ of \RFprob{} (here, $e^*\in E$ is some edge, it does not matter which). The construction is similar to the one shown in the first case of the proof for Theorem~\ref{thm:WDOWA}. However, here, in issue $k+1$ voter $v_{e^*}$ approves only of $c_{d_1}$, and we have $\ell-m-p$ dummy voters $w_i$ instead of $3m-p$. The latter change makes no difference in our construction.

First, note that $(\elec, k+1, v_{e^*}, c_{d_1})$ is indeed a legal instance of \RFprob{}, as surely $c_{d_1}$ wins in issue $k+1$. If $(G, k)$ is a yes-instance then we have already shown that this candidate wins, and here it is only receiving increased support. If it is a no-instance, then $c_{d_1}$ will be supported by one voter that never won in the first $k$ issues (namely, $d_1$), as well as by $v_{e^*}$. Since $\alpha_{p+m}>0$ and since the edge-voters together with the dummy voters $d_i$ occupy at most the first $p+m$ positions of the satisfaction vector, $v_{e^*}$ will break the tie in favour of $c_{d_1}$.

Now, if $(G, k)$ is a yes-instance of \textsc{CubicVertexCover}, then $v_{e^*}$ can free-ride in the last issue: if she votes for her voter-candidate, then we have an election identical to the one constructed in the first case of the proof of Theorem~\ref{thm:WDOWA}, and we have already shown there that $c_{d_1}$ wins if $(G, k)$ has a vertex cover.

If $(G, k)$ is a no-instance, then there are two cases: either $v_{e^*}$ won in some issue in $[k]$ or not. If she did, there will be at least one voter $v_e$ (with $e\in E\setminus\{e^*\}$) that never did, whose voter-candidate will get at least the same score as $c_{d_1}$ (since $v_{e^*}$ does not approve of the latter when she free-rides): $c_{d_1}$ cannot win here. If she did not, there are again two cases: either $v_{e^*}$ approves of some dummy candidate $c_{d_i}$ (with $i>1$) or of some $c_{v_e}$ (where $e\in E$). In the first case, $c_{d_i}$ would get a strictly higher score than $c_{d_1}$, while in the second case $c_{v_e}$ would get at least the same score as $c_{d_1}$ (and win by tiebreaking). In all cases, $c_{d_1}$ loses: no free-riding is possible.
\paragraph{Second case.}

We construct another instance $(\elec, k+1, a_1, c)$ of the problem \RFprob{}. The construction is similar to the one shown in the second case of the proof
for Theorem~\ref{thm:WDOWA}. However, here, in issue $k+1$ voter $a_1$ approves only of $c$, and we have $\ell-p-1$ dummy voters $w_i$ instead of $3m-p-1$. The latter change makes no difference in our construction.

First, note that $(\elec, k+1, a_1, c)$ is a legal instance of \RFprob{}. Consider how the election is constructed, and recall that no dummy voter $d_i$ can ever win here. In the last issue, $c'$ receives the support of $m-1$ voters, whereas $c$ receives the support of $m+1$ voters. Since the last $\ell-p-1$ positions are occupied by voters $w_i$ (who approve of all candidates and have a satisfaction of $k+1$), and since $\alpha_1=\alpha_p=1$, $c$ wins here.

If $a_1$ free-rides in $k+1$, she must vote only for $c'$. Here, we obtain a construction identical to the one shown in the second case of the proof Theorem~\ref{thm:WDOWA}, and we have already shown there $c$ wins if and only if $(G, k)$ is a yes-instance.\\

Finally, observe that, in both cases, the same construction can be used to show hardness for \genRFprob{}. Indeed, here the manipulator only approves of one alternative, and hence free-riding and generalized free-riding coincide. This concludes the proof.\end{proof}

\RFOWAzero*

\begin{proof}

We show the hardness of \RFprob{} by a reduction from \textsc{VertexCover} \cite{garey1979computers}. Consider an instance of this problem, $(G, k)$, where $G$ has $n$ nodes and $m$ edges. By the condition of the theorem, we know there is an $\ell>m$ (polynomial in the size of $m$) such that $\alpha=(\alpha_1,\dots,\alpha_\ell)$ contains at least $m$ zeros and at least one non-zero value.
We will construct an instance $(\elec, k+1, d_1, c_{d_1})$ of \RFprob{} with $\ell$ voters and $k+1$ issues. Here, let $p$ be the unique value such that $\alpha_{p}>\alpha_{p+1}=0$.

In $\elec$, there is one voter $v_e$ for each edge $e\in E$, $p$ dummy voters $d_1,\dots,d_{p}$, and $\ell-m-p$ dummy voters $w_1,\dots,w_{\ell-m-p}$. In all issues, all voters $w_i$ approve of all candidates (so they always have satisfaction $k+1$, and occupy the last $\ell-m-p$ positions of the satisfaction vector). In the first $k$ issues, there is one candidate $c_\eta$ for each node $\eta\in V$, plus one candidate $c_{d_i}$ for every dummy voter $d_i$. Here, each edge-voter $v_e$ approves of the two node-candidates $v_\eta$ and $v_{\eta'}$ such that $e=\{\eta,\eta'\}$, while every $d_i$ approves of $c_{d_i}$. In the last issue, there is one candidate $c$, plus one candidate $c_v$ for all voters $v\in N$, and any such $v$ approves only of $c_v$. In the case of ties, we assume that in the last issue $c_{d_1}$ dominates all candidates and that $c_{d_1}$ is dominated by all other candidates in all other issues.

We will show that $(\elec, k+1, d_1, c_{d_1})$ is a legal instance of \RFprob{} (i.e., that $c_{d_1}$ wins in issue $k+1$). Furthermore, we will show that $(G, k)$ is a yes-instance if and only if $(\elec, k+1, d_1, c_{d_1})$ is a no-instance.

Suppose $(G, k)$ is a no-instance of \textsc{VertexCover}. We show that $c_{d_1}$ wins in issue $k+1$ and that $d_1$ can free-ride here. First, observe that if any dummy candidate $c_{d_i}$ wins in issue $k+1$, then the total satisfaction will be $0$ (as we cannot give a satisfaction of at least $1$ to all edge-voters in the other $k$ issues, since $(G,k)$ is a no-instance). On the other hand, if any candidate $c_{v_e}$ (for some edge $e$ wins), then the total satisfaction will still be $0$: otherwise, that would mean that we could cover the remaining edges in $E\setminus\{e\}$ with $k-p$ nodes, but that is impossible (otherwise, we could cover all edges with $k$ nodes). By tiebreaking, $c_{d_1}$ wins in the final issue. Observe that if $d_1$ votes for $c$, we obtain the same effects: $d_1$ can free-ride.

Suppose $(G, k)$ is a yes-instance of \textsc{VertexCover}. We show that $c_{d_1}$ wins in issue $k+1$, but $d_1$ cannot free-ride here. Clearly, at least $m+1$ voters (ignoring the dummy voters $w_i$) need to win here: was not this the case, the total score would be zero, but the outcome where all edge-voters win in the first $k$ issues and some dummy candidate wins in the last issue has a greater satisfaction (regardless of whether $d_1$ free-rides or not). If $d_1$ never wins in the first $k$ issues, then it is clear that she must win in issue $k+1$: satisfying in this issue some voter that has never won surely will maximize the score (since this voter will be within the first $p$ entries of the vector), and $c_{d_1}$ is preferred in the tiebreaking mechanism. If, on the other hand, $c_{d_1}$ wins in some issue $i\in[k]$, but loses to some candidate $c_v$ (with $v\neq d_1$) in issue $k+1$, then we could obtain an outcome with a score greater or equal by making $c_{d_1}$ win in issue $k+1$, and making some candidate approved by $v$ win in issue $i$. Since this would be preferred in the tiebreaking, $c_{d_1}$ wins in $k+1$. Now, suppose that $d_1$ does not approve of $c_{d_1}$ in issue $k+1$ (i.e., she attempts to free-ride). If $c_{d_1}$ never wins in any issue in $[k]$, then clearly it cannot win in $k+1$: picking some candidate that now $d_1$ approves for in the last issue would give a greater score. If, on the other hand, $c_{d_1}$ wins in some issue in $i\in[k]$ and also in $k+1$, we can obtain an outcome with a greater or equal score (and preferred in the tiebreaking) by making some node-candidate win in issue $i$ and some candidate approved by $d_1$ in issue $k+1$. Therefore, $d_1$ cannot free-ride here.

Finally, observe that the same construction can be used to show hardness for \genRFprob{}. Indeed, here $d_1$ only approves of $c_{d_1}$, and hence free-riding and generalized free-riding coincide. This concludes the proof.\end{proof}

\end{document}